\newcommand\vldbdoi{XX.XX/XXX.XX}
\newcommand\vldbpages{XXX-XXX}
\newcommand\vldbvolume{14}
\newcommand\vldbissue{1}
\newcommand\vldbyear{2020}
\newcommand\vldbauthors{\authors}
\newcommand\vldbtitle{\shorttitle} 
\newcommand\vldbavailabilityurl{https://github.com/mshubhankar/DP-DataGeneration-MissingData}
\newcommand\vldbpagestyle{plain}
\newcommand{\revision}[1]{{{#1}}}
\newcommand{\stitle}[1]{\smallskip \noindent{\bf #1}}
\newcommand{\eat}[1]{}
\newcommand{\groundtruth}{\ensuremath{\bar{D}}}
\newcommand{\incompletedata}{\ensuremath{D}}
\newcommand{\syntheticdata}{\ensuremath{D^*}}
\DeclareMathOperator*{\E}{\mathbb{E}}
\newcommand{\attrset}{\mathcal{A}}
\newtheorem{problem}{Problem}
\newcommand{\squishlist}{
	\begin{list}{$\bullet$}
		{
			\setlength{\itemsep}{0pt}
			\setlength{\parsep}{3pt}
			\setlength{\topsep}{3pt}
			\setlength{\partopsep}{0pt}
			\setlength{\leftmargin}{1.5em}
			\setlength{\labelwidth}{1em}
			\setlength{\labelsep}{0.5em} } }
	\newcommand{\squishend}{
\end{list}  }
\newif\ifpaper
\begin{document}

\title{Differentially Private Data Generation with Missing Data}

%%
%% The "author" command and its associated commands are used to define the authors and their affiliations.
\author{Shubhankar Mohapatra, Jianqiao Zong, Florian Kerschbaum, Xi He}
\affiliation{%
  \institution{University of Waterloo}
  % \country{Canada}
  }
\email{shubhankar.mohapatra, jianqiao.zong, florian.kerschbaum, xi.he@uwaterloo.ca}

%%
%% The abstract is a short summary of the work to be presented in the
%% article.
\begin{abstract}

Despite several works that succeed in generating synthetic data with differential privacy (DP) guarantees, they are inadequate for generating high-quality synthetic data when the input data has missing values. In this work, we formalize the problems of DP synthetic data with missing values and propose three effective adaptive strategies that significantly improve the utility of the synthetic data on four real-world datasets with different types and levels of missing data and privacy requirements. We also identify the relationship between privacy impact for the complete ground truth data and incomplete data for these DP synthetic data generation algorithms. We model the missing mechanisms as a sampling process to obtain tighter upper bounds for the privacy guarantees to the ground truth data. Overall, this study contributes to a better understanding of the challenges and opportunities for using private synthetic data generation algorithms in the presence of missing data.
\end{abstract}

\maketitle

\ifpaper
%%% do not modify the following VLDB block %%
%%% VLDB block start %%%
\pagestyle{\vldbpagestyle}
\begingroup\small\noindent\raggedright\textbf{PVLDB Reference Format:}\\
\vldbauthors. \vldbtitle. PVLDB, \vldbvolume(\vldbissue): \vldbpages, \vldbyear.\\
\href{https://doi.org/\vldbdoi}{doi:\vldbdoi}
\endgroup
\begingroup
\renewcommand\thefootnote{}\footnote{\noindent
This work is licensed under the Creative Commons BY-NC-ND 4.0 International License. Visit \url{https://creativecommons.org/licenses/by-nc-nd/4.0/} to view a copy of this license. For any use beyond those covered by this license, obtain permission by emailing \href{mailto:info@vldb.org}{info@vldb.org}. Copyright is held by the owner/author(s). Publication rights licensed to the VLDB Endowment. \\
\raggedright Proceedings of the VLDB Endowment, Vol. \vldbvolume, No. \vldbissue\ %
ISSN 2150-8097. \\
\href{https://doi.org/\vldbdoi}{doi:\vldbdoi} \\
}\addtocounter{footnote}{-1}\endgroup
%%% VLDB block end %%%

%%% do not modify the following VLDB block %%
%%% VLDB block start %%%
\ifdefempty{\vldbavailabilityurl}{}{
\vspace{.3cm}
\begingroup\small\noindent\raggedright\textbf{PVLDB Artifact Availability:}\\
The source code, data, and/or other artifacts have been made available at \url{\vldbavailabilityurl}.
\endgroup
}
\else
\fi
%%% VLDB block end %%%

\section{Introduction}\label{sec:intro}

Our world as we see it today revolves around private data regarding %This data comes from 
our medical, financial, and social information. It is sometimes imperative to %process and 
query such data for research and advancement of science~\cite{kass2003use, andreou2013should}. Many 
%leading 
industries also use statistics from private data to improve their products and user experience. However, reckless data sharing for data-driven applications and research causes great privacy concerns~\cite{user_concern, ibm_report} and penalties~\cite{GDPR}.  As a response, differential privacy (DP)~\cite{DBLP:conf/icalp/Dwork06} has emerged as a standard data privacy guarantee which has now been adopted by government agencies~\cite{DBLP:conf/kdd/Abowd18, Hawes2020Implementing} and companies~\cite{Erlingsson14Rappor, Greenberg16Apples, DBLP:journals/pvldb/JohnsonNS18}. Informally, DP guarantees that the output distribution of an algorithm is similar with or without a particular individual in the dataset. %For most practical purposes, 
A privacy budget is set to limit the total privacy loss and each query (e.g., releasing  statistics~\cite{DBLP:conf/kdd/Abowd18, onthemap}, building prediction models~\cite{DBLP:conf/iclr/PapernotSMRTE18, DBLP:conf/ccs/AbadiCGMMT016}, and answering SQL queries~\cite{DBLP:conf/sigmod/0002HIM19, DBLP:journals/pvldb/JohnsonNS18, DBLP:journals/pvldb/KotsogiannisTHF19, DBLP:conf/sigmod/McSherry09}) consumes part of the privacy budget, and once that budget is used up, no more queries can be answered directly. An alternative way is to generate a synthetic dataset using the privacy budget. The synthetic dataset, once generated, can be made public, and the analyst can use it for any number of downstream tasks %such as SQL queries or training ML models
~\cite{DBLP:journals/corr/abs-1812-02274, DBLP:conf/kdd/ChenXZX15, DBLP:conf/iclr/JordonYS19a, DBLP:conf/sigmod/ZhangCPSX14, DBLP:conf/pods/BarakCDKMT07}.

Despite a number of work~\cite{DBLP:conf/sigmod/ZhangCPSX14, ge2021kamino, phan2016differential, DBLP:conf/iclr/JordonYS19a} that succeed in generating synthetic data with DP guarantees, they only look at a simple scenario where the input data has no missing values. Several prior studies~\cite{little2019statistical,getz2023performance} have reported on the prevalence of missing data in various fields. 

For instance, a study of 9 publicly available healthcare datasets commonly used in machine learning research found that the proportion of missing values ranged from 0.2\% to 78.6\%~\cite{getz2023performance}. The presence of missing data %in real world datasets 
can be attributed to multiple reasons, such as 
human errors~\cite{agarwal2021causal} and privacy regulations like GDPR~\cite{GDPR} which allow people the ``right to forget'' where one may ask their data to be deleted completely~\cite{politou2018forgetting}.

In our work, we ask how missing data will affect the quality of the synthetic data generated by mechanisms that offer DP guarantees. Our preliminary study shows that existing DP mechanisms have 4\%-18.5\% decay in the F1-score of downstream ML tasks on the synthetic dataset generated from a %ground truth 
dataset with 10\% missing values as compared to that when generated from the complete dataset. The decay varies depending on the types of missing mechanisms and the types of data generation processes.

\revision{In our work, we formally define the research problem of generating synthetic data for sensitive data with missing values using DP. We consider a missing mechanism that takes complete ground truth data and outputs data with missing values. Under this setup, we can offer DP to either \emph{the incomplete data} or \emph{the ground truth data}. For each privacy guarantee, we study how to handle missing data in the synthetic data generation process.}
\revision{While it is desired to have a unified comprehensive approach to solve the problem, we identify multiple challenges in achieving such a solution. First, there is no straightforward winning algorithm for DP synthetic data generation even without missing data as demonstrated by prior benchmarking work~\cite{tao2021benchmarking}. Second, the design space for a solution is huge for dealing with missing data. We explore several techniques, including the vanilla approach that uses complete rows only~\cite{nakagawa2008missing}, common imputation techniques (e.g., statistical methods, machine learning methods)~\cite{little2019statistical, rubin1976inference, nordholt1998imputation, lakshminarayan1996imputation}, and other differentially private imputation methods~\cite{DBLP:journals/corr/abs-2206-15063,ge2021kamino}. We show that all these methods have their limitations, such as discarding too many rows or incurring high privacy costs, leading to poor-quality synthetic data. Therefore in our work we present a comprehensive list of possible approaches to solve the problem including two vanilla approaches and three adaptive approaches, as a contribution to understanding the design space for this new problem. }

In addition, we show the relationship between the privacy guarantee for these DP mechanisms that they offer for the incomplete data and that for the ground truth data. To do so, we model the missing mechanism as a sampling process and obtain a tighter upper bound for the privacy loss to the ground truth data via sampling amplification techniques~\cite{balle2018privacy, steinke2022composition}. Unlike prior work for sampling amplification that considers a random subset, we make use of the randomness due to missing values to amplify the privacy for ground truth data. The major contributions of our work are as follows:

\squishlist
\item We are the first to formalize and study the problems of DP synthetic data with missing data. Our results show that existing algorithms have a decrease of 5-23\% in utility with 
$\leq$ 5\% missing values and 
     a decrease of 10-190\% with 
     $\leq$ 20\% missing values. 

     \item We develop three novel adaptive approaches, each tailored to an existing category of DP mechanisms, that seamlessly combine dealing with missing data along with the learning process. Our evaluation shows that they improve the utility of the synthetic datasets by up to 15-72\%. These simple yet effective approaches sometimes even achieve the same utility as the synthetic data generated from the no-missing ground truth data.
     \item We differentiate the privacy guarantees for both incomplete and ground truth data. Our analysis develops sufficient conditions that when satisfied, the algorithms for the incomplete data can be used to achieve privacy for the ground truth data. 
     \item We are the first to apply amplification due to missing mechanisms and tighten the privacy bound for ground truth data. The amplified ground truth privacy is 0.1-0.65x the privacy achieved for the incomplete data with 10-50\% missing values.
\squishend 
\section{Preliminaries}\label{sec:preliminaries}

We consider a database relation $R = \{A_1, \cdots, A_k\}$ with $k$ attributes, and a database instance $D$ consisting of $n$ rows. 
We use $D_i$ to refer to the $i$th row of $D$, and $D_{ij}$ to refer to the $j$th attribute of row $D_i$. We also use $S_{:i}$ to denote all elements from $1$ to $i$ in a sequence $S$.   

\subsection{Missing Data}\label{sec:preliminaries:missing}

For missing data, we define a missingness indicator matrix $M=[\ldots,m_{ij},\ldots]$ of size $n\times k$, where $i \in [1,n] , j \in [1, k]$ and, shorthand $m_i$ to point to $i^{th}$ row of $M$. Each cell of $M$ has one-to-one relation with $D$ such that, $m_{ij} = 1$ if $D_{ij}$ is missing and $m_{ij} = 0$ otherwise.

Missing data is classified into different types using missing mechanisms. A missing mechanism $M_\phi: \mathcal{D} \rightarrow \mathcal{D}$ takes as input the ground truth dataset $\bar{D}$ and outputs an incomplete dataset $D$. It is parameterized by $\Phi$, a set of probabilities, which refers to the set of probabilities that control the unknown missing data process. Three missing types can be defined using $\Phi$ and the conditional distribution of missing indicator $m_i$ given the dataset $D_i$
%, $\Pr[m_i|D_i, \Phi]$
~\cite{little2019statistical, rubin1976inference}.

\stitle{Missing completely at random (MCAR)}  assumes the probability of missingness is completely independent of the data. Under MCAR, any two rows of the dataset, regardless of their values, for the same attribute have the same probability of having a missing value. Hence, the parameter set $\Phi$ consists of $\{\phi_j |j\in [1, k]$\}, where $\phi_j$ is the probability of any row having a missing value for the $j$th column.

For  $j\in [1,k]$ and $i\in [1,n]$,
$         \Pr[m_{ij}|D] = \Pr[m_{ij}]= \phi_j   $.

Hence, the probability of a row having no missing values is  $\prod_{j=1}^k(1-\phi_j)$.

\stitle{Missing at random (MAR)} captures the scenario when the probability of missingness is independent of the missing values given the observed data. In other words, under MAR how likely a value is to be missing can be estimated based on the non-missing data. Consider examples, 1) Young people have missing IQ (because they haven't taken an IQ test yet), and MAR models the same probability of missing IQ attribute for rows of the same age, regardless of their IQ values; 2) Businessmen are less likely to share their income, and MAR models the same probability for income values for rows that have an occupation as `Business.'  MAR, therefore, is parameterized by a set of conditional probabilities $\Phi$ where each $\phi \in |\Phi|$ maps relationships between observed and missing values in the dataset. 
\revision{For $\phi_x\in \Phi$ that models the $j$th column's missingness, and for $i\in [1,n]$,
$\Pr[m_{ij}|D_{(0)}] = 
\Pr[m_{ij}|D_{(0)},D_{ij}] = \phi_x$,
    where $D_{(0)}$ refers to the observed attributes in the dataset.}

\stitle{Missing not at random (MNAR)}  captures the scenario when given all the observed information, the probability of missingness depends on \revision{any other unobserved missing values in the dataset}. Consider examples, \revision{1) Students with missing attendance also have missing scores, and the probability of the missing scores depending on missing attendance} 2) People who smoke don't want to mention they smoke. Here MNAR models the probability of missing smokers based on the attribute of smoking. With MNAR missingness, $\Phi$ consists of a set of conditional probabilities that map the probability of an attribute to be missing given its own value. 

Datasets often contain various missing data types, but identifying them without domain knowledge remains challenging~\cite{little1988test, little1994class}. While statistical tests exist for MCAR cases (e.g., Little's MCAR test, pattern mixture models)~\cite{little1988test, little1994class}, identifying MAR and MNAR remains unsolved due to complex interactions between observed and unobserved variables~\cite{van2018flexible}. Despite binary indicator tests based on modeling for MAR, these lack conclusiveness and are sensitive to analysis assumptions. For private datasets, these tests also need to be computed privately by using a portion of the privacy budget.

\subsection{Differential Privacy}\label{sec:preliminaries:dp}

Differential privacy (DP)~\cite{DworkMNS06, DBLP:conf/eurocrypt/DworkKMMN06} is used as our measure of privacy.

\begin{definition}[Differential Privacy (DP)~\cite{DBLP:journals/fttcs/DworkR14, DBLP:conf/eurocrypt/DworkKMMN06}]
A randomized algorithm $M$ achieves ($\epsilon,\delta$)-DP if for all $Z \subseteq$ Range($\mathcal{M}$) and for two neighboring databases $D,D' \in \mathcal{D}$ that differ in one row:
$$\Pr[\mathcal{M}(D) \in Z]\le e^\epsilon \Pr[\mathcal{M}(D') \in Z] + \delta.$$
\end{definition}

The privacy cost is measured by the parameters ($\epsilon,\delta$), often referred to also as the privacy budget. The smaller the privacy parameters, the stronger the offered privacy. 

Gaussian mechanism~\cite{DBLP:journals/fttcs/DworkR14} and Laplace mechanism~\cite{DworkMNS06} are two such widely used DP algorithms. 
Given a function $f:\mathcal{D} \rightarrow \mathbb{R}^d$, the Gaussian mechanism adds noise sampled from a Gaussian distribution $\mathcal{N}(0,S_f^2\sigma^2)$ to each component of the query output, where $\sigma$ is the noise scale and $S_f$
is the $L_2$ sensitivity of function $f$, 
which is defined as
$S_f = \max_{D,D'\text{differ in a row}} ||f(D) - f(D')||_2$. 
For $\epsilon\in (0,1)$, if $\sigma \geq \sqrt{2\ln(1.25/\delta)}/\epsilon$, 
 the Gaussian mechanism satisfies $(\epsilon,\delta)$-DP.
Laplace mechanism works similarly but with the noise from the Laplace distribution and the $L_1$ sensitivity.
Both these mechanisms have been applied to answer counting queries~\cite{DBLP:journals/vldb/LiMHMR15} and is widely used in estimating low dimensional statistics about the dataset. 

Complex DP algorithms can be built from these basic algorithms following two important properties of DP:
1) Post-processing~\cite{DBLP:conf/eurocrypt/DworkKMMN06} states that for any function $g$ defined over the output of the mechanism $\mathcal{M}$, if $\mathcal{M}$ satisfies ($\epsilon,\delta$)-DP, so does $g(\mathcal{M})$;
2) Composability~\cite{DBLP:conf/icalp/Dwork06} states that if $\mathcal{M}_1$, $\mathcal{M}_2$, $\cdots$, $\mathcal{M}_k$ satisfy ($\epsilon_1,\delta_1$)-, ($\epsilon_1,\delta_1$)-, $\cdots$, ($\epsilon_k,\delta_k$)-DP, 
then sequentially applying these mechanisms 
satisfies ($\sum_{i=1}^k\epsilon_i, \sum_{i=1}^k\delta_i$)-DP.

\subsection{DP Synthetic Data Generation}\label{sec:preliminaries:dpsyn}

A common DP study is to generate synthetic data given a fixed privacy budget. The synthetic data, once generated, can be made public and all queries on this dataset come for free due to the post-processing property of DP. There are three main approaches for DP synthetic data generation~\cite{tao2021benchmarking}:

\stitle{Statistical approaches}
rely on estimating low-dimensional statistics about the dataset such as marginals~\cite{DBLP:journals/tkde/XiaoWG11, DBLP:conf/sigmod/QardajiYL14, mckenna2022aim}. These approaches can be made better by finding the correlation between attributes. Techniques for improvement include probabilistic models~\cite{DBLP:books/daglib/0023091}, Bayesian models~\cite{DBLP:conf/sigmod/ZhangCPSX14, DBLP:journals/pvldb/LiXZJ14, DBLP:conf/ssdbm/PingSH17} and undirected graphs~\cite{DBLP:conf/kdd/ChenXZX15, DBLP:conf/icml/McKennaSM19}. Statistical approaches capture the underlying distribution of the correlated independent attributes very well but fail to imbibe complex relationships between multiple attributes.

\stitle{Deep learning approaches} \revision{are promising for generating synthetic data~\cite{DBLP:conf/cvpr/ShrivastavaPTSW17, Chawla2019DeepfakesH,DBLP:conf/icassp/Gupta19}, particularly with autoencoders and generative adversarial networks (GAN). Autoencoders map data into a low-dimensional feature space and sample synthetic data from the low-dimensional space. GANs utilize a generator to produce fake examples and a discriminator to distinguish real from fake. DPSGD~\cite{WM10, BST14, song2013stochastic, DBLP:conf/ccs/AbadiCGMMT016} is commonly used for privacy. Various private approaches for autoencoders \cite{phan2016differential, abay2018privacy, acs2018differentially} and GANs \cite{DBLP:conf/sec/FrigerioOGD19, DBLP:conf/iclr/JordonYS19a, DBLP:journals/corr/abs-2001-09700, DBLP:journals/corr/abs-1802-06739} have been proposed, effective on image data but challenged with tabular data due to poor encoding. Conditional GANs \cite{xu2019modeling} and private versions \cite{torkzadehmahani2019dp} address this by sampling based on conditional probabilities of categorical attributes.}

\stitle{Mixed approaches}
are inspired by both the above approaches and try to preserve both low-dimensional statistics and high-level information. Some techniques include leveraging the dimensionality reduction via random orthonormal (RON) projection, the Gaussian generative model~\cite{chanyaswad2019ron}, combining denial constraints and attribute-wise embedding models~\cite{ge2021kamino} and Gretel.ai statistics~\cite{noruzmangretel}.

\section{Problem Statement}\label{sec:prob}
Consider a private dataset behind a privacy firewall with $n$ rows and $k$ attributes. A trusted curator aims to generate a synthetic version of the same size with an end-to-end ($\epsilon, \delta)$-DP guarantee while preserving maximum utility\revision{. Real-world data collected by curators may contain missing values, a scenario overlooked in prior work. We address this by formalizing two versions of the problem based on privacy considerations.} First, we offer a DP guarantee for the incomplete dataset held by the data curator.

\begin{problem}\label{prob1}[Privacy for Incomplete Data] 
Consider collecting data from a ground truth data $\groundtruth$ of $n$ rows owned by $n$ individuals, a missing mechanism $M_{\Phi}:\mathcal{D}\rightarrow \mathcal{D}$ is involved that takes in $\groundtruth$  and outputs a dataset $\incompletedata$ of $n$ rows but with missing values.  A trusted data curator uses this dataset $\incompletedata$ as input and aims to generate a synthetic data $\syntheticdata$ of $n$ rows with a mechanism $M:\mathcal{D}\rightarrow \mathcal{D}$ such that $\syntheticdata$ \revision{minimizes $d(f(\incompletedata)-f(\syntheticdata))$, where $f: \mathcal{D} \rightarrow \mathbb{R}^l$ is any utility metric function that the user is interested in, $d(\cdot,\cdot)$ is a distance metric and \emph{$M$ offers $(\epsilon,\delta)$-DP to the input data $\incompletedata$}}. 
\end{problem}

\ifpaper
\else
In Section~\ref{sec:dpincomplete}, we delineate multiple options for the data curator to deal with incomplete data, put forward challenges that come with them, and discuss which option might be the best and when. The first option is a na\"ive adaptation of prior work for DP data generation by simply discarding the rows with missing values~\cite{rubin1976inference}. We refer to this approach as \emph{complete row only}. This approach can fail in many cases, as detailed in Section~\ref{sec:vanilla}. For instance, if all rows have some missing values, then there will be no input data for the data generation methods. 
A second approach is to impute the missing parts with inferred values from the observed data. We denote this approach as \emph{imputation first approach}. However, as our data is private, the imputation process needs to be privatized as well and the additional incurred privacy cost must be accounted for in the privacy budget. %$(\epsilon, \delta)$. 
In section~\ref{sec:vanilla}, we explore the privacy costs of imputation and show how they can be expensive in practice.  
Rather than having separate processes for imputation and synthetic data generation, we can integrate these two processes into one.  This line of thought motivates us to a new approach, which we call \emph{adaptive recourse approach}. In Section~\ref{sec:adaptive}, we improve upon three categories of DP generation approaches and demonstrate their effectiveness in generating synthetic data from incomplete data. These strategies use no extra privacy budget and solely improve by observing available data in the dataset.
\fi

The incomplete data $\incompletedata$ can be modeled as a sample generated from a complete ground truth dataset $\groundtruth$ via a missing mechanism $M_\Phi$. If the privacy goal is to protect the ground truth dataset $\groundtruth$ with DP guarantee, how will the problem and the solution be different? We formalize the second problem as follows.

\begin{problem}\label{prob2} [Privacy for Ground Truth Data] Consider the same setup as Problem~\ref{prob1}.

The trusted data curator uses the incomplete dataset $\incompletedata$ as input and aims to generate a synthetic data $\syntheticdata$ of $n$ rows with a mechanism $M:\mathcal{D}\rightarrow \mathcal{D}$ such that 
\revision{minimizes $d(f(\groundtruth)-f(\syntheticdata))$, where $f: \mathcal{D} \rightarrow \mathbb{R}^l$ is any utility metric function that the user is interested in, $d(\cdot,\cdot)$ is a distance metric} and \emph{$M\circ M_{\Phi}$ offers $(\bar{\epsilon},\bar{\delta})$-DP} \emph{to the ground truth data $\groundtruth$}. 

\end{problem}

The problem mentioned above differs from our initial one only in the final aspect: instead of aiming for DP for the observed incomplete data, we target DP for the ground truth data. The missing mechanism limits the information available for synthetic data generation. Although initially appearing similar, ensuring privacy for incomplete data doesn't necessarily guarantee privacy for ground truth data. In Section~\ref{sec:amplification}, we delve into their relationship, showcasing scenarios where privacy for incomplete data may or may not extend to ground truth data. We also explore how the missing mechanism can serve as a sampling mechanism to enhance privacy and improve the utility of synthetic data.
\ifpaper
While the paper contains numerous theorems and lemmas, some proofs are available in the extended version \cite{full_paper}.
\else
\fi

\section{Privacy for Incomplete Data}\label{sec:dpincomplete}

This section examines Problem~\ref{prob1} and explores methods for generating synthetic data from an incomplete private dataset. The section starts by discussing two vanilla methods that are found to be ineffective in the DP context and instead recommends adaptive recourse methods that are novel solutions that address both issues and produce better-quality synthetic data.

\subsection{Vanilla Approaches} \label{sec:vanilla}
Complete row only and imputation first are two traditional methods for handling missing data. These methods either involve discarding rows with missing values or filling up missing information with values inferred from the observed data.

\stitle{Complete Row Only Approach.} This approach is effective when the missingness is completely at random (MCAR) since the distribution of each attribute remains the same after removing missing rows. However, for other types of missingness, such as missing at random (MAR) and missing not at random (MNAR), the complete row only approach can lead to biased results. Hence, a standard synthetic data generation algorithm that learns directly from the remaining complete rows will result in a biased data distribution that is different from the ground truth data. 

\ifpaper
We run an experiment with the Adult dataset~\cite{uci_repo} where we compare the difference
%1-way distance 
between the original dataset and the generated synthetic datasets from four different DP synthetic data generation algorithms and show that such an approach affects MAR and MNAR much more than it does for MCAR. This is because MAR and MNAR introduce inherent bias to the estimated distribution of attributes. For space constraints, we defer this experiment to the full paper~\cite{full_paper}. Besides the potential bias issue for the complete row only approach, the number of complete rows remaining can be very small. For example, the ground truth Adult dataset which has 32k rows reduces to $\approx5$k complete rows with 20\% MAR and $\approx 1$k complete rows with 20\% MCAR/MNAR missing mechanism respectively. Our results in Section~\ref{sec:exp} show that the number of complete rows plays a vital role in the performance of the synthetic data generation algorithms.

\else
\begin{figure}[htb]
    \centering
    \includegraphics[width=0.8\linewidth]{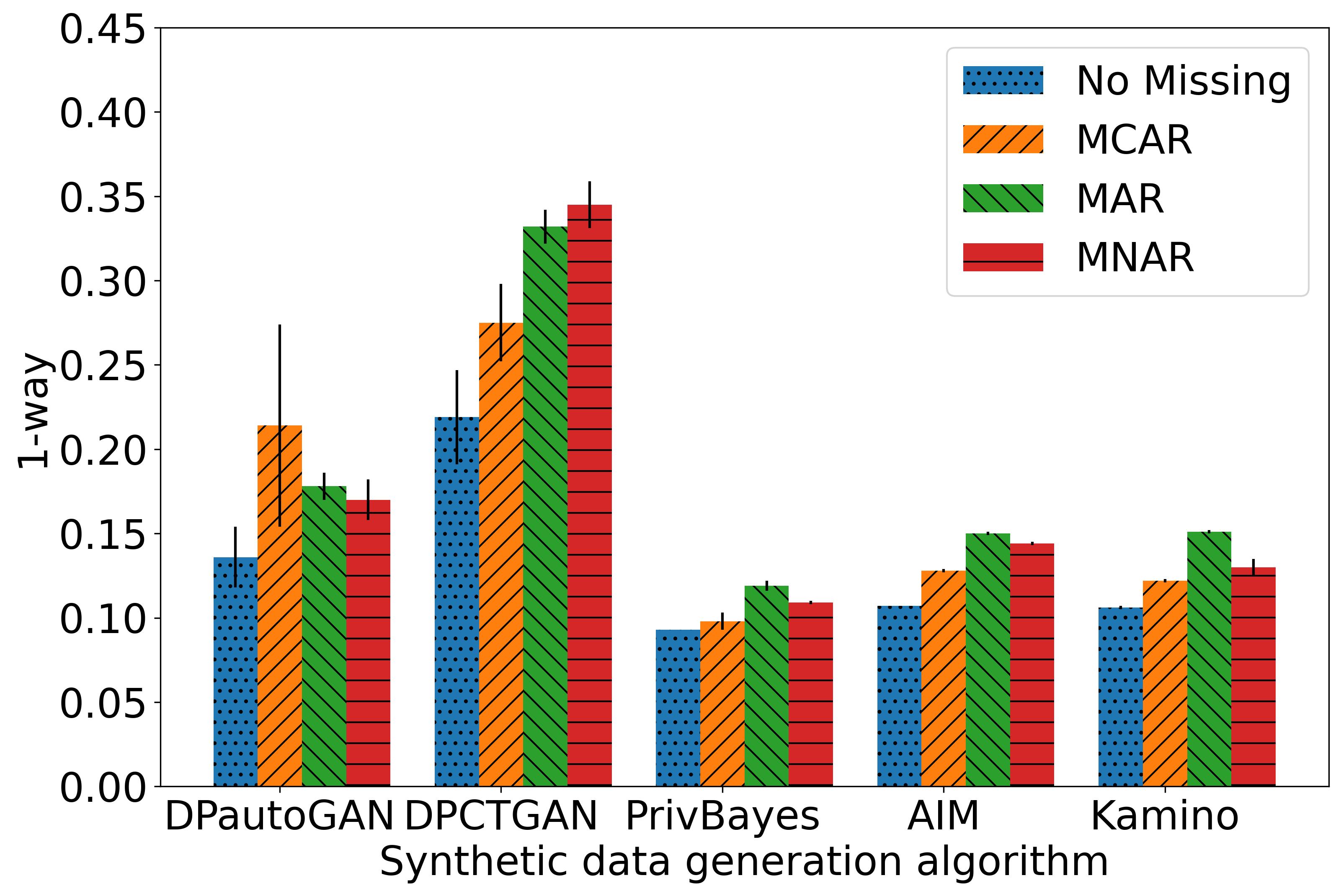}
    \caption{Complete row only approach results in poor results for MAR and MNAR missing mechanism.
    }
    \label{fig:compelte_row_bias}
\end{figure}

Figure~\ref{fig:compelte_row_bias} illustrates the performance of five different DP synthetic data generation algorithms using the complete row only approach for the Adult dataset with various missing mechanisms. The results indicate that the complete row only approach performs poorly for missing at random (MAR) and missing not at random (MNAR) mechanisms, which introduce bias to the estimated distribution of attributes. This bias can directly affect statistical approaches such as PrivBayes which rely heavily on empirical estimation of marginals. The 1-way distance between the generated synthetic dataset and the original dataset confirms this observation.

Besides the potential bias issue for the complete row only approach, the number of complete rows remaining can be very small. For synthetic data generation methods involving large deep learning models such as GAN, feeding the training process with a small number of complete rows will result in a poor data generation model. This is because the learning process does not converge or/and the noise added for achieving DP overshadows the signals of the training samples. For example, the ground truth Adult dataset which has 32k rows reduces to $\approx5$k complete rows with 20\% MAR and $\approx 1$k complete rows with 20\% MCAR/MNAR missing mechanism respectively. Our results show that the number of complete rows plays a vital role in the performance of the synthetic data generation algorithms. We discuss this in detail in Section~\ref{sec:exp} where we study several prior work approaches and evaluate them on the different missing data scenarios. 
\fi

\stitle{Imputation First Approach.} 
Imputation is vastly used in practice where the missing data are filled up with values inferred from the observed data.
\ifpaper
There are multiple ways to impute missing values in a dataset that include statistical~\cite{little2019statistical, rubin1976inference}, hot/cold deck methods~\cite{andridge2010review, nordholt1998imputation} and ML based imputation~\cite{jerez2010missing, lakshminarayan1996imputation}.
\else
There are multiple ways to impute missing values in the dataset, including: 
\begin{enumerate}[leftmargin=1.5em]
    \item Statistical methods: Each attribute of the dataset is modeled separately using statistical methods such as mean, median, and mode. The model is then used to fill up the missing values of the attribute~\cite{little2019statistical, rubin1976inference}. For example, in  Figure~\ref{fig:worst_impute}, for the left  tables, we use the median of the observed values of column A, to fill up the missing value of the 4th row. %\xh{Add an example and references?} 
    
    \item Hot and cold deck methods: This imputation technique replaces every missing value with another value from the same dataset (Hot deck) or from a proxy dataset (Cold deck)~\cite{andridge2010review, nordholt1998imputation}. 
    
    The missing cells of the incomplete row are then filled up from the closest similar row. Similarity metric like cosine distance or $\ell_2$ distance can be used. 
    
    \item ML imputation: Machine learning (ML) based approaches are common for missing data imputation~\cite{jerez2010missing, lakshminarayan1996imputation}. An ML model is trained to predict the missing values of an attribute based on other non-missing attributes in the dataset as training features. 
\end{enumerate}
\fi

In our private dataset setting, imputations must also be conducted privately. \revision{A straightforward yet ineffective method involves randomly selecting values from the missing attribute's domain, as illustrated later  (Figure~\ref{fig:exp5}).} We skip analysis of the cold deck imputation as finding another similar dataset is impractical for private datasets. 

\revision{DP imputation can be approached in two manners. The first involves splitting the privacy budget, allocating a portion for imputation and the remainder for synthetic data generation. However, this approach is challenging due to budget allocation and choice of imputation algorithm. Additionally, some imputation techniques such as the hot deck imputation that are row specific (replicates the missing value in a row based on some other observed value of a different user) cannot be performed in the DP setting. Randomizing this row to achieve DP introduces too many errors to the dataset. }

The second way is to formulate imputation as a transformation of the dataset and calculate the associated privacy cost as an end-to-end algorithm. We use the notion of stability (Theorem~\ref{thm:stability}) to calculate the privacy costs of these transformations.

\revision{\begin{theorem}~\cite{DBLP:conf/sigmod/McSherry09}
We say a transformation $T(\cdot)$ is $c$-stable, if the distance between $T(D)$ and $T(D')$ is at most $c$ times the distance between $D$ and $D'$. The composite mechanism $\mathcal{M} \circ T$ then becomes $(c \cdot \epsilon, \delta)$-DP, for any mechanism $\mathcal{M}$ which is $(\epsilon, \delta)$-DP.
\label{thm:stability}
\end{theorem}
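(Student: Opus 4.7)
The plan is to reduce the $c$-stability condition to a standard group-privacy argument via a telescoping chain of neighboring databases. I would start by fixing two arbitrary neighboring inputs $D$ and $D'$ with row-level distance one. By the $c$-stability of $T$, the transformed outputs $T(D)$ and $T(D')$ differ in at most $c$ rows, so I can interpolate them with a sequence $T(D) = D_0, D_1, \dots, D_c = T(D')$ in which each consecutive pair $(D_{i-1}, D_i)$ either coincides or differs in exactly one row, making every consecutive pair a valid neighboring pair in the sense of the DP definition. Because $T$ is deterministic, once its output is fixed we can invoke $\mathcal{M}$'s DP guarantee pointwise on $T(D)$ and $T(D')$, which is what makes the composition well-defined.

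Next, I would apply the $(\epsilon,\delta)$-DP guarantee of $\mathcal{M}$ along each link of the chain. For any measurable $Z \subseteq \mathrm{Range}(\mathcal{M})$, we have $\Pr[\mathcal{M}(D_{i-1}) \in Z] \le e^\epsilon \Pr[\mathcal{M}(D_i) \in Z] + \delta$ for every $i \in \{1,\dots,c\}$. Unrolling this inequality $c$ times and substituting $D_0 = T(D)$ and $D_c = T(D')$ yields $\Pr[\mathcal{M}(T(D)) \in Z] \le e^{c\epsilon}\,\Pr[\mathcal{M}(T(D')) \in Z] + \delta\sum_{i=0}^{c-1} e^{i\epsilon}$. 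The multiplicative factor $e^{c\epsilon}$ is exactly the $c\cdot\epsilon$ part of the claimed $(c\epsilon,\delta)$-DP guarantee for the composite mechanism $\mathcal{M}\circ T$.

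The main obstacle is matching the additive $\delta$ term of the statement exactly. The telescoping naturally produces $\delta\sum_{i<c} e^{i\epsilon}$ rather than a single $\delta$; the stated bound is tight in the pure-DP regime ($\delta = 0$) originally formalized by McSherry, and the approximate-DP formulation used in the theorem follows the standard convention of carrying $\delta$ through unchanged when invoking stability. I would therefore structure the writeup around the pure-DP chain argument as the core technical content and add a brief remark that the extension to the $(\epsilon,\delta)$-DP statement is routine under this convention. The only nontrivial prerequisite is that the distance metric used in the $c$-stability definition agrees with the neighbor relation used by DP, which is the standard row-addition/removal Hamming distance in our setup.
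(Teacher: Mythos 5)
Your proposal is the standard group-privacy proof of this result, and it is essentially correct; note that the paper itself does not prove Theorem~\ref{thm:stability} at all but imports it from McSherry's PINQ paper, so there is no in-paper argument to compare against. Your telescoping chain $T(D)=D_0,\dots,D_c=T(D')$ combined with the determinism of $T$ is exactly the right reduction, and your flagged concern about the additive term is a genuine and correct observation rather than a defect of your argument: unrolling the $(\epsilon,\delta)$ guarantee $c$ times provably yields $\Pr[\mathcal{M}(T(D))\in Z]\le e^{c\epsilon}\Pr[\mathcal{M}(T(D'))\in Z]+\delta\sum_{i=0}^{c-1}e^{i\epsilon}$, i.e.\ $(c\epsilon,\ \delta\,\tfrac{e^{c\epsilon}-1}{e^\epsilon-1})$-DP, and this degradation of $\delta$ is known to be essentially unavoidable for group privacy under approximate DP. The theorem as stated (carrying a single $\delta$ through) is exact only in the pure-DP regime $\delta=0$ in which McSherry originally formulated stability; for $\delta>0$ it should be read as shorthand for the degraded additive term. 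I would suggest you not bury this as a "convention" remark but state the corrected additive term explicitly in your writeup, since the paper later invokes this theorem with $c$ as large as $n$ (Theorem~2 on the imputation-first approach), where $\delta\,\tfrac{e^{n\epsilon}-1}{e^\epsilon-1}$ is very far from $\delta$. One small point to tighten: when you say each consecutive pair in the chain "either coincides or differs in exactly one row," make sure the neighbor relation you use (add/remove vs.\ substitute one row) matches the one under which $\mathcal{M}$'s guarantee is stated, since the stability count $c$ is measured in the symmetric-difference metric in McSherry's formulation.
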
}
 
\begin{lemma}\label{lemma:statimp}
Consider a transformation $T_A(\cdot)$ for imputing attribute $A$,
which takes in the incomplete dataset $D$ as part of the input  and outputs a dataset $D'$ with the complete values for attribute $A$. Then, the stability of \hspace{0.05cm}$T_A(\cdot)$ is
$c= m_A + 1$
where $m_A$ refers to the number of missing values for the attribute $A$.
\end{lemma}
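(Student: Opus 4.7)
The plan is to unfold the definition of $c$-stability from Theorem~\ref{thm:stability} and count directly how many output rows can disagree when the input changes on a single row. Fix two inputs $D$ and $\tilde{D}$ with row-edit distance $d(D,\tilde{D})=1$, and let $r$ be the one row on which they differ. Since we want a constant $c$ with $d(T_A(D), T_A(\tilde{D})) \le c\cdot d(D,\tilde{D})$, it suffices to bound the number of rows on which $T_A(D)$ and $T_A(\tilde{D})$ disagree by $m_A + 1$.

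To obtain that bound, I would decompose any statistical imputation rule for attribute $A$ into two pieces: (i) a summary $\theta$ computed from the entire input (e.g., the mean, median, or mode of the observed $A$ values, or a regression coefficient learned from the other attributes), and (ii) a per-row filling step that depends only on $\theta$ and the row's own observed entries and writes a value into the cell whenever $A$ is missing. Changing only row $r$ can (a) alter $\theta$, which then propagates to every one of the $m_A$ rows whose $A$ entry was missing, giving up to $m_A$ differing output rows, and (b) change the output row $r$ itself, giving at most one further differing row. Summing these two contributions yields $d(T_A(D), T_A(\tilde{D})) \le m_A + 1$, establishing the stability claim.

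The step I expect to need the most care is the edge case in which row $r$ has $A$ missing in one of $D, \tilde{D}$ but observed in the other, so that $r$ is simultaneously one of the $m_A$ imputed rows on one side and a copied-through row on the other. I would handle this by taking $m_A$ as the larger of the two missing counts (they differ by at most one) and observing that $r$ still contributes only a single unit to the output distance whether it is imputed or copied, so the combined count of $m_A + 1$ remains a valid upper bound. By Theorem~\ref{thm:stability}, this stability constant then translates immediately into a multiplicative factor of $m_A + 1$ on the privacy parameter $\epsilon$ when $T_A$ is composed with any downstream $(\epsilon, \delta)$-DP mechanism.
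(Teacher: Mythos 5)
Your proof is correct and follows essentially the same route as the paper's: a change in one input row perturbs the globally computed imputation value, which can flip all $m_A$ imputed cells, and the differing row itself adds one more, giving $m_A+1$. Your version is somewhat more careful than the paper's (the explicit summary-plus-fill decomposition and the edge case where the differing row changes its own missingness status, which the paper glosses over), but the underlying counting argument is the same.
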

\ifpaper
This lemma basically considers the transformation using the row in which the neighboring databases differ. If applying a sequence of imputation functions over the attributes of a dataset $(\cdots, T_{A_i}, \cdots \}$, the difference in the resulted datasets can be very large (up to the data size at a worst case) when the input dataset differs in a single row. Note that these results hold even if imputation functions for two attributes are not the same. 
\else
\proof
As the neighboring databases $D$ and $D'$ differ by a row, $T_A(\cdot)$ uses two different values $x$ and $x'$ to impute the missing values in $D$ and $D'$ respectively. As there are $m_{A(1)}$  rows in both $D$ and $D'$ that have missing values for attribute $A$, the resulted imputed databases, $T(D)$ and $T(D')$ have $m_{A(1)} +1$ number of rows (include the row that $D$ and $D'$ differ). 
\qed

Using the above Lemma we can see that applying a sequence of imputation functions over the attributes of a dataset $(\cdots, T_{A_i}, \cdots \}$, the difference in the resulted datasets can be very large when the input dataset differs in a single row. Note that these results hold even if imputation functions for two attributes are not the same. 
\fi

\begin{theorem}
The composite mechanism $\mathcal{M}\circ T$ on a dataset $D$ with $n$ rows is $n\epsilon$-DP, where $\mathcal{M}$ is a $\epsilon$-DP mechanism, and $T$ is a sequence of imputation functions performed to each attribute of $D$. 
\end{theorem}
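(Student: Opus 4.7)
The plan is to invoke Theorem~\ref{thm:stability} after bounding the stability of the full imputation sequence $T$ by $n$. Write $T = T_{A_k} \circ \cdots \circ T_{A_1}$, and let $D, \tilde D$ be any pair of neighboring databases differing in a single row. Denote by $D^{(i)}$ and $\tilde D^{(i)}$ the intermediate datasets obtained after applying $T_{A_1}, \ldots, T_{A_i}$ to $D$ and $\tilde D$ respectively (so $D^{(0)} = D$ and $\tilde D^{(0)} = \tilde D$), and let $d_i$ be the number of rows on which $D^{(i)}$ and $\tilde D^{(i)}$ disagree, so that $d_0 = 1$.

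Next, I would track how $d_i$ evolves across the pipeline. By the argument underlying Lemma~\ref{lemma:statimp}, each $T_{A_{i+1}}$ can write different values into the originally missing cells of column $A_{i+1}$ across the two executions, because the imputation rule depends on the current dataset which already disagrees in $d_i$ rows. In general $d_{i+1}$ may grow, and a naive multiplicative composition of Lemma~\ref{lemma:statimp} would only give the crude bound $d_k \le \prod_{i=1}^{k}(m_{A_i}+1)$, which can far exceed $n$. The key observation is that no imputation inserts, deletes, or reorders rows: each $T_{A_i}$ merely fills in cells, so $|D^{(i)}| = |\tilde D^{(i)}| = n$ for every $i$. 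Hence $d_i \le n$ at every step, and in particular $d_k \le n$, which shows that $T$ is $n$-stable.

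Applying Theorem~\ref{thm:stability} with $c = n$ to the $n$-stable transformation $T$ composed with the $\epsilon$-DP mechanism $\mathcal{M}$ then yields that $\mathcal{M} \circ T$ is $(n\epsilon)$-DP, proving the claim. The main obstacle is the subtlety that the naive multiplicative stability composition overshoots; the argument must explicitly use the structural fact that imputation preserves the row count in order to collapse the bound back down to $n$, rather than relying on the per-step factors from Lemma~\ref{lemma:statimp} alone.
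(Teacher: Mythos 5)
Your proof is correct, and it takes a complementary route to the paper's. The paper's proof is a worst-case \emph{construction}: it exhibits a scenario in which $D$ and $D'$ differ in a single complete row, every other row has exactly one missing cell ($\sum_i m_{A_i}=n-1$), and each imputation function uses the differing row, so that all $n$ rows end up affected and the cost reaches $n\epsilon$. That argument shows the bound $n$ is attained, but it leaves implicit the claim that no configuration can do worse. You argue the converse direction explicitly: since each $T_{A_i}$ only fills in cells and never inserts, deletes, or reorders rows, the number of rows on which the two executions disagree is trivially capped at $n$ at every stage of the pipeline, so the composed transformation is $n$-stable regardless of how the per-attribute factors $m_{A_i}+1$ from Lemma~\ref{lemma:statimp} would multiply. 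Your observation that naive multiplicative composition of Lemma~\ref{lemma:statimp} overshoots (giving $\prod_i(m_{A_i}+1)\gg n$) and must be collapsed by the row-count invariant is exactly the right subtlety, and it is the piece the paper's proof glosses over. Combining both arguments would give the cleanest statement: your cap establishes that $\mathcal{M}\circ T$ is $n\epsilon$-DP, and the paper's construction shows this stability bound cannot be improved in general.
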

\ifpaper
\else
\proof
Consider a worst-case scenario: $D$ and $D'$ differ in a single row that does not have any missing values, and the rest of the rows have only one attribute with missing values, i.e., $\sum_{i} m_{A_i} = n-1$. As $T$ uses the complete row to impute all the missing values, all rows will be affected and the overall cost of $M \circ T$ will be $n\epsilon$-DP. \qed
\fi 
\ifpaper
Our full paper~\cite{full_paper} consists of two worst case illustrations on a toy dataset for the above Theorem.

\else
\begin{example}
In Figure~\ref{fig:worst_impute}, we illustrate two worst-case toy examples for a dataset with two columns using mean and median imputations. The gray color indicates missing values and dotted lines denote differing rows between the top and the bottom datasets for each example. The missing values in both columns in the left and right examples are filled up with mean and median functions respectively. After applying imputations on the neighboring datasets $D$ and $D'$, the number of rows in column $B$ in both examples starts to differ by $4$ rows and $5$ respectively including the imputed values and the differing row. In such a scenario, one needs to pay $4\epsilon$ or $5\epsilon$ privacy cost to ensure DP to the incomplete data. 
\end{example}
\begin{figure}
    \centering
    \includegraphics[width=0.9\linewidth]{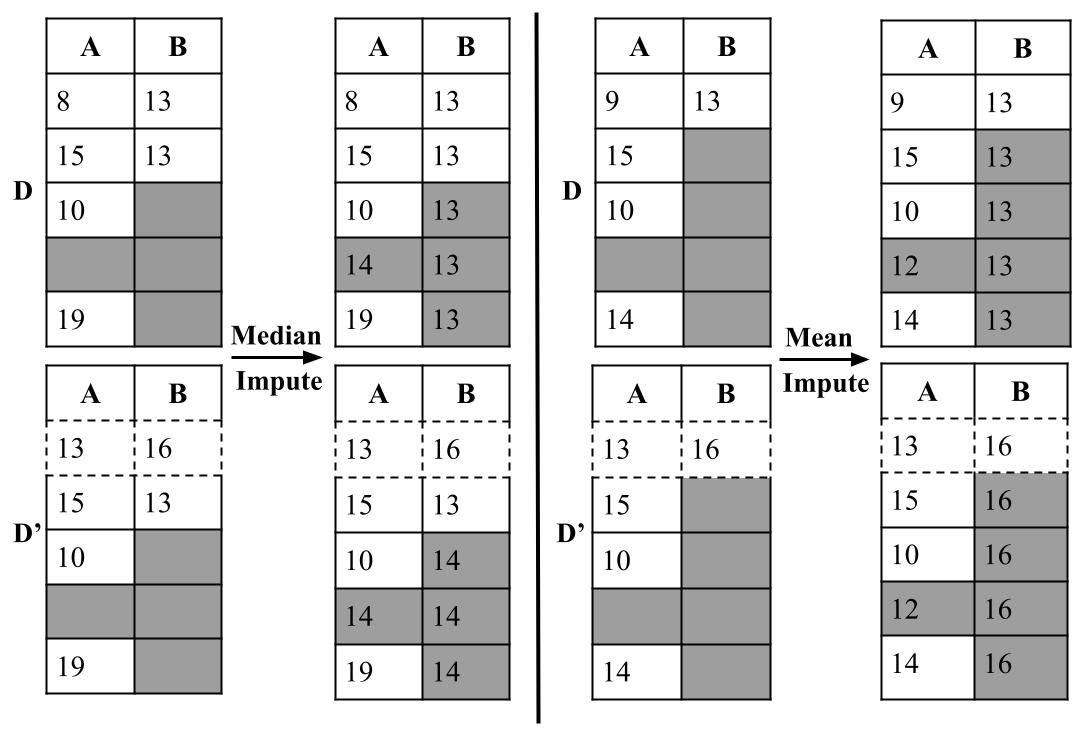}
    \caption{Illustration of worst case statistical imputations.}
    \label{fig:worst_impute}
\end{figure}
\fi
\subsection{Adaptive Recourse Approach}\label{sec:adaptive}
Both aforementioned approaches suffer from inefficiencies in data or privacy budget utilization. The complete row approach discards partial rows, wasting potential learning data, while the imputation-first method incurs high privacy costs. These challenges prompt us to modify existing synthetic data generation techniques to optimize both dataset information and privacy budget usage, which we term \emph{adaptive recourse}. \revision{The concept involves employing the privacy-preserving learning aspect of synthetic data generation for both imputation and synthetic data sampling. This offers two advantages: the privacy budget is dedicated solely to learning a single model, and the imputation process generates more comprehensive training examples, enhancing model utility. We have selected three representative DP data generation models—generative adversarial networks (GAN), partial marginal observation, and column-wise data generation—as proof-of-concept for these adaptive approaches, which can extend to other existing or new DP models.}

%\subsection{DPMisGAN} \label{sec:misgan}

\stitle{GAN-based adaptive recourse.} 
In non-private literature, several approaches use the GAN framework to deal with missing data~\cite{yoon2018gain, li2019misgan, luo2019e2gan, xu2020scigans}. \revision{We choose to privatize an approach called MisGAN~\cite{li2019misgan}, which allows us to simultaneously demonstrate the power of learning the data distribution and the missingness pattern for GAN-based algorithms. We call its DP version, DP-MisGAN, as shown in Algorithm~\ref{algo:misgan}. We first describe the high-level architecture of MisGAN and DP-MisGAN, and then we highlight the enhancement in this approach.}

\revision{
\emph{Algorithm Overview:}  MisGAN/DP-MisGAN trains two generator-discriminator pairs --- one for learning the data distribution and the other for learning the missingness pattern. The training spans $E$ epochs, with each epoch sampling $|D|/B$ sized subsets from the dataset $D$ without replacement (Line 4). %assuming data size is multiple of $B$ (Line 4)
Each subset $S_t$ is processed with real data $x_{data}$ (Line 6) and its corresponding missing mask $x_{mask}$(Line 7). The missing mask $x_{mask}$, computed from the missing indicator matrix $M$, marks where data is missing as $1$ and $0$ otherwise. Missing values in real data are replaced with $0$s (Line 8). Two fake examples $y_{data}$ and $y_{mask}$ are generated by passing random Gaussian noise through data and mask generators (Line 9). These generators are updated using gradient descent, with discriminators learning true distributions in one phase (Line 10) and generators updating within specified $T_G$ intervals in the second phase (Line 11). In each generator interval, two fake samples $y_{data}$ and $y_{mask}$ are again generated similarly from the two generators (Line 12) and gradients from the discriminator is computed (Line 13). Note that the non-private MisGAN uses these gradients directly to update the parameters of the two generators (Line 16) and releases both generators in the end. Finally, after the training is completed for $E$ epochs, the discriminators are thrown away, and the privately learned data generator is used to sample synthetic data (Line 20)
}.

\begin{algorithm}[t]
	\begin{algorithmic}[1]
        \REQUIRE Incomplete dataset $D$, noise scale $\sigma$, epochs $E$, learning rates $\eta_D$ and $\eta_G$, generator interval $T_G$, batch size $B$, missing indicator matrix $M$
        \STATE Initialize data generator $\theta^D_G$ and discriminator $\theta^D_D$
        \STATE Initialize mask generator $\theta^M_G$ and discriminator $\theta^M_D$
        \FOR{$i$ in [$1,\dots, E$]}
            \STATE Subsample %(without replacement) 
            dataset $D$
                    into $\{S_k\}^{k=(|D|/B)}_{k=1}$ subsets
            \FOR{$t$ in [$1,\dots, |D|/B$]}
    		\STATE Set real data $x_{data} = S_t$
                \STATE Sample real mask $x_{mask}$ from missing indicator $M(S_t)$
                \STATE Fill missing values in $x_{data}$ with $0$
                \STATE Generate fake data and mask \\ 
                $y_{data} = \theta^D_G(z)$ where $z \sim \mathcal{N}(0,1)$ \\
                $y_{mask} = \theta^M_G(z)$ where $z \sim \mathcal{N}(0,1)$
                \STATE Update $\theta^D_D = \theta^D_D - \eta_D \nabla_{\theta^D_D}[\frac{1}{B} (\theta^D_D; x_{data}) - \frac{1}{B} (\theta^D_D; y_{data})]$ \\
                $\theta^M_D = \theta^M_D - \eta_D \nabla_{\theta^M_D}[\frac{1}{B} (\theta^M_D; x_{mask}) - \frac{1}{B} (\theta^M_D; y_{mask})]$
                \IF{$t$ in interval of $T_G$}
                    \STATE Generate fake data and mask \\ 
                    $y_{data} = \theta^D_G(z)$ where $z \sim \mathcal{N}(0,1)$ \\
                    $y_{mask} = \theta^M_G(z)$ where $z \sim \mathcal{N}(0,1)$
                    \STATE Compute data and mask gradient \\
                    $g_d =  \nabla_{\theta^D_G}(\theta^D_D; y_{data})$,
                    $g_m =  \nabla_{\theta^M_G}(\theta^M_D; y_{mask})$
                    \STATE Clip each gradient $\bar{g_i}_d = {g_i}_d / \max(1, \|{g_i}_d\|_2)$ \\
                    $\bar{g_i}_m = {g_i}_m / \max(1, \|{g_m}\|_2)$
                    \STATE Compute noisy gradients $\tilde{g_i}_d = \bar{g_i}_d + 2\sigma\mathcal{N}(0, 1)$ \\ 
                     $\tilde{g_i}_m = \bar{g_i}_m + 2\sigma\mathcal{N}(0, 1)$
                    \STATE Update generators $\theta^D_G = \theta^D_G - \eta_D \frac{1}{B} \tilde{g_i}_d$ \\
                    $\theta^M_G = \theta^M_G - \eta_D \frac{1}{B} \tilde{g_i}_d$
                \ENDIF
            \ENDFOR
        \ENDFOR
        \STATE Generate and return synthetic dataset from generator $\theta^D_G$
        \caption{DP-MISGAN} 
        \label{algo:misgan}
    \end{algorithmic}
\end{algorithm}

\revision{\emph{Highlights:} In our DP-MisGAN, we clip and then noise the gradients learned for the generators using the \emph{sampled Gaussian mechanism} (SGM)~\cite{DBLP:journals/corr/abs-1908-10530} to ensure privacy (Lines 14-15). This step is different from prior efforts for DPCTGAN~\cite{fang2022dp} and DPautoGAN~\cite{tantipongpipat2021differentially}, which privatize non-private optimizer of the discriminators using DPSGD~\cite{WM10, BST14, song2013stochastic, DBLP:conf/ccs/AbadiCGMMT016}. This divergence in technique leads to notable difference in the algorithm's utility: while the discriminators become noisy due to the introduction of noise, the generator's gradient calculation, which relies on discriminator weights (Line 13), is also affected.} 
\revision{We observe that we only need to publish the generators, not the discriminators. Hence, we resort to the gradient sanitization (GS) approach~\cite{chen2020gs} to perturb only the gradients of the generators (Lines 14-15), without affecting the utility of the discriminator.}
\revision{In addition, the GS approach also allows us to skip the hyperparameter tuning for the gradient clip $C$, which can be vastly detrimental if set wrong~\cite{mohapatra2021role}.  Unlike the standard gradient clipping to bound the sensitivity of the gradient norm by $C$, i.e., $g/max(1, \|g\|_2/C)$, we clip the gradient by $g/max(1,\|g\|_2)$ (Line 15) by considering a Wasserstein-Gan (WGAN)~\cite{arjovsky2017wasserstein} framework with an additional gradient penalty term in the loss function of the discriminator that enforces the $\ell_2$-norm of the discriminator gradients to be naturally close to $1$.
}
%DPMisGAN benefits in two ways by applying gradient sanitization (GS) to achieve DP. First, noise is added only to the gradients of the generators in the second generator training phase (Line 14). Second, the tedious procedure of hyperparameter tuning for the gradient clip $C$ is skipped. Gradient clipping helps to bound the sensitivity of $\|g\|_2 < C$ by replacing the vector $g$ by $g/max(1, \|g\|_2/C)$ in DPSGD and can be vastly detrimental if set wrong~\cite{mohapatra2021role}. The training of MisGAN assumes an underlying Wasserstein-Gan (WGAN)~\cite{arjovsky2017wasserstein} framework with an additional gradient penalty term in the loss function of the discriminator that enforces the $\ell_2$-norm of the discriminator gradients to be naturally close to $1$. Hence, there is no need of hyperparameter tuning for $C$ and can be set to $1$ for the whole training procedure. Each generator update step in the DPMisGAN procedure is privatized using the \emph{sampled Gaussian mechanism} (SGM)~\cite{DBLP:journals/corr/abs-1908-10530}. 
\ifpaper
We show a tight bound for the privacy loss of the training procedure calculated using R\'enyi-DP (RDP) in the full paper~\cite{full_paper}. 
\else
A tight bound for the privacy loss of the training procedure can be calculated using R\'enyi-DP (RDP).

\begin{definition}[R\'enyi-DP~\cite{mironov2017renyi}]
    A randomized algorithm $M$ with domain $\mathcal{D}$ is $(\alpha, \epsilon)$-RDP at order $\alpha>1$, for any pair of neighbouring databases $D,D' \in \mathcal{D}$ that differ in one tuple. Let $P_D$ and $P_{D'}$ be the output probability density of $\mathcal{M}(D)$ and $\mathcal{M}(D')$ respectively. Then, it holds that: $\frac{1}{\alpha - 1} \log\E_{x\sim \mathcal{M}(D')} \left( \frac{P_{D}(x)}{P_{D^\prime}(x)} \right)^\alpha \leq \epsilon$.
\end{definition}

The post-processing and composability properties of DP also apply to RDP.
Specifically, if a sequence of adaptive mechanisms $\mathcal{M}_1$, $\mathcal{M}_2$, $\cdots$, $\mathcal{M}_k$ satisfy ($\alpha,\epsilon_1$)-, ($\alpha,\epsilon_2$)-, $\cdots$, ($\alpha,\epsilon_k$)-RDP, 
then the composite privacy loss is ($\alpha, \sum_{i=1}^k\epsilon_i$)-RDP. The RDP privacy loss of the \emph{sampled Gaussian mechanism} (SGM)~\cite{DBLP:journals/corr/abs-1908-10530} is given using Lemma~\ref{lemma:sgm}.

\begin{lemma}\label{lemma:sgm}
Given a database $D$ and query $f: \mathcal{D} \rightarrow \mathbb{R}^d$ with sensitivity $S_f$, returning $f(\{x\in D \mid \text{x is sampled with probability r}\})+\mathcal{N}(0,S_f^2\sigma^2 \mathbb{I}^d)$ results in the following RDP cost for an integer moment $\alpha$.
\begin{equation*}
    R_{\sigma, r}(\alpha) = 
    \begin{cases}
     \frac{\alpha}{2(S_f\sigma)^2} &  r = 1\\ 
     \sum_{k=0}^{\alpha} \binom{\alpha}{k}(1-r)^{\alpha-k} r^{k}\exp(\frac{\alpha^2 - \alpha}{2(S_f\sigma)^2}) &  0<r<1
    \end{cases}
\end{equation*}
\end{lemma}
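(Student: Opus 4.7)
The plan is to split along the two cases in the statement. For $r=1$ the mechanism is simply the Gaussian mechanism with noise $\mathcal{N}(0,S_f^2\sigma^2\mathbb{I}^d)$, so on neighboring databases $D,D'$ the output distributions are $P_0=\mathcal{N}(f(D),S_f^2\sigma^2\mathbb{I}^d)$ and $P_1=\mathcal{N}(f(D'),S_f^2\sigma^2\mathbb{I}^d)$ with $\|f(D)-f(D')\|_2\le S_f$ by the sensitivity hypothesis. The Rényi divergence of order $\alpha$ between two Gaussians with common covariance $\Sigma$ is the standard closed form $\tfrac{\alpha}{2}(\mu_1-\mu_0)^\top\Sigma^{-1}(\mu_1-\mu_0)$; plugging in $\Sigma=S_f^2\sigma^2\mathbb{I}^d$ and the worst-case shift norm immediately yields the quoted $r=1$ bound.

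For $0<r<1$, the plan is to first pin down the output distributions on a neighboring pair $D$ and $D'=D\cup\{x^\ast\}$. Because the extra record $x^\ast$ is included in the Poisson-style subsample only with probability $r$, the output on $D'$ is a two-component mixture $P_{D'}=(1-r)P_0+rP_1$, while $P_D=P_0$; here $P_0,P_1$ are two spherical Gaussians with the same covariance and mean shift of norm at most $S_f$ (using rotational symmetry of the spherical Gaussian to reduce to a scalar shift). The target is the moment $M_\alpha:=\E_{P_0}\bigl[(P_{D'}/P_0)^\alpha\bigr]$, from which the RDP follows by $\tfrac{1}{\alpha-1}\log M_\alpha$.

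Next, I would exploit the integrality of $\alpha$ to expand by the binomial theorem inside the expectation:
\begin{equation*}
M_\alpha=\sum_{k=0}^{\alpha}\binom{\alpha}{k}(1-r)^{\alpha-k}r^{k}\,\E_{P_0}\!\left[\left(\tfrac{P_1}{P_0}\right)^{\!k}\right].
\end{equation*}
A direct density-ratio computation for the two shifted spherical Gaussians gives the standard moment identity $\E_{P_0}[(P_1/P_0)^k]=\exp\bigl(k(k-1)\|\Delta\|_2^2/(2S_f^2\sigma^2)\bigr)$, which by $\|\Delta\|_2\le S_f$ is upper-bounded by $\exp(k(k-1)/(2\sigma^2))$. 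Finally, applying the monotone bound $k(k-1)\le\alpha(\alpha-1)$ for $0\le k\le \alpha$ allows the common factor $\exp\bigl((\alpha^2-\alpha)/(2(S_f\sigma)^2)\bigr)$ to be pulled inside every term of the sum, producing exactly the claimed $R_{\sigma,r}(\alpha)$.

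The main obstacle I anticipate is the reduction step in the mixture case: one must justify that the worst-case neighboring pair really does yield a two-component Gaussian mixture aligned with a single direction, which relies on invariance of spherical Gaussians under rotation plus monotonicity of Rényi divergence in the shift norm. A secondary subtlety is that the binomial expansion crucially needs $\alpha\in\mathbb{Z}_{>0}$; the lemma statement restricts to integer moments precisely to sidestep the analytic continuation (via Hölder interpolation) that a general-$\alpha$ proof would demand. Everything else is a routine Gaussian moment computation plus Jensen-style monotonicity.
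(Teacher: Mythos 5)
The paper itself offers no proof of Lemma~\ref{lemma:sgm}: it is imported from the sampled-Gaussian-mechanism analysis of Mironov, Talwar and Zhang (the reference cited beside ``SGM''), so there is no in-paper argument to compare against. Your outline reconstructs the standard integer-moment proof from that source (and from the moments accountant): condition on the common part of the subsample so the two output laws become a Gaussian $P_0$ versus the mixture $(1-r)P_0+rP_1$, expand $\E_{P_0}\bigl[\bigl((1-r)+rP_1/P_0\bigr)^{\alpha}\bigr]$ by the binomial theorem, use $\E_{P_0}[(P_1/P_0)^k]=\exp\bigl(k(k-1)\|\Delta\|_2^2/(2\varsigma^2)\bigr)$ for equal-covariance spherical Gaussians, and bound $k(k-1)\le\alpha(\alpha-1)$. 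That is the right skeleton, and the $r=1$ case is the routine Gaussian-mechanism computation.

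There are, however, two genuine gaps. First, you bound only one ordering of the neighboring pair, namely $\E_{P_0}[(P_{D'}/P_0)^{\alpha}]$, i.e.\ $D_\alpha(P_{D'}\Vert P_0)$. RDP (as defined in this paper) quantifies over both orderings, and the reverse moment $\E_{P_{D'}}[(P_0/P_{D'})^{\alpha}]$ has the mixture in the denominator, so it admits no binomial expansion; showing that the direction you compute dominates the other is a separate and nontrivial theorem in the cited reference, not a consequence of rotational symmetry or monotonicity in the shift norm. Without it the claimed bound is not established. Second, the constants do not land where you assert. With noise $\mathcal{N}(0,S_f^2\sigma^2\mathbb{I}^d)$ and mean shift $\|\Delta\|_2\le S_f$, your own identities give $\alpha\|\Delta\|_2^2/(2S_f^2\sigma^2)\le\alpha/(2\sigma^2)$ for $r=1$ and $\exp\bigl(k(k-1)/(2\sigma^2)\bigr)$ inside the sum---not the stated $\alpha/(2(S_f\sigma)^2)$ and $\exp\bigl((\alpha^2-\alpha)/(2(S_f\sigma)^2)\bigr)$, which coincide with your result only when $S_f=1$. (As printed, the lemma would make the privacy cost \emph{decrease} as the sensitivity grows, which cannot be correct; the same normalization issue propagates into the proof of Theorem~\ref{thm:dpmisgan_rdp}.) You should either carry the normalization through explicitly and note that the exponent in the statement should not contain $S_f$, or refrain from claiming that your computation ``produces exactly the claimed'' expression.
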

We can now calculate the total RDP cost of DP-MisGAN by composing the cost of each SGM in the training procedure.

\begin{theorem}\label{thm:dpmisgan_rdp}
    The total RDP cost of DP-MisGAN is 
    \begin{equation*}
        R(\alpha) = 2\lceil\frac{T}{T_G}\rceil \sum_{k=0}^{\alpha} \binom{\alpha}{k}(1-\frac{B}{|D|})^{\alpha-k} \frac{B}{|D|}^{k}\exp(\frac{\alpha^2 - \alpha}{8\sigma^2})
    \end{equation*}
\end{theorem}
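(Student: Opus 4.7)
The plan is to reduce the analysis to a straightforward composition of sampled Gaussian mechanisms (SGMs) and then invoke Lemma~\ref{lemma:sgm}. First I would identify every data-dependent randomized release in Algorithm~\ref{algo:misgan}. Lines 10 (discriminator updates) do read the real data $x_{data}$ and real mask $x_{mask}$, but the discriminator parameters $\theta^D_D, \theta^M_D$ are never published; they are used only as internal state that feeds into Line 13 where generator gradients $g_d, g_m$ are computed from them. Those generator gradients are then clipped (Line 14) and Gaussian-noised (Line 15) before being applied (Line 16). So the only operations whose outputs need to be accounted for privacy are the two noisy gradient releases inside each ``generator interval'' step, and everything in Line~10 can be folded into the privacy analysis as an adaptive function of previously released noisy gradients together with fresh data samples — i.e., each release, conditioned on the history, is an SGM on a fresh subsampled batch.

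Second, I would pin down the three SGM parameters. The batch $S_t$ is a subsample of $D$ of size $B$, so the subsampling rate is $r = B/|D|$. The clipping step in Line~14 uses $\bar g_i = g_i/\max(1,\|g_i\|_2)$, which enforces $\|\bar g_i\|_2 \le 1$; combined with the WGAN gradient-penalty assumption, this gives $L_2$-sensitivity $S_f = 1$ per example. The noise added in Line~15 is $2\sigma \mathcal{N}(0,\mathbb{I})$, so the effective noise multiplier relative to sensitivity $1$ is $2\sigma$. Plugging $S_f = 1$ and $\sigma \leftarrow 2\sigma$ into Lemma~\ref{lemma:sgm} for the case $0<r<1$ yields the per-release RDP cost
\begin{equation*}
R_{2\sigma, B/|D|}(\alpha) = \sum_{k=0}^{\alpha} \binom{\alpha}{k}\Bigl(1-\tfrac{B}{|D|}\Bigr)^{\alpha-k}\Bigl(\tfrac{B}{|D|}\Bigr)^{k}\exp\!\Bigl(\tfrac{\alpha^2-\alpha}{8\sigma^2}\Bigr),
\end{equation*}
which already matches the summand in the theorem.

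Third, I would count the number of such releases and compose. Let $T$ denote the total number of inner iterations across all epochs (i.e., $T = E\cdot |D|/B$). Generators are touched only when $t$ falls in an interval of $T_G$, so each of the two generators fires $\lceil T/T_G\rceil$ noisy-gradient releases, giving $2\lceil T/T_G\rceil$ SGM invocations in total (the data generator and the mask generator are independent releases because they use fresh independent Gaussian noise and can be analyzed separately by RDP composition, even though they share the batch). Adaptive composition of RDP at order $\alpha$ is additive, so the total cost is $2\lceil T/T_G\rceil \cdot R_{2\sigma, B/|D|}(\alpha)$, which is exactly the claimed bound.

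The main obstacle I anticipate is a conceptual one rather than a computational one: justifying why the discriminator updates in Line~10 — which directly read the (non-noised) real batch — do not incur any additional RDP cost. I would address this with a post-processing argument combined with adaptive SGM composition: view the entire trajectory up to the $j$-th generator release as a function whose only data-dependent outputs so far are the previous $j-1$ noisy gradients, and argue that the $j$-th release is an SGM on a freshly subsampled batch conditioned on that history. A secondary subtlety is that Algorithm~\ref{algo:misgan} samples batches \emph{without replacement} within an epoch, whereas Lemma~\ref{lemma:sgm} is stated for Poisson subsampling; I would note that this is the standard relaxation used in the DP-SGD literature and that the stated bound is an upper bound on the true RDP cost under without-replacement batching for a single pass, with the same relaxation propagating through multiple epochs.
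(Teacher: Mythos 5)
Your proposal is correct and follows essentially the same route as the paper's proof: identify each noisy generator-gradient release as a sampled Gaussian mechanism with rate $B/|D|$, apply Lemma~\ref{lemma:sgm}, and compose the $2\lceil T/T_G\rceil$ releases additively in RDP. The only cosmetic difference is the sensitivity bookkeeping — the paper takes $S_f \le 2C = 2$ via the reverse triangle inequality with noise multiplier $\sigma$, whereas you take per-example sensitivity $1$ with effective multiplier $2\sigma$; both yield the same $\exp\bigl(\frac{\alpha^2-\alpha}{8\sigma^2}\bigr)$ term, and your added remarks on discarding the discriminators and on the Poisson-versus-without-replacement subsampling relaxation make explicit two points the paper only addresses in the surrounding prose.
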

\begin{proof}
Let the gradient clipping procedure of the discriminator in DP-MisGAN be $f = g/\max(1, \|g\|_2/C)$. The sensitivity $S_f$ can be thus be derived by the reversed triangle inequality. 
\begin{equation*}
    S_f = \max_{D, D'} \| f(D) - f(D') \| \leq 2C \leq 2 
\end{equation*}
The last inequality follows as we set $C = 1$ in DP-MisGAN. Each SGM in DP-MisGAN (Algorithm~\ref{algo:misgan}) has sampling probability of $\frac{B}{|D|}$ and thus the RDP cost of each SGM can be derived using Lemma~\ref{lemma:sgm} as $\sum_{k=0}^{\alpha} \binom{\alpha}{k}(1-\frac{B}{|D|})^{\alpha-k} \frac{B}{|D|}^{k}\exp(\frac{\alpha^2 - \alpha}{8\sigma^2})$.
DP-MisGAN consists of two generators that are updated every generator interval $T_G$ each. Therefore, using the composition property of RDP, we can compose the costs for both the discriminators in DP-MisGAN as $R(\alpha) = 2\lceil\frac{T}{T_G}\rceil \sum_{k=0}^{\alpha} \binom{\alpha}{k}(1-\frac{B}{|D|})^{\alpha-k} \frac{B}{|D|}^{k}\exp(\frac{\alpha^2 - \alpha}{8\sigma^2})$.
\end{proof}
By the tail bound property of RDP~\cite{mironov2017renyi}, we can convert the RDP cost of DP-MisGAN to $(\epsilon,\delta)$-DP, where $\epsilon$ is computed by 
\begin{equation}
\epsilon(\delta) = \min_{\alpha} R(\alpha) + \frac{\log(1/\delta)}{\alpha-1},    
\end{equation}
for a given $\delta$. In practice, given the parameters $\epsilon, \delta, B, T, T_G$, the $\sigma$ is calculated using Theorem~\ref{thm:dpmisgan_rdp} and the order $\alpha$ is usually searched within a range of user input values~\cite{waites2019pyvacy}.

\fi

We expect DP-MisGAN to perform better than the naive GAN approaches because it learns from both the complete rows as well as the incomplete rows of the dataset. Furthermore, as DP-MisGAN learns the missing data pattern of the incomplete dataset, we anticipate that it will capture more information in complex MAR and MNAR missing mechanisms.

\revision{
\emph{Flexibility:} The framework of DP-MisGAN can be used for any GAN method. The core idea of the change is to encompass two discriminator-generator pairs for learning missing data and synthetic data generation. However, to achieve better privacy accounting, it is important to use the GS approach with a WGAN structure and discard the discriminator after training. 
}

\begin{algorithm}[t]
    \begin{algorithmic}[1]
        \REQUIRE Incomplete dataset $D$, Attributes $\mathcal{A}$, Privacy budget $\epsilon_1, \epsilon_2$
        \STATE Initialize Bayesian network $B$ of degree $k$ and $V = \phi$
        \STATE Sample $X_1$ from $\mathcal{A}$ and add $(X_1, \phi)$ to $B$; add $X_1$ to $V$
        \FOR{$i = 2 \dots |\mathcal{A}|$}
            \STATE Initialize $\Omega = 0$
            \STATE For each $X \in \mathcal{A}\setminus V$ and each $\Psi \in {\binom{V}{k}}$; add $(X, \Psi)$ to $\Omega$
            \STATE Select a pair $(X_i, \Psi_i)$ from $\Omega$ with maximal mutual information in complete rows for attributes $X_i$ in D using exponential mechanism of budget $\epsilon_1/|\mathcal{A}|$
            \STATE Add $(X_i, \Psi_i)$ to $B$; add $X_i$ to $V$
        \ENDFOR
        \STATE Initialize synthetic dataset $D^*$
        \FOR{$i = 1 \dots |\mathcal{A}|$}
            \STATE Compute distribution from non-missing values $\Pr[X_i, \Psi_i]$ from the complete rows of $X_i$ in $D$ 
            \STATE Learn $\Pr^*[X_i, \Psi_i]$ with Laplace mechanism at budget $\epsilon_2$
            \STATE Set negative values to $0$ and normalize
            \STATE Sample from $\Pr^*[X_i, \Psi_i]$ and add to $D^*$
        \ENDFOR
        \STATE Return $D^*$
        \caption{PrivBayes Enhanced (PrivBayesE)} 
        \label{algo:privbayese}
    \end{algorithmic}
\end{algorithm}

\stitle{Partial marginal observation-based adaptive recourse.}
This approach can be applied to algorithms that use low dimensional marginal queries~\revision{\cite{DBLP:conf/sigmod/ZhangCPSX14, DBLP:conf/icml/McKennaSM19, DBLP:conf/sigmod/QardajiYL14, DBLP:journals/tkde/XiaoWG11, mckenna2022aim, DBLP:books/daglib/0023091, DBLP:conf/ssdbm/PingSH17, DBLP:journals/pvldb/LiXZJ14}}. Instead of discarding all the partially missing rows, only the rows with missing cells in the queried attributes are removed. Such a strategy is most helpful when only a subset of attributes have missing data. For example, with MAR missing mechanism, partial marginal observation can be learned from all the non-missing columns.

\revision{\emph{Algorithm overview:}
We extend PrivBayes~\cite{DBLP:conf/sigmod/ZhangCPSX14} using this strategy and call it PrivBayes enhanced or PrivBayesE in short (Algorithm~\ref{algo:privbayese}). Both PrivBayes and PrivBayesE learn Bayesian networks of degree $k$ to know the correlated columns. The network $B$ is initialized by adding the first attribute in the attribute list $\mathcal{A}$. The vertices that been discovered so far are stored in the list $V$ (Lines 1-2). Next the algorithm loops over each attribute in $\mathcal{A}$ (Line 3) and generates $|V|$ choose $k$ sets appended with every attribute seen so far $\mathcal{A}\setminus V$ and stores in a list $\Omega$ (Lines 4-5). The mutual information values for each pair in $\Omega$ is computed and the best one chosen using exponential mechanism using privacy budget $\epsilon_1/\mathcal{A}$ is added to the network $B$ (Lines 6-7). PrivBayes/PrivBayesE then generate a synthetic dataset $D^*$ using this network $B$ in sequence (Line 10). Marginals are computed for each attribute $X_i$ with its most correlated attributes $\Psi_i$ using privacy budget $\epsilon/2$ and added to $D^*$ (Line 12-14).
}

\revision{
\emph{Highlights:}
In PrivBayesE, modifications are made to the dataset generation process. Each time a marginal query is made, PrivBayesE learns from all non-missing information of the attribute(s) (Line 11). This improves upon the complete row approach by discarding missing rows on a smaller set of attributes, rather than from the entire dataset. This is particularly advantageous for scenarios like missing completely at random (MCAR), where analyzing more data aids in better estimating the true distribution of marginals, and missing at random (MAR), where some marginals are completely available, allowing estimation based on complete data. The privacy analysis of PrivBayesE mirrors that of PrivBayes, as PrivBayesE does not introduce additional queries to the dataset.
}

\revision{\emph{Flexibility:} This enhancement can be applied as a wrapper to any method that makes use of low-dimensional marginals to generate synthetic data. In our paper, we choose PrivBayes as our baseline as it is the most simple and fundamental marginal based approach. 
\ifpaper
 In the full paper~\cite{full_paper}, we also include an additional experiment to enhance another marginal based approach AIM~\cite{mckenna2022aim}. 
\else
\fi
}

\stitle{Column-wise data generation-based adaptive recourse}. 
This approach can be applied to any algorithm that uses column-wise intermediate models to learn the data distribution. In such algorithms, a sequence of attributes is decided, and starting with the second attribute in sequence, a model is learned to predict the current attribute using previously learned ones.

\revision{
\emph{Algorithm overview:}} We extend Kamino~\cite{ge2021kamino} using this strategy and call in Kamino impute or KaminoI in short (Algorithm ~\ref{algo:kaminoi}). \revision{ Kamino/KaminoI starts with deciding a sequence of attributes based on a given denial constraints $\Psi$ (Line 1). The distribution of the first attribute in the computed sequence is learnt using all the non-missing cells (Line 3). This computed distribution is noised (Line 4) and values are sampled to populate the synthetic dataset (Line 5). For each new attribute $Y$ in the sequence, Kamino/KaminoI learns a private intermediate model which uses all previously visited attributes $X$ to predict the new attribute $Y$ (Lines 7-8). This intermediate model is used to generate the values for the attribute $Y$ in the synthetic data given the sampled values for $X$ (Line 10).
}

\revision{
\emph{Highlights:}
 KaminoI includes an additional imputation step to impute values of attribute $Y$ (Line 9), utilizing the same intermediate model that was trained to predict attribute $Y$ given attribute set $X$. This enhancement ensures that no missing values are discarded; instead, they are used to train intermediate models. It is worth noting that the sequence $S$ significantly influences KaminoI's imputation process. In Kamino, the sequence is generated considering input constraints $\Psi$. However, if an attribute with many missing values is positioned early in sequence $S$, its imputation may be less effective. To optimize imputation, attributes not in constraints $\Psi$ can be ordered based on decreasing percentage of missing values. If available, clues from the missing mechanism can also help determine the sequence. For instance, with the MAR mechanism predicting missing IQ based on age, the age attribute can precede the age column in $S$. To ensure fair comparison, the same sequence as Kamino is used for KaminoI in the experimental section.
}

\begin{algorithm}[t]
	\begin{algorithmic}[1]
        \REQUIRE Incomplete dataset $D$, Attributes $\mathcal{A}$, Constraints $\Psi$, Privacy budget $\epsilon_1, \epsilon_2$
        \STATE Build sequence $S$ of attributes $\mathcal{A}$ using constraints $\Psi$
        \STATE Initialize synthetic dataset $D^*$
        \STATE Compute distribution of first attribute $H = \Pr[S_1]$ using all non-missing values
        \STATE Generate DP $H^*$ by adding Gaussian noise of budget $\epsilon_1$
        \STATE Sample from $H^*$ to populate $D^*[S_1]$
        \FOR{$i=2 \dots |\mathcal{A}|$}
            \STATE Load training features $X = S_{:j}$, and target label $Y = S_j$
            \STATE Train model $M = \theta(X,Y)$ privately with budget $\frac{\epsilon_2}{|\mathcal{A}|-1}$ 
            \STATE Impute missing values in dataset $D[S_j]$ using $M$
            \STATE Predict synthetic values $\Omega = M(D^*[S_{:j}])$ and fill $D^*[S_j] = \Omega$
        \ENDFOR
        \STATE Return $D^*$
        \caption{Kamino Impute (KaminoI)} 
        \label{algo:kaminoi}
    \end{algorithmic}
\end{algorithm}

\revision{
\emph{Flexibility:} 
This enhancement is applicable to any method that iterates over the columns of the dataset. Such an algorithmic architecture allows for value imputation as learning progresses. Each time a model predicts the next attribute, it can also impute missing values using the same model, incurring no additional privacy costs. This strategy is particularly effective when missing data correlates with other attributes in the dataset, such as in the missing at random (MAR) scenario. While there's only one known private approach employing this strategy \cite{ge2021kamino}, several non-private approaches exist \cite{xu2018synthesizing, riverssynthetic}. Although PrivBayesE can use learned distributions to impute missing values for all visited attributes (Line 12 in Algoirthm~\ref{algo:privbayese}), it only utilizes low-way marginals compared to KaminoI's larger models, which benefit more from imputation. Hence, we did not include this imputation step in PrivBayesE.}

\section{Privacy for Groundtruth Data}\label{sec:amplification}

In this section, we shift our focus to exploring the privacy implications for the ground truth data, which we approach as a distinct problem that closely relates to Problem~\ref{prob1}. \revision{We first find that the solutions for  Problem~\ref{prob1} do not always offer sufficient privacy for Problem~\ref{prob2}, except when the probability of missing values in a row is independent of the other rows in the dataset (Section~\ref{sec:relateprob1}).}
Furthermore, we demonstrate that certain missing mechanisms, such as MCAR, allow a tighter privacy analysis \revision{for Problem~\ref{prob2} when applying the same solution from Problem~\ref{prob1} in Section~\ref{sec:amplificationmcar}}. 

\eat{
Recall that in Section~\ref{sec:prob}, the problem statement guarantees DP for the input dataset with missing values $\bar{D}$. Our solutions discussed so far deal in preserving the privacy with respect to this dataset $\bar{D}$. This dataset however is generated from a complete dataset $D$ by applying a missing mechanism $M_\Phi$. In this section, we argue that the missing mechanism $M_\Phi$ is also private and might leak unintended information about the groundtruth dataset. For example, in a country's census data, we would like to protect the fact that the people of a particular province did not share their incomes. To defend against such attacks, we need to consider protecting the missing mechanism as well as the synthetic data generation mechanism. We redefined problem statement to protect the missing mechanism. 

\begin{problem}\label{prob2}
Consider a ground truth data $\bar{D}$ of $n$ rows collected from $n$ individuals, a missing mechanism $M_{\Phi}:\mathcal{D}\rightarrow \mathcal{D}$ is involved that takes in $\bar{D}$ and outputs a dataset $D$ of $n$ rows but with missing values.  A trusted data curator uses this dataset $D$ as an input and aims to generate a synthetic data $D^*$ of $n$ rows with a mechanism $M:\mathcal{D}\rightarrow \mathcal{D}$ such that $D^*$ share similar statistics and correlations as the ground truth data $\bar{D}$ and $M\circ M_{\Phi}$ offers $(\epsilon,\delta)$-DP to the ground truth data $D$. 
\end{problem}

Our new problem~\ref{prob2} is similar to our former one except the final privacy guarantee. In this new problem, we aim to preserve DP on the composite mechanism $M\circ M_\phi$. We observe that the structure of $M_\phi$ is crucial to this problem. In this section, we study different structure of $M_\phi$ and show how they can be utilized to solve this new problem. We start by looking into a simple baseline solution where we assume that the missing mechanism observes missing rows independently at random. This baseline solution works only if each row in the dataset has equal probability of missing. Then, we study other missing mechanisms and look closely into how they may affect the overall privacy guarantees. As we will also see later, some of these mechanisms also allow us to amplify privacy guarantees using missing data as a sampling mechanism.  
}

\subsection{Relationship to Problem~\ref{prob1}}\label{sec:relateprob1}

We have proposed multiple synthetic data generation algorithms $M$ which train on the incomplete dataset $D$ and achieve $(\epsilon, \delta)$-DP as solutions to problem~\ref{prob1}. However, this incomplete dataset $D$ results from a missing mechanism $M_\phi$ on the ground truth dataset $\bar{D}$. In problem~\ref{prob2}, we study the same mechanisms $M$ which train on \revision{$D$} but discuss their privacy impact on $\bar{D}$. We do so by combining the missing mechanism $M_\phi$ and the synthetic data generation process $M$ as a composite mechanism $M \circ M_\phi$.

It is important to note that just because $M$ is a DP mechanism for incomplete data $D$, it does not necessarily mean that $M\circ M_\phi$ is DP for the ground truth data $\bar{D}$. 
\begin{figure}
        \centering
        \includegraphics[width=\linewidth]{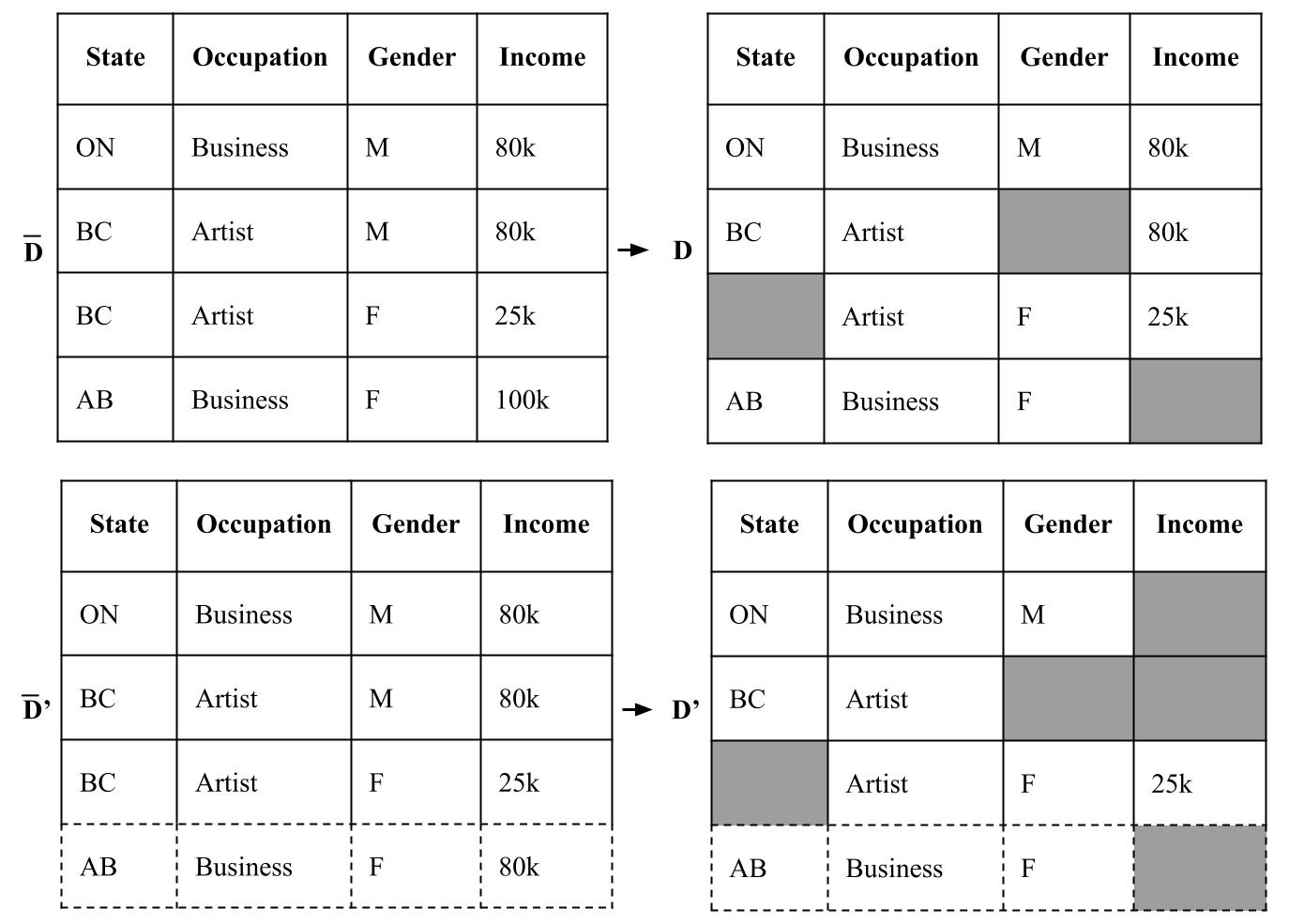}
        \caption{Example of private incomplete datasets generated from two neighbouring ground truth datasets. Gray denotes missing cells and dotted lines represent the differing row.}
        \label{fig:mnarexample}
    \end{figure}

\begin{example}

In Figure~\ref{fig:mnarexample}, consider neighboring ground truth data $\bar{D}$ and $\bar{D'}$ differ in the last row's income value (100k v.s. 80k). Their income columns have an MNAR missing mechanism that hides the highest income value and their corresponding incomplete data $D$ and $D'$ then differ more than one row. This means that an $\epsilon$-DP mechanism for incomplete data cannot guarantee the same level of privacy for the ground truth data.
\end{example}

The example above does not provide a strong privacy guarantee for the ground truth data because the probability of a row having missing values depends on the values of other rows. However, we can show that if $M_\phi$ enforces independent probabilities for each row to have missing values, a strong privacy guarantee applies to the ground truth data.

\begin{theorem}\label{thm:independent_randomness}
Let the missing mechanism $M_{\Phi}$ have independent randomness to hide the values of each row and $D=M_\phi(\bar{D})$. If $M$ achieves  $(\epsilon,\delta)$-DP for $D$, then $M\circ M_{\Phi}$ satisfies ($\bar{\epsilon},\bar{\delta}$)-DP for $\bar{D}$, where $\bar{\epsilon}\leq \epsilon, \bar{\delta}\leq \delta$.
\end{theorem}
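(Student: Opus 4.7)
The plan is to exhibit a coupling of the randomness of $M_\Phi$ across the two neighbouring executions so that the incomplete datasets fed into $M$ are neighbours (differ in at most one row) with probability one, and then invoke the DP guarantee of $M$ pointwise and integrate.

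Let $\bar{D}$ and $\bar{D}'$ be neighbouring ground truth databases differing only in row $i$. Since $M_\Phi$ uses independent randomness across rows, we may write $M_\Phi$ as acting row by row with a randomness vector $R=(R_1,\dots,R_n)$ whose components are mutually independent, i.e.\ $M_\Phi(\bar{D};R)_j = f(\bar{D}_j;R_j)$ for every $j$. Draw a single $R$ and feed it to both executions: then for every $j\neq i$ we have $f(\bar{D}_j;R_j)=f(\bar{D}'_j;R_j)$ deterministically, so the coupled incomplete datasets $D:=M_\Phi(\bar{D};R)$ and $D':=M_\Phi(\bar{D}';R)$ differ in at most row $i$ for every realisation of $R$. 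In other words, conditional on $R$, $(D,D')$ is a pair of neighbouring databases in the sense required by the definition of $(\epsilon,\delta)$-DP.

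Next I would apply the DP guarantee of $M$ conditionally. Fix any measurable event $Z\subseteq\mathrm{Range}(M)$ and any realisation $R=r$. Since $D$ and $D'$ are neighbours and $M$ is $(\epsilon,\delta)$-DP on incomplete data,
\begin{equation*}
\Pr[M(D)\in Z\mid R=r] \;\leq\; e^{\epsilon}\,\Pr[M(D')\in Z\mid R=r] + \delta,
\end{equation*}
where the probability is taken over the internal randomness of $M$, which is independent of $R$. Taking expectation over $R$ on both sides and using linearity of expectation (and the fact that the additive $\delta$ term does not depend on $r$) yields
\begin{equation*}
\Pr[(M\circ M_\Phi)(\bar{D})\in Z] \;\leq\; e^{\epsilon}\,\Pr[(M\circ M_\Phi)(\bar{D}')\in Z] + \delta,
\end{equation*}
which is exactly $(\bar\epsilon,\bar\delta)$-DP for the ground truth data with $\bar\epsilon\leq\epsilon$ and $\bar\delta\leq\delta$.

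The conceptually delicate step is the first one: justifying that independence of the per-row randomness lets us pair the two executions on a common probability space so that the coupled outputs are almost-surely neighbours. Once that coupling is in place, the rest is a standard ``condition, apply DP, take expectation'' argument and no new privacy accounting is needed; in particular, this already shows that any mechanism designed for Problem~\ref{prob1} transfers to Problem~\ref{prob2} without degradation whenever $M_\Phi$ is row-independent (which covers MCAR and the usual row-wise MAR/MNAR models in our setup). Any tightening below $\epsilon$ (strict inequality $\bar\epsilon<\epsilon$) would have to exploit additional structure of $M_\Phi$, such as a subsampling interpretation, which is the subject of Section~\ref{sec:amplificationmcar}.
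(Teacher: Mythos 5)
Your proof is correct and establishes the same bound as the paper's, but it is organized differently. The paper works entirely with the output distribution of $M_\Phi$: it expands $\Pr[O\mid\bar D]=\sum_{D}\Pr_M[O\mid D]\,\Pr_{M_\Phi}[D\mid\bar D]$, factors $\Pr_{M_\Phi}[D\mid\bar D]=\prod_l\Pr[D_l\mid\bar D]$ by row-independence, partitions the candidate incomplete datasets into groups that agree on all rows except row $i$, and then bounds a mixture over $D_i\sim\Pr[\cdot\mid\bar D]$ against a mixture over $D_i'\sim\Pr[\cdot\mid\bar D']$, using the fact that any two members of the same group are neighbours. Your coupling replaces that final mixture-versus-mixture comparison by also sharing the row-$i$ randomness, which yields a one-to-one pairing of outcomes and lets you apply the DP inequality pointwise before integrating over $R$; this is a cleaner bookkeeping of the same underlying conditioning idea and avoids the (slightly glossed-over) step of comparing two different mixtures. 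One caveat you should state explicitly, although the paper leaves it equally implicit: independence of the per-row randomness alone does not give $M_\Phi(\bar D;R)_j=f(\bar D_j;R_j)$ --- you also need \emph{row-locality}, i.e.\ row $j$'s missingness pattern must be a randomized function of row $j$'s own values only. Without it, sharing $R$ does not force the off-$i$ rows of the two coupled outputs to agree; note that the paper's own proof needs exactly the same locality when it substitutes $\Pr_{M_\Phi}[D_l\mid\bar D']$ for $\Pr_{M_\Phi}[D_l\mid\bar D]$ for $l\neq i$. This is precisely the assumption that rules out mechanisms such as the ``hide the largest income'' MNAR example preceding the theorem.
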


\ifpaper

\else
\begin{proof} (sketch) 
As $M_{\Phi}$ has independent randomness to hide values of each row, given the ground truth data $\bar{D}$ and a possible incomplete dataset $D$, 
we have $\Pr[D|\bar{D}] =\prod_l \Pr[D_l|\bar{D}]$, where $D_l$ refers to the value taken by the $l$th row.
Consider neighboring groundtruth datasets $\bar{D}$ and $\bar{D}'$ differ in the $i$th row and any possible output $O$ of $M\circ M_{\Phi}$. Let $\mathcal{D}$ be all possible incomplete datasets that can be outputted by $M_{\Phi}$ from $\bar{D}$ or $\bar{D}'$. We partition $\mathcal{D}$ into $\{\cdots, \mathcal{D}_j, \cdots\}$ such that all datasets with the same row values except the $i$th row are in the same group $\mathcal{D}_j$. Hence, for all $D\in \mathcal{D}_j$, they have the same probability for $\prod_{l\neq i}\Pr_{M_{\Phi}}[D_l|\bar{D}]$.
Now we have

\begin{eqnarray*}
&&    \Pr[O|\bar{D}] \\
&=& \sum_{\mathcal{D}_j}\sum_{D\in \mathcal{D}_j}\Pr_M[O|D]\Pr_{M_{\Phi}}[D|\bar{D}] \\
&=& \sum_{\mathcal{D}_j}\sum_{D\in \mathcal{D}_j}(\Pr_M[O|D] \Pr_{M_{\Phi}}[D_i|\bar{D}] \cdot \prod_{l\neq i}\Pr_{M_{\Phi}}[D_l|\bar{D}]) \\
&=&   \sum_{\mathcal{D}_j} (\prod_{l\neq i}\Pr_{M_{\Phi}}[D_l|\bar{D}] \cdot \sum_{D\in \mathcal{D}_j}\Pr_M[O|D] \Pr_{M_{\Phi}}[D_i|\bar{D}] ) \\
&\leq& \sum_{\mathcal{D}_j} (\prod_{l\neq i}\Pr_{M_{\Phi}}[D_l|\bar{D}'] \cdot \sum_{D'\in \mathcal{D}_j} (e^{\epsilon}\Pr_M[O|D']+\delta) \Pr_{M_{\Phi}}[D_i'|\bar{D}'] ) \\
 &=& e^{\epsilon}\sum_{\mathcal{D}_j}\sum_{D'\in \mathcal{D}_j}\Pr_M[O|D']\Pr_{M_{\Phi}}[D'|\bar{D}] +\delta \sum_{\mathcal{D}_j}\sum_{D'\in \mathcal{D}_j}\Pr_{M_{\Phi}}[D'|\bar{D}'] \\
  &=& e^{\epsilon}\sum_{\mathcal{D}_j}\sum_{D'\in \mathcal{D}_j}\Pr_M[O|D']\Pr_{M_{\Phi}}[D'|\bar{D}] +\delta
\end{eqnarray*}
The inequality above is based on for any neighbors $D$ and $D'$, we have $\Pr_M[O|D] \leq e^{\epsilon}\Pr_M[O|D']+\delta$ and $\sum_{D\in \mathcal{D}_j}\Pr_{M_{\Phi}}[D_i|\bar{D}] =\sum_{D'\in \mathcal{D}_j}\Pr_{M_{\Phi}}[D'_i|\bar{D}'] =1$.
\end{proof}
\fi 

\eat{
\begin{eqnarray}
\frac{\Pr[O|\bar{D}]}{\Pr[O|\bar{D}']} 
&=& \frac{\sum_{\mathcal{D}_j}\sum_{D\in \mathcal{D}_j}\Pr_M[O|D]\Pr_{M_{\Phi}}[D|\bar{D}]}{\sum_{\mathcal{D}_j}\sum_{D'\in \mathcal{D}_j}\Pr_M[O|D']\Pr_{M_{\Phi}}[D'|\bar{D}']} \nonumber \\
&\leq& \max_{\mathcal{D}_j} \frac{\sum_{D\in \mathcal{D}_j}\Pr_M[O|D]\Pr_{M_{\Phi}}[D|\bar{D}]}{\sum_{D'\in \mathcal{D}_j}\Pr_M[O|D']\Pr_{M_{\Phi}}[D'|\bar{D}']} \nonumber \\
&=& \max_{\mathcal{D}_j} 
\frac{\sum_{D\in \mathcal{D}_j}\Pr_M[O|D] \Pr_{M_{\Phi}}[D_i|\bar{D}] \Pi_{l\neq i}\Pr_{M_{\Phi}}[D_l|\bar{D}]
}{\sum_{D'\in \mathcal{D}_j}\Pr_M[O|D']\Pr_{M_{\Phi}}[D'_i|\bar{D}'] \Pi_{l\neq i}\Pr_{M_{\Phi}}[D'_l|\bar{D}']} \nonumber \\
&=& \max_{\mathcal{D}_j} 
\frac{
\sum_{D\in \mathcal{D}_j}\Pr_M[O|D] \Pr_{M_{\Phi}}[D_i|\bar{D}] 
}{
\sum_{D'\in \mathcal{D}_j}\Pr_M[O|D']\Pr_{M_{\Phi}}[D'_i|\bar{D}']
}  \nonumber  \\
&\leq& \max_{\mathcal{D}_j} 
\frac{
\sum_{D\in \mathcal{D}_j}\Pr_M[O|D] \Pr_{M_{\Phi}}[D_i|\bar{D}] 
}{
\sum_{D'\in \mathcal{D}_j}\Pr_M[O|D']\Pr_{M_{\Phi}}[D'_i|\bar{D}']
}   \nonumber
\end{eqnarray}
Proof for $\delta=0$.
Let $p_{min} = \min_{D\in \mathcal{D}_j} \Pr_{M}[O|D]>0$. As $M$ is $(\epsilon,0)$-DP,  
for any $D\in \mathcal{D}_j$, $\Pr_{M}[O|D]\leq e^{\epsilon} p_{min}$. 
As $\sum_{D\in \mathcal{D}_j} \Pr_{M_{\Phi}}[D_i|\bar{D}]=\sum_{D'\in \mathcal{D}_j} \Pr_{M_{\Phi}}[D'_i|\bar{D}']=1$, then we can have
\begin{eqnarray}
\frac{\Pr[O|\bar{D}]}{\Pr[O|\bar{D}']} 
&\leq & 
\max_{\mathcal{D}_j} 
\frac{
\sum_{D\in \mathcal{D}_j} (e^{\epsilon}p_{min}) \Pr_{M_{\Phi}}[D_i|\bar{D}] 
}{
\sum_{D'\in \mathcal{D}_j}p_{min}\Pr_{M_{\Phi}}[D'_i|\bar{D}']
}  \nonumber \\
&=&
\max_{\mathcal{D}_j} 
\frac{
\sum_{D\in \mathcal{D}_j} (e^{\epsilon}p_{min}) \Pr_{M_{\Phi}}[D_i|\bar{D}] 
}{
\sum_{D'\in \mathcal{D}_j}p_{min}\Pr_{M_{\Phi}}[D'_i|\bar{D}']
}  \nonumber \\
&=&
\max_{\mathcal{D}_j} e^{\epsilon}  \nonumber\\
&=&
 e^{\epsilon}
  
\nonumber \\
\end{eqnarray} 
}

Theorem~\ref{thm:independent_randomness} says that the privacy bound for the ground truth dataset is lesser than equal to the bound of the incomplete dataset for a synthetic data generation algorithm if each row in the ground truth dataset has an independent probability of having missing values. 
Next we illustrate how to obtain a tighter privacy bound for the missing completely at random (MCAR) mechanism.

\subsection{Privacy Amplification Due To MCAR} \label{sec:amplificationmcar}

Missing completely at random (MCAR) enforces an independent probability of having missing rows for each attribute in the dataset. We use these probabilities to tighten the privacy bounds for ground truth data when the missing mechanism is MCAR. The technique we developed is inspired by the seminal work of privacy amplification due to sampling~\cite{balle2018privacy}. The premise of privacy amplification by subsampling is that we run a DP algorithm on some random subset of the data (e.g., sampled Gaussian mechanism, DP-SGD). The subset introduces additional uncertainty, which benefits privacy. 
Privacy amplification due to subsampling has been shown to work for many sampling methods (e.g., Poisson sampling, sampling with/without replacement) and for neighboring datasets which may differ with replacement or substitution. Privacy amplification by subsampling theorem~\ref{thm:samplingamplification} makes this intuition precise.

\begin{theorem}[Sampling Amplification Theorem~\cite{balle2018privacy, steinke2022composition}] \label{thm:samplingamplification}
    Consider an algorithm $M: \mathcal{D} \rightarrow \mathcal{D}$ that satisfies $(\epsilon, \delta)$-DP and a sampling mechanism $S(D)$ that samples a random subset $U$ from dataset $D$ of $n$ samples. If $p = max_{i \in [n]} \Pr_U [i \in U]$, then the composite mechanism $M(S(D))$ offers $(\epsilon', \delta')$-DP where $\epsilon' = \log(1+ p(e^\epsilon -1))$, $\delta' = p\delta$. For small values of $\epsilon$, we have $\epsilon' = \log(1+ p(e^\epsilon -1)) \approx p\epsilon$.
\end{theorem}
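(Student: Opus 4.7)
The plan is a conditioning argument on whether the sampling mechanism $S$ selects the row at which the neighboring datasets differ. I will work under the add/remove adjacency, which makes the decomposition cleanest (the substitution case follows by a similar but slightly more delicate coupling). Fix neighbors $D$ and $D' = D \setminus \{x_{i^*}\}$, and any measurable event $Z \subseteq \mathrm{Range}(M)$. Let $q = \Pr_U[i^* \in U]$, which by definition satisfies $q \leq p$. Decomposing via the indicator $\mathbf{1}[i^* \in U]$ yields
\begin{equation*}
\Pr[M(S(D)) \in Z] = (1-q)\,\Pr[M(S(D)) \in Z \mid i^* \notin U] + q\,\Pr[M(S(D)) \in Z \mid i^* \in U].
\end{equation*}

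The two observations driving the proof are: (i) conditioned on $i^* \notin U$, the random subset $S(D)$ has the same distribution as $S(D')$, since the only element distinguishing $D$ from $D'$ has been excluded; and (ii) conditioned on $i^* \in U$, the subsets $S(D)$ and $S(D')$ differ by exactly the row $x_{i^*}$, so they are themselves neighbors, and the $(\epsilon, \delta)$-DP guarantee of $M$ applies to give $\Pr[M(S(D)) \in Z \mid i^* \in U] \leq e^{\epsilon}\,\Pr[M(S(D')) \in Z] + \delta$. Substituting both into the decomposition and writing $P_0 = \Pr[M(S(D')) \in Z]$ produces
\begin{equation*}
\Pr[M(S(D)) \in Z] \leq (1-q) P_0 + q(e^\epsilon P_0 + \delta) = \bigl(1 + q(e^\epsilon - 1)\bigr) P_0 + q\delta.
\end{equation*}
Since $q \leq p$ and $e^\epsilon - 1, \delta \geq 0$, the right-hand side is at most $e^{\epsilon'} P_0 + \delta'$ for $\epsilon' = \log(1 + p(e^\epsilon - 1))$ and $\delta' = p\delta$, as required. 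A symmetric rearrangement using the reverse direction of the DP of $M$ yields the matching inequality with $D$ and $D'$ swapped; the small algebraic step there is verifying $e^\epsilon / (q + (1-q)e^\epsilon) \leq 1 + q(e^\epsilon - 1)$, which reduces to the non-negative expression $q(1-q)(e^\epsilon - 1)^2$.

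The main obstacle is justifying observation (i): that the conditional law of $S(D)$ given $i^* \notin U$ agrees with the unconditional law of $S(D')$. This is immediate for Poisson sampling (independent Bernoulli inclusions) and for uniform subsampling of fixed size without replacement, which are the canonical settings for the theorem; however, for a general sampling scheme some exchangeability/coupling assumption on $S$ is implicit in the statement and in the very meaning of the row-wise inclusion probability $p$. A secondary source of slack is the replacement of the index-specific $q$ by the worst-case $p$, which reflects the adversary's freedom to choose the worst $i^*$ and is necessary for a bound that holds uniformly over neighbor pairs.
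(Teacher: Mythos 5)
The paper does not prove this theorem at all --- it is imported verbatim from the subsampling-amplification literature (Balle et al.\ and Steinke), so there is no in-paper argument to compare against. Your proposal is the standard and correct derivation for the add/remove adjacency with Poisson-style (independent per-row) inclusion: the decomposition on the event $i^* \in U$, the identification of the conditional law of $S(D)$ given $i^* \notin U$ with the law of $S(D')$, and the algebra $(1-q)P_0 + q(e^\epsilon P_0 + \delta) = (1+q(e^\epsilon-1))P_0 + q\delta$ are all right, and your check that the reverse direction reduces to the non-negativity of $q(1-q)(e^\epsilon-1)^2$ is also correct. Two caveats are worth making explicit rather than leaving as asides. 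First, the paper's DP definition uses substitution neighbors (``differ in one row''), whereas your clean decomposition is for add/remove; the substitution case under general sampling genuinely requires the more delicate joint-convexity argument of Balle et al., so the hedge you offer is doing real work, not just being cautious. Second, observation (i) --- that conditioning on $i^* \notin U$ leaves the law of the sample unchanged --- is false for general sampling schemes (e.g.\ fixed-size sampling without replacement changes the conditional inclusion probabilities of the other rows), so the theorem as you have proved it is really a theorem about product-form sampling; since the paper applies it to MCAR, which is exactly independent per-row/per-cell Bernoulli masking, this restriction is harmless in context, but it should be stated as a hypothesis rather than discovered as an obstacle. It is also worth noting that the paper's own proof of Theorem~\ref{thm:independent_randomness} uses essentially the same partition-and-condition technique, so your argument is stylistically of a piece with what the authors do elsewhere.
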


In our missing data context, we note that for synthetic data generation algorithms that train on incomplete data, many rows are naturally discarded due to the presence of missing cells. We exploit this natural throwing out of rows as a sampling mechanism and show that it can be used to amplify privacy.  Recall from Section~\ref{sec:preliminaries:missing} that MCAR enforces independent probability of having missing cells in the dataset for each attribute $\phi_1, \cdots, \phi_k$. We use these probabilities to propose our amplification results in Proposition~\ref{prop:mcar_amplification}.

\begin{proposition}\label{prop:mcar_amplification}
Consider an MCAR mechanism $M_{\Phi}:\mathcal{D} \rightarrow \mathcal{D}$ with missing probabilities $\{\phi_{1}, \ldots, \phi_{k}\}$ over attributes $\{A_1,\ldots A_k\}$ of the input ground truth data $\bar{D}$ and outputs an incomplete dataset $D$. If an algorithm $M: \mathcal{D} \rightarrow \mathcal{D}$ takes in rows in $D$ which have no missing values on attributes $\mathcal{A}_{M}\subseteq \{A_1,\ldots,A_k\}$, then 
$M\circ M_{\Phi}$ offers $(p\epsilon,p\delta)$-DP to the ground truth data $\bar{D}$ where
$    p_{\mathcal{A}_M} = \prod_{A_i\in \mathcal{A}_M}(1-\phi_{i})$. 
We call $\mathcal{A}_{M}$ an amplification attribute set for $M$ and $p_{\mathcal{A}_M}$ the amplification factor of $\mathcal{A}_{M}$.
\end{proposition}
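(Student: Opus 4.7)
The plan is to reduce the claim to Theorem~\ref{thm:samplingamplification} by interpreting the composite mechanism $M \circ M_\Phi$ as the DP mechanism $M$ applied to a random subsample of the ground truth $\bar{D}$, where the randomness of the subsample comes entirely from the MCAR missingness on the amplification attribute set $\mathcal{A}_M$.

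First, since $M$ only consumes those rows of $D$ that have no missing values on attributes in $\mathcal{A}_M$, I would define a sampling map $S$ that selects from $\bar{D}$ the subset $U = \{\,\bar{D}_i \mid m_{ij}=0 \text{ for all } A_j \in \mathcal{A}_M\,\}$. The composition $M \circ M_\Phi$ then agrees in distribution with $M \circ S$ on $\bar{D}$, because all remaining entries of $D$ (those on attributes outside $\mathcal{A}_M$, or on rows not surviving the filter) are never read by $M$ and can be ignored without changing the output distribution.

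Next, I would compute the per-row inclusion probability under $S$. Under MCAR the indicator $m_{ij}$ is independent of the data and, in particular, independent across both attributes and rows, with $\Pr[m_{ij}=1]=\phi_j$. Hence for any fixed row $i$,
\[
\Pr[\,i \in U\,] \;=\; \prod_{A_j \in \mathcal{A}_M}\Pr[m_{ij}=0] \;=\; \prod_{A_j \in \mathcal{A}_M}(1-\phi_j) \;=\; p_{\mathcal{A}_M},
\]
and these inclusion events are mutually independent across rows $i$. Consequently $\max_i \Pr[i \in U] = p_{\mathcal{A}_M}$, placing $S$ squarely in the family of subsampling schemes to which Theorem~\ref{thm:samplingamplification} applies.

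Finally, applying Theorem~\ref{thm:samplingamplification} with sampling probability $p = p_{\mathcal{A}_M}$ to the $(\epsilon,\delta)$-DP mechanism $M$ shows that $M\circ S$, and hence $M \circ M_\Phi$, satisfies $(\log(1+p_{\mathcal{A}_M}(e^\epsilon-1)),\, p_{\mathcal{A}_M}\delta)$-DP on $\bar{D}$, which is approximately $(p_{\mathcal{A}_M}\epsilon,\, p_{\mathcal{A}_M}\delta)$-DP for small $\epsilon$, as claimed. The main obstacle I anticipate is a careful treatment of the neighboring relation: a single-row change in $\bar{D}$ must induce at most a single-row change in $U$ for the sampling amplification framework to apply cleanly. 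This follows because MCAR treats rows independently, so the survival decision for the differing row is independent of all other rows and its marginal distribution is identical under the two neighboring inputs, ensuring that $S$ behaves as a genuine row-subsampling step rather than a data-dependent transformation.
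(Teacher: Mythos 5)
Your proposal is correct and follows essentially the same route as the paper, whose own proof is the one-line observation that under MCAR a row survives the filter with probability $\prod_{A_i\in\mathcal{A}_M}(1-\phi_i)$, which is then plugged into Theorem~\ref{thm:samplingamplification}. Your elaboration --- casting the missingness filter as an explicit row-subsampling map with independent, data-independent inclusion events and checking that a neighboring change in $\groundtruth$ perturbs at most one row of the sampled set --- simply makes precise the steps the paper leaves implicit.
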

\proof
A row in MCAR has $\prod_{i=1}^{i=l}(1-\phi_{i})$ probability of having no missing values and plugging in to Theorem~\ref{thm:samplingamplification}. 
\qed

We note three important facts. First, if an algorithm $M$ takes in rows with no missing values over an attribute set $\mathcal{A}_M$, then $M$ also takes in rows with no missing values over an attribute set $\mathcal{A}'_M\subset \mathcal{A}_M$. In other words, if $\mathcal{A}_M$ is an amplification attribute set for $M$, then any subset of $\mathcal{A}_M$ is an amplification attribute set for $M$ with amplification factor greater than that of $\mathcal{A}_M$. Second, when $\mathcal{A}_M=\emptyset$, $p_{\mathcal{A}_M}=1$. Third and more importantly, as the dataset is read only once, each attribute can only be used once as an amplification factor.
We can now use Proposition~\ref{prop:mcar_amplification} and Theorem~\ref{thm:samplingamplification} in conjunction to show the privacy amplifications for the different algorithms we have discussed so far in our paper. 

\stitle{Use case 1: Privacy amplification for complete row only approach.}
Here we show how to apply Proposition~\ref{prop:mcar_amplification} to all complete row only approaches (PrivBayes, Kamino, and GAN-based approaches). As these approaches take as input all attributes, the probability of seeing a row without missing values is $\prod_{i=1}^{i=k}(1-\phi_i)$. The following example illustrates how this probability can be used 

to obtain a tighter privacy bound for the ground truth data. 
\begin{example}[MCAR amplification for complete row only approach]\label{example:usecase1}
Consider the incomplete dataset from Figure~\ref{fig:mnarexample}. Let's assume that the missing data comes from an MCAR mechanism where the missing probabilities are $\phi_{State} = \frac{1}{4}, \phi_{Occupation} = 0, \phi_{Gender} = \frac{1}{4}, \phi_{Income} = \frac{1}{4}$ . Given $4$ DP sub-algorithms $M_1, M_2, M_3, M_4$ that each offer DP guarantee to the incomplete dataset $D$ at budget $\frac{\epsilon}{4}$. $M_1$ computes the marginals of the complete rows over attribute <State>, $M_2$ over <Occupation>, $M_3$ over <Gender> and $M_4$ over <Gender, Income>. As all sub-algorithms take as input only the complete rows, using Proposition~\ref{prop:mcar_amplification}, the amplification is $\prod_i(1-\phi_i) = 0.421$ and using Theorem~\ref{thm:amplificationcomposition} the final privacy is $\bar{\epsilon} = 4 * 0.421\frac{\epsilon}{4} = 0.421\epsilon$. 
\end{example}

\stitle{Use case 2: Privacy amplification for partial marginal observation approach.}
For partial marginal observation methods (e.g. PrivBayesE), calculating the amplification privacy cost is more complex. These methods involve multiple low-dimensional marginals with overlapping attributes. To determine the overall amplification for such algorithms, it is necessary to calculate the amplification for each marginal and carefully compose them. The complexity of this calculation arises from the optimal selection of the amplification attribute set for each marginal, which maximizes amplification while ensuring that each attribute is used only once.  First, we consider a simple case that the amplification factors of all marginals are disjoint. In this scenario, we can compose the total privacy cost using Theorem~\ref{thm:amplificationcomposition} and demonstrate using Example~\ref{example:partialobservation}.

\begin{theorem}\label{thm:amplificationcomposition}
    Consider an MCAR mechanism $M_{\Phi}$, 
and a sequence of $j$ mechanisms $M_1, \ldots, M_j$ with DP guarantees of $\epsilon_1, \ldots, \epsilon_j$ to $D$ and amplification attribute set $\mathcal{A}_{M_1},\ldots, \mathcal{A}_{M_j}$ respectively. If their amplification attribute sets do not overlap, then 
these mechanisms offers DP to the ground truth data $\bar{D}$ at a cost of $\bar{\epsilon}=\sum_{i=1} p_{\mathcal{A}_{M_i} \epsilon_i }$. 
\end{theorem}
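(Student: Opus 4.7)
\proof (Proposal)
The plan is to leverage the independence structure built into MCAR: under MCAR the missing indicator for each attribute $A_\ell$ is drawn independently with probability $\phi_\ell$, so the joint missing mechanism $M_\Phi$ factorizes across attributes. I would first re-express $M_\Phi$ as a product of per-attribute sub-samplings $M_{\Phi,\ell}$, one for each $\ell\in\{1,\ldots,k\}$, and then regroup these factors according to the disjoint amplification sets $\mathcal{A}_{M_1},\ldots,\mathcal{A}_{M_j}$. Concretely, define $M_{\Phi,\mathcal{A}_{M_i}} = \prod_{A_\ell\in \mathcal{A}_{M_i}} M_{\Phi,\ell}$ and let $M_{\Phi,\text{rest}}$ collect the remaining per-attribute factors. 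Because the $\mathcal{A}_{M_i}$'s are pairwise disjoint, these $j+1$ sub-mechanisms use independent coins.

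The second step is to apply Proposition~\ref{prop:mcar_amplification} locally to each $M_i$. By assumption, $M_i$ only reads rows that are fully observed on $\mathcal{A}_{M_i}$, so we may view $M_i$ as being composed with the sampling step induced by $M_{\Phi,\mathcal{A}_{M_i}}$; this sub-sampling keeps any given row with probability $p_{\mathcal{A}_{M_i}}=\prod_{A_\ell\in\mathcal{A}_{M_i}}(1-\phi_\ell)$. Invoking Theorem~\ref{thm:samplingamplification} then gives that $M_i\circ M_{\Phi,\mathcal{A}_{M_i}}$ is $(p_{\mathcal{A}_{M_i}}\epsilon_i, p_{\mathcal{A}_{M_i}}\delta_i)$-DP with respect to the ground truth dataset $\bar D$. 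The remaining factor $M_{\Phi,\text{rest}}$ is pure randomness independent of $\bar D$'s other rows and can be absorbed as post-processing, which does not inflate the privacy cost.

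The third step is composition. The overall release is the tuple $\bigl(M_1\circ M_{\Phi,\mathcal{A}_{M_1}}(\bar D),\ldots,M_j\circ M_{\Phi,\mathcal{A}_{M_j}}(\bar D)\bigr)$, where the noise and sampling coins of the $j$ summands are independent because of the disjointness of the amplification sets. Sequential composition of $(\epsilon,\delta)$-DP therefore yields
\[
\bar\epsilon \;=\; \sum_{i=1}^{j} p_{\mathcal{A}_{M_i}}\,\epsilon_i, \qquad \bar\delta \;=\; \sum_{i=1}^{j} p_{\mathcal{A}_{M_i}}\,\delta_i,
\]
as claimed.

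The main obstacle is step one: carefully justifying that $M_i$'s DP analysis on the incomplete data $D$ can be re-interpreted as $M_i$ running on the sub-sampled ground truth rows with only the coins of $M_{\Phi,\mathcal{A}_{M_i}}$. This requires arguing that the extra attribute values (those outside $\mathcal{A}_{M_i}$) and their missingness patterns act as independent post-processing from $M_i$'s viewpoint; disjointness of the $\mathcal{A}_{M_i}$'s is what guarantees that the different $M_i$'s do not ``share'' a sampling coin and hence that the subsampling-amplified budgets can be summed rather than compounded. If some attribute belonged to two amplification sets simultaneously, the shared randomness would break the independence needed for this clean additive composition, which is precisely why the theorem restricts to the non-overlapping case.
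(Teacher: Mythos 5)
Your proposal is correct and follows essentially the same route as the paper: disjointness of the amplification attribute sets lets each $M_i$ be amplified independently via Proposition~\ref{prop:mcar_amplification}/Theorem~\ref{thm:samplingamplification}, and sequential composition then sums the amplified budgets. The paper's proof is just a terser version of this argument; your added discussion of why shared sampling coins would break the additive composition is consistent with how the paper handles the overlapping case separately.
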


\begin{proof}
    As all mechanisms $M_i$ work on disjoint sets of attributes, their amplification attribute sets $\mathcal{A}_i$ are also disjoint. Furthermore as the missing probabilities are always $\leq 1$, we always use all attributes in  $\mathcal{A}_i$ amplify marginal $M_i$. We can then use Theorem~\ref{thm:samplingamplification} to calculate the final amplified privacy cost $\bar{\epsilon}=\sum_{i=1} p_{\mathcal{A}_{M_i} \epsilon_i }$. 
\end{proof}

\begin{example} \label{example:partialobservation}
    Continuing from Example~\ref{example:usecase1}, assume we have the same dataset but use a partial observation algorithm. We consider only the sub-algorithms $M_1, M_2$, and $M_4$ for this example. The marginals for these sub-algorithms do not overlap and allow us to consider all engaging attributes as their amplification attribute set. 
    Hence, by Theorem~\ref{thm:amplificationcomposition}, 
    $M_1$ is amplified using $p_{M_1} = 1-\phi_{state} = \frac{3}{4}$, $M_2$ is amplified using $p_{M_2} = 1-\phi_{occupation} = 1$ and $M_4$ using $p_{M_4} = (1-\phi_{gender})(1- \phi_{income}) = \frac{9}{16}$. The amplified privacy cost would thus be $\frac{3}{4} \frac{\epsilon}{3} + \frac{\epsilon}{3} + \frac{9}{16} \frac{\epsilon}{3} = 0.77\epsilon$. 
\end{example}

The problem however gets more nuanced when two marginals have overlapping attributes. We show this in Example~\ref{example:partialobservation_nuanced} by first showing a na\"ive composition and then an optimized one. 

\begin{example}\label{example:partialobservation_nuanced}
Consider all $4$ sub-algorithms in 
Example~\ref{example:usecase1} and a partial observation algorithm. 
The marginals for sub-algorithms $M_3$ and $M_4$ overlap in the `Gender' attribute with amplification factors $p_{M_3} = (1 - \phi_{Gender}) = \frac{3}{4}$ and $p_{M_4} = (1 - \phi_{Gender}) (1 - \phi_{Income}) = \frac{9}{16}$ respectively. We cannot apply Theorem~\ref{thm:amplificationcomposition} on $M_3$ and $M_4$'s entire attribute set as the corresponding amplification attribute sets would then overlap on the `Gender' attribute. 
A na\"ive solution would be to amplify the DP mechanism with the most amplification and skip the others. In our example, we would amplify only $M_4$ with amplification of $\frac{9}{16}$ and skip amplification for $M_3$. The total amplified privacy cost would thus be $\bar{\epsilon} = \frac{3}{4} \frac{\epsilon}{4} + \frac{\epsilon}{4}  +  \frac{\epsilon}{4} + \frac{9}{16} \frac{\epsilon}{4} = 0.83\epsilon$. 
However, a better bound can be calculated if overlapping mechanisms were grouped together and amplified using the intersecting attribute. 
For instance, both $M_3$ and $M_4$ can 
be amplified by an amplification factor of $\frac{3}{4}$ using the amplification attribute set `Gender', resulting in a total privacy loss of $\frac{3}{4} \frac{\epsilon}{4} + \frac{\epsilon}{4}  +  \frac{3}{4} (\frac{\epsilon}{4} + \frac{\epsilon}{4}) = 0.81\epsilon$ i.e. tighter than $0.83\epsilon$.
\end{example}

In a more general setting, solving this problem requires us to make groups of the mechanisms with overlapping attributes and make sure that each group is amplified using distinct amplification factors. 

\begin{problem}~\label{prob:optimized_amplified_cost}
    Consider an MCAR mechanism $M_{\Phi}$ and 
 a sequence of $j$ mechanisms $M_1, \ldots, M_j$ with DP guarantees of $\epsilon_1, \ldots, \epsilon_j$ to $D$, where $M_i$ computes a marginal over attributes $\mathcal{A}_i$. We would like to find 
amplification attribute sets $\{\mathcal{A}_{M_1}\subseteq \mathcal{A}_1,\ldots, \mathcal{A}_{M_j}\subseteq \mathcal{A}_k\}$ and their corresponding amplification factor $p_1,\ldots, p_j$
for $M_1,...M_j$, that gives the smallest DP cost to the ground truth data $\bar{D}$. 
\end{problem}

One way to solve the above problem is by creating valid partitions of marginals and assigning each group in the partition an amplification attribute set such that all groups have disjoint attribute sets and all marginals from the same group are amplified using their own amplification attribute set. 

\begin{definition}[Valid partition] \label{def:valid_partition}
    Given DP mechanisms $M_1,\ldots,M_j$ for computing marginals over attribute sets $\mathcal{A}_1,\ldots, \mathcal{A}_j$, a partition of these mechanisms $P = \{G_1, \ldots G_i\}$ is considered valid if it satisfies these conditions: 
       (1) All mechanisms in the same partition are amplified with the same set of amplification attribute set and with the same amplification factors; and 
        (2) The amplification attribute sets of all partitions are disjoint. 
    The privacy cost for a valid partition is $\bar{\epsilon}=\sum_{G_l\in P}p_{\mathcal{A}_{G_l}}\sum_{M_j\in G_l} \epsilon_j$, where $\mathcal{A}_{G_l}$
    is the amplification attribute set for the mechanisms grouped into $G_l$ and 
    $p_{\mathcal{A}_{G_l}}$ is the corresponding amplification factor.
\end{definition}

\revision{A valid partition ensures each group's amplification attribute set is disjoint, ensuring each attribute is considered only once. To solve Problem~\ref{prob:optimized_amplified_cost}, we select the partition with the least privacy cost that is also valid. Thus, all DP sub-mechanisms are amplified using the best disjoint attribute set. However, enumerating all valid partitions is intractable due to the exponential number of possibilities\footnote{The total number of partitions of a set is given by the \href{https://en.wikipedia.org/wiki/Bell_number}{Bell number}.}. If the cardinality of $\mathcal{A}$ and the number of DP mechanisms $j$ are small, then one can enumerate all possible solutions and choose the best one. However, for datasets with a large number of attributes, we show an initial pruning method to trim away bad solutions using Lemma~\ref{lemma:atleastone}.}

\begin{lemma} \label{lemma:atleastone}
A valid partition $P$ solution to Problem~\ref{prob:optimized_amplified_cost} should have a non-empty amplification attribute set for all group $G \in P$.
\end{lemma}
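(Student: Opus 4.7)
The plan is to prove Lemma~\ref{lemma:atleastone} by contradiction. Suppose $P^\star$ is an optimal valid partition (i.e., achieves the smallest privacy cost for Problem~\ref{prob:optimized_amplified_cost}) yet contains some group $G \in P^\star$ with empty amplification attribute set $\mathcal{A}_G = \emptyset$. Since the empty product gives amplification factor $p_\emptyset = 1$, the mechanisms in $G$ contribute $\sum_{M_j \in G} \epsilon_j$ to the total cost without any amplification. The goal is to construct a different valid partition $P'$ with strictly smaller cost, contradicting the optimality of $P^\star$.

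First I would set up the relevant attribute sets: since every $M_j$ computes a marginal over a non-empty $\mathcal{A}_j$, the intersection $\mathcal{I}_G = \bigcap_{M_j \in G} \mathcal{A}_j$ gives the maximal candidate for a valid amplification attribute set for $G$ (by the constraint identified in Proposition~\ref{prop:mcar_amplification}, a single amplification attribute set shared across the group must be a subset of each member's attributes). If $\mathcal{I}_G = \emptyset$, I would first split $G$ along common-attribute lines into smaller subgroups whose intersections are non-empty; in the worst case one can split down to singletons, for which the intersection is $\mathcal{A}_{M_j} \neq \emptyset$. This step is a free move: it only relaxes the validity constraints without changing any amplification factor.

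Next, let $U = \bigcup_{G' \in P^\star,\, G' \neq G} \mathcal{A}_{G'}$ be the attributes already claimed by other groups. I would proceed by case analysis. \emph{Case 1:} If $\mathcal{I}_G \setminus U$ contains some attribute $A$, simply reset $\mathcal{A}_G = \{A\}$. The result is still a valid partition (disjointness with other groups is preserved, and $\{A\} \subseteq \mathcal{I}_G$), while the amplification factor for $G$ drops from $1$ to $1 - \phi_A < 1$, yielding a strictly smaller cost. \emph{Case 2:} If $\mathcal{I}_G \subseteq U$, then every attribute common to $G$'s mechanisms is already used by some other group $G'$. Here I would argue that at least one mechanism $M_j \in G$ admits a compatible $G'$ with $\mathcal{A}_{G'} \subseteq \mathcal{A}_{M_j}$, so $M_j$ can be merged into $G'$ without changing $\mathcal{A}_{G'}$; the amplification factor $p_{\mathcal{A}_{G'}} < 1$ then applies to $M_j$ as well, strictly reducing $M_j$'s contribution.

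The main obstacle will be the residual sub-case where no such $G'$ is compatible with any $M_j \in G$, i.e.\ each other group's amplification set straddles $\mathcal{A}_{M_j}$ and its complement. In that sub-case a direct merge is infeasible, and one must either shrink $\mathcal{A}_{G'}$ to $\mathcal{A}_{G'} \cap \mathcal{A}_{M_j}$ (paying a small increase in amplification factor for $G'$'s original mechanisms while gaining amplification on $M_j$) or re-partition attributes across several groups. I expect this to require a careful balance argument: since $\phi_A < 1$ for every attribute with missing data, and since the total number of partitions is finite, a local exchange that rebalances one attribute from $G'$ to a new group containing $M_j$ can be shown to yield a net negative change in cost whenever the baseline $G$ has no amplification. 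Chaining these local improvements produces the desired $P'$ with strictly smaller cost, contradicting the optimality of $P^\star$ and establishing the lemma.
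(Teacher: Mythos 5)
Your overall strategy --- exhibit, for any valid partition containing a group $G$ with an empty amplification attribute set, another valid partition of no greater cost in which $G$ is amplified --- is the same exchange argument the paper uses, and your Case 1 (an unclaimed attribute exists in $\bigcap_{M_j\in G}\mathcal{A}_j$) and the merge move in Case 2 (some group $G'$ with $\mathcal{A}_{G'}\subseteq\mathcal{A}_{M_j}$ absorbs $M_j$) are correct and strictly improve the cost. The genuine gap is exactly the residual sub-case you flag and then wave at. The local exchange you propose there --- moving an attribute $A$ from $\mathcal{A}_{G'}$ to a new group containing $M_j$ --- changes the total cost by
\[
p_{\mathcal{A}_{G'}}\Bigl(\tfrac{1}{1-\phi_A}-1\Bigr)\sum_{M\in G'}\epsilon_M \;-\;\phi_A\,\epsilon_j ,
\]
which is negative only when $\epsilon_j$ dominates the re-inflated budget of $G'$; it is not ``a net negative change whenever the baseline $G$ has no amplification.'' Concretely, with attributes $A,B,C,D$, mechanisms $M_1$ over $\{A,B\}$ and $M_2$ over $\{C,D\}$ each with budget $100$ and amplification sets $\{A,B\}$ and $\{C,D\}$, and $M_3$ over $\{A,C\}$ with budget $1$ sitting in a group with empty amplification set, every way of handing $M_3$ a non-empty set forces $G_1$ or $G_2$ to surrender an attribute (or to absorb $M_3$ under a strictly smaller set), and for moderate missing probabilities each such move increases the total cost. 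So the chain of local improvements you invoke need not terminate at a partition in which all groups are amplified, and the contradiction is not obtained.

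For comparison, the paper's proof does no case analysis: it keeps the same groups, assigns $G$ some set $A$, writes the two costs as $L+X$ and $L'+p_A X$, and asserts that in the worst case $L'=p_A L$, from which $(L+X)(1-p_A)\ge 0$ closes the argument. That assertion is doing all the work --- it is precisely the step your residual sub-case exposes as delicate, since reclaiming $A$ from another group can only \emph{increase} that group's cost (by a factor of $1/p_A$, not $p_A$). You have correctly located the hard point where the paper's argument is thinnest, but your proposal does not resolve it, so as written the proof is incomplete.
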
 
\ifpaper
\else
 \proof
Suppose there are a sequence of DP sub-mechanisms $M_1, \ldots, M_j$ with privacy cost of $\epsilon_1, \ldots, \epsilon_j$. A valid partition $P$ over $M_1, \ldots, M_j$ consists of a group $G$ that has an empty amplification attribute set. Then the amplification factor of group $G$ will be $p_G = 1$ and the overall privacy cost of this partition can be calculated using Definition~\ref{def:valid_partition}.
\begin{eqnarray*}
    \bar{\epsilon}_P &=&  \sum_{G_l\in P}p_{\mathcal{A}_{G_l}}\sum_{M_j\in G_l} \epsilon_j \\
    &=& \sum_{G_l\in \{P - G\}}p_{\mathcal{A}_{G_l}}\sum_{\mathcal{A}_j\in G_l} \epsilon_j +  \sum_{M_j\in G} \epsilon_j \\
    &=& L +  X
\end{eqnarray*}, where $L = \sum_{G_l\in \{P - G\}}p_{\mathcal{A}_{G_l}}\sum_{\mathcal{A}_j\in G_l} \epsilon_j$ and $X = \sum_{M_j\in G} \epsilon_j$. 
We can always define another valid partition $P'$ that has the groups as $P$ but group $G$ has amplification attribute set $A$. Therefore, the privacy cost of partition $P'$ can be calculated as:  
\begin{eqnarray*}
    \bar{\epsilon}_{P'} &=&  \sum_{G_l\in P'}p_{\mathcal{A}_{G_l}}\sum_{M_j\in G_l} \epsilon_j \\
    &=& \sum_{G_l\in \{P' - G\}}p_{\mathcal{A}_{G_l}}\sum_{\mathcal{A}_j\in G_l} \epsilon_j +  p_A \sum_{M_j\in G} \epsilon_j \\
    &=& L' +  p_{A}X
\end{eqnarray*}, where $L' = \sum_{G_l\in \{P' - G\}}p_{\mathcal{A}_{G_l}}\sum_{\mathcal{A}_j\in G_l} \epsilon_j$. 
Note that as amplification attribute set $A$ is used for $G$ in $P'$,  the privacy cost of the other groups can at most be increased by $p_A$. Assuming the worst case, $L' = p_{A} L$ and that the privacy cost of $P$ is always higher than that of $P'$. 
\begin{equation*}
\begin{split}
     L + X - L' - p_A X \geq 0\\
     L + X - p_A L - p_A X  \geq 0 \\
     L (1 - p_A) + X (1 - p_A) \geq 0 \\
     (L + X) (1-p_A) \geq 0
\end{split}   
\end{equation*}
The above inequality if always true as the first term is always positive ($L$ and $X$ are privacy costs) and the $p_A \leq 1$ as all missing percentages are $\leq 1$.  \qed

\fi

For large datasets where we are left with multiple partitions even after pruning, we use a brute force search as described in Algorithm~\ref{algo:mcar_amplification}. In Line 1, we enumerate all possible disjoint amplification attribute sets that we can make from $\mathcal{A}$ and store in a variable $\mathcal{P}_\mathcal{A}$. Then, we loop through each possible disjoint attribute set $P_\mathcal{A} \in \mathcal{P}_\mathcal{A}$, and calculate the cost of each amplification attribute set $\mathcal{A}_l \in P_\mathcal{A}$ in Line 2-3 using Proposition~\ref{prop:mcar_amplification}. We initialize the cost of the partition $c_{P_\mathcal{A}} = 0$ in Line 4. Then we loop through each pair of marginal $S_i$ and its corresponding DP mechanism ($M_i$, $\epsilon_i$) and find the candidate amplification attribute sets that are contained in the marginal $S_i$ in Line 6. If no such candidate set is valid for $S_i$, then we can prune the entire partition $P_\mathcal{A}$ using Lemma~\ref{lemma:atleastone} and loop back to Line 2. Otherwise, in Line 7-8, we find the candidate attribute set which has the best amplification cost for $M_i$ and add its corresponding cost the final cost $c_{P_\mathcal{A}}$. Finally, the partition with the minimum sum cost is returned in Line 10. In Example~\ref{example:partitioned_amplification} we show how Algorithm~\ref{algo:mcar_amplification} can be used to find the valid partition for our running example with the lowest privacy cost.

\begin{example} \label{example:partitioned_amplification}
    Consider the same setup in Example~\ref{example:partialobservation_nuanced}. 
    There are total of 4 attributes and Algorithm~\ref{algo:mcar_amplification}  starts by enumerating all 15 possible disjoint amplification attribute sets --- \{State | Occupation | Gender | Income\}, \{State Occupation | Gender | Income\}, ... , \{State Occupation Gender | Income\}, ... , \{State Occupation Gender Income\}. We then iterate through each of these disjoint sets. Let's consider the disjoint sets $\mathcal{A}_1:\{State\}, \mathcal{A}_2: \{Occupation\}, \mathcal{A}_3: \{Gender\}, \mathcal{A}_4: \{Income\}$. For each attribute set we calculate its corresponding amplification factor, $c_{\mathcal{A}_1}: (1-\phi_{State}) = 3/4, c_{\mathcal{A}_2}: (1-\phi_{Occupation}) = 0, c_{\mathcal{A}_3}: (1-\phi_{Gender}) = 3/4, c_{\mathcal{A}_4}: (1-\phi_{Income}) = 3/4$. We then iterate through all marginals and choose the best amplification attribute for each marginal. Therefore, $M_1$ is amplified using $\mathcal{A}_1$, $M_2$ using $\mathcal{A}_2$, $M_3$ and $M_4$ both using $\mathcal{A}_3$. The final privacy cost therefore is $\bar{\epsilon} = \frac{3}{4} \frac{\epsilon}{4} + \frac{\epsilon}{4}  +  \frac{3}{4} \frac{\epsilon}{4} + \frac{3}{4} \frac{\epsilon}{4} = 0.81\epsilon$. This partition also happens to be the best partition among the 15 partitions.    
\end{example}

\begin{algorithm}[t]
	\begin{algorithmic}[1]
        \REQUIRE Marginals $\mathcal{S}=\{S_1,\ldots, S_j\}$ over attributes $\mathcal{A}$, DP mechanisms $\mathcal{M}=\{(M_1,\epsilon_1),\ldots,(M_j,\epsilon_j)\}$ for $\mathcal{S}$, Missing probabilities $\Phi=\{\Phi_A|A\in\attrset\}$ for MCAR 
        \STATE Find all possible partitions of $\attrset$ and store in $\mathcal{P}_{\attrset}$ 
        \FOR{each attribute partition $P_{\attrset}$ in $\mathcal{P}_{\attrset}$} 
        \STATE Calculate amplification factor   $c_{\mathcal{A}_l} = \prod_{A\in \mathcal{A}_l} (1-\phi_A)$ for $\mathcal{A}_l\in P_{\attrset}$
        \STATE{Initialize disjoint set cost $c_{P_{\attrset}} = 0$}
        \FOR{each $S_i\in \mathcal{S}$ with its corresponding ($(M_i,\epsilon_i)$)}
        \STATE{Skip $P_{\attrset}$ if $\{\attrset_l\subseteq S_i | \attrset_l\in P_{\attrset}\}=\emptyset$}
        \STATE{Find  the best amplification attribute set for $M_i$:\\ $\attrset_{l*} \gets argmin_{\attrset_l\in P_{\attrset}\wedge \attrset_l\subseteq S_i} c_{\attrset_l}$}
        \STATE{Add the best amplified cost $c_{P_\attrset} = c_{P_\attrset} + \epsilon_i \cdot  c_{\attrset_{l*}}$}
        \ENDFOR
        \ENDFOR
        \STATE{Return the attribute partition $P_{\attrset}$ with minimum cost $c_{P_{\attrset}}$}
        \caption{Optimal amplified privacy cost} 
        \label{algo:mcar_amplification}
    \end{algorithmic}
\end{algorithm}

\stitle{Use case 3: Privacy amplification for column-wise imputation algorithms.}
\revision{Column-wise imputation algorithms (e.g., KaminoI) learn attributes sequentially in a predefined sequence $S$ over ${A_1, \ldots, A_k}$. The first attribute $S_1$ is learned using its observed distribution, while the rest $S_2, \ldots, S_k$ are learned using intermediate models $M_2, \ldots, M_k$ with privacy costs $\epsilon_2, \ldots, \epsilon_k$. At each $i^{th}$ iteration, attribute $S_i$ is learned using intermediate model $M_i$, taking previously learned attributes $S_{:i}$ as feature input. After training, $M_i$ is used to sample the synthetic dataset and impute incomplete values of $D[S_i]$. Thus, for the next model $M_{i+1}$, previously learned attributes are either complete or imputed. The total number of complete rows fed to an intermediate model $M_i$ depends solely on the complete values in $D[S_i]$, leading to amplified privacy for ground truth data $\bar{\epsilon_i} = p_{S_i}\epsilon_i$, where $p_{S_i} = 1 - \phi_i$. However, we note that for training model $M_i$, every attribute $S_{:i-1}$ is considered for amplification using the same probability as $S_i$. Since each attribute can be amplified only once, privacy for ground truth is calculated accordingly.}

\begin{theorem}\label{thm:columnamplification_mcar}
    Consider an MCAR mechanism $M_{\Phi}$ and  a sequence of attributes $S$ and $k$ mechanisms $M_1, \ldots, M_k$ with privacy cost of $\epsilon_1, \ldots, \epsilon_k$ to $D$, where $M_i$ is an intermediate model that trains $A_i$ as target and $A_{:i}$ as features. If model $M_i$ is used for imputation of attribute $A_i$, then the overall process
  offers DP to the ground truth data $\bar{D}$ at a cost of $\bar{\epsilon} = p_{S_j}\epsilon_j + \sum_{i=1, i\neq j}^k \epsilon_i$, where $j = \max\limits_{i=1}^{k} (1- \phi_i)$. 
\end{theorem}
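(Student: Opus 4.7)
The plan is to apply Proposition~\ref{prop:mcar_amplification} to each intermediate model $M_i$ separately and then compose the resulting amplified costs while respecting the ``each attribute used at most once'' constraint highlighted after that proposition.

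First I would analyse a single $M_i$ in isolation. At iteration $i$, KaminoI trains on exactly those rows whose target attribute $S_i$ is not missing; since $S_{:i-1}$ has already been filled in by the preceding imputation steps, the effective sampling rate of $\bar{D}$ into $M_i$'s training set is $1-\phi_i$. Plugging this into Proposition~\ref{prop:mcar_amplification} immediately yields that $M_i\circ M_\Phi$ is $(p_{S_i}\epsilon_i,p_{S_i}\delta)$-DP with $p_{S_i}=1-\phi_i$.

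The main structural step is to identify the amplification attribute set of $M_i$ within the full sequential pipeline rather than in isolation. Because $M_i$'s training inputs consist of the previously imputed features $S_{:i-1}$ together with the target $S_i$, the entire nested set $\{S_1,\ldots,S_i\}$ must be charged to $M_i$'s amplification at probability $1-\phi_i$, matching the remark preceding the theorem. Consequently the amplification attribute sets satisfy $\mathcal{A}_{M_1}\subset\cdots\subset\mathcal{A}_{M_k}$, so any two of them overlap on every attribute of the smaller one. Invoking the valid-partition reasoning of Definition~\ref{def:valid_partition}, this nesting forces any disjoint-attribute assignment across the $k$ mechanisms to grant a non-empty amplification set to at most one mechanism $M_j$; that chosen mechanism contributes $p_{S_j}\epsilon_j$ while every remaining $M_i$ contributes its unamplified $\epsilon_i$, yielding $\bar{\epsilon}=p_{S_j}\epsilon_j+\sum_{i\neq j}\epsilon_i$. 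Optimising over $j$ then selects the attribute whose missingness gives the most favourable amplification factor, matching the selection rule stated in the theorem.

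The hardest part will be rigorously justifying the nested amplification-set claim, namely that the imputed features $S_{:i-1}$ must be charged to $M_i$'s amplification attribute set even though they are no longer missing at training time. I would handle this by unwinding each imputation as a post-processing of $M_{:i-1}$ and tracing how a single-row change in $\bar{D}$ simultaneously affects both the Bernoulli inclusion of that row in $M_i$'s training set and the imputed values of $S_{:i-1}$ appearing in other rows, which prevents the randomness of any $S_j$'s missingness from being credited twice across different mechanisms and makes the nested-chain reduction above the tight composition.
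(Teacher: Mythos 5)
Your proposal follows essentially the same route as the paper's own proof: amplify each intermediate model $M_i$ by $p_{S_i}=1-\phi_i$ via Proposition~\ref{prop:mcar_amplification}, observe that $M_i$'s amplification must be charged against the entire nested attribute set $S_{:i}$ so that the ``each attribute used once'' rule permits amplifying only a single model, and then optimise the choice of $j$. You are in fact somewhat more explicit than the paper about the one genuinely delicate point (why the already-imputed features $S_{:i-1}$ still consume their attributes' amplification budget), which the paper asserts without the post-processing/coin-sharing argument you sketch.
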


\ifpaper 
\else
\begin{proof}
    As all cells of an attribute are imputed at previous iterations, the number of complete rows fed to a model at the $i$th iteration depends on the missing cells of the $S_i$ attribute. The amplification for the $i$th model can therefore be calculated at $\epsilon_i p_{S_i}$, where $p_{S_i} = 1- \phi_i$ is the amplification factor of attribute $S_i$. To calculate the overall privacy of the algorithm, we would like to compose the amplified privacy of all sub models. However, the model $M_i$ trains on every attribute $S_{:i-1}$ and each attribute uses the same probability as $S_i$. Furthermore as each attribute can only be used once, we can only amplify a single model. Hence, we choose to amplify the model with the maximum amplification $M_j$ where $j = \max\limits_{i=1}^{k} (1- \phi_i)$ and the other models are left un-amplified.
\end{proof}
\fi

\begin{example}
   \label{example:usecase3}
   Consider the incomplete dataset from Figure~\ref{fig:mnarexample}. Lets assume we have a column based imputation algorithm learns this dataset. Each attribute is learnt using the equal privacy budget $\frac{\epsilon}{4}$. The attributes `State', `Gender' and `Income' has an amplification factor $\frac{3}{4}$ whereas the attribute `State' has a factor of $1$. Therefore, the amplified privacy budget is $ \bar{\epsilon} = \frac{3}{4} \frac{\epsilon}{4} + 3 \frac{\epsilon}{4} = 0.9375\epsilon$.
\end{example}

\stitle{Discussion on MAR and MNAR.}
\revision{MAR and MNAR mechanisms model probabilities of missing data conditioned upon non-missing values from other columns (MAR) or missing values from the same column in the dataset (MNAR). Unfortunately, verifying the presence of MAR and MNAR missing types in the dataset lacks a foolproof method, as it may rely on observed and unobserved variables and their interactions (as discussed in Section~\ref{sec:preliminaries}). Due to the conditional nature of these missing types, achieving privacy amplification with MAR or MNAR mechanisms is challenging. Each row's probability of having missing values may vary due to conditional dependence on dataset values. While maximum probabilities or noisy upper bounds can be used for privacy amplification if these probabilities are known or privately learned, they are often associated with multiple patterns and complex relationships. For instance, in the simple example depicted in Figure~\ref{fig:mnarexample}, rows with the highest income value (80k) are missing. Here, the probability of a row having a missing value depends directly on the probability of a row having an income of 80k. To calculate missing probabilities associated with MNAR, not only the highest value in the column but also its probability of occurrence needs to be determined. Without distribution assumptions, such calculations cannot be achieved with DP \cite{bun2015differentially}. Amplification extensions for MAR and MNAR remain a topic for future research.}
\section{Evaluation}\label{sec:exp}

\ifpaper
\else
\begin{figure*} [htb]
    \centering
    \begin{minipage}{\textwidth}
    \subfigure[1-way ($\downarrow$)]{\includegraphics[width=.32\textwidth]{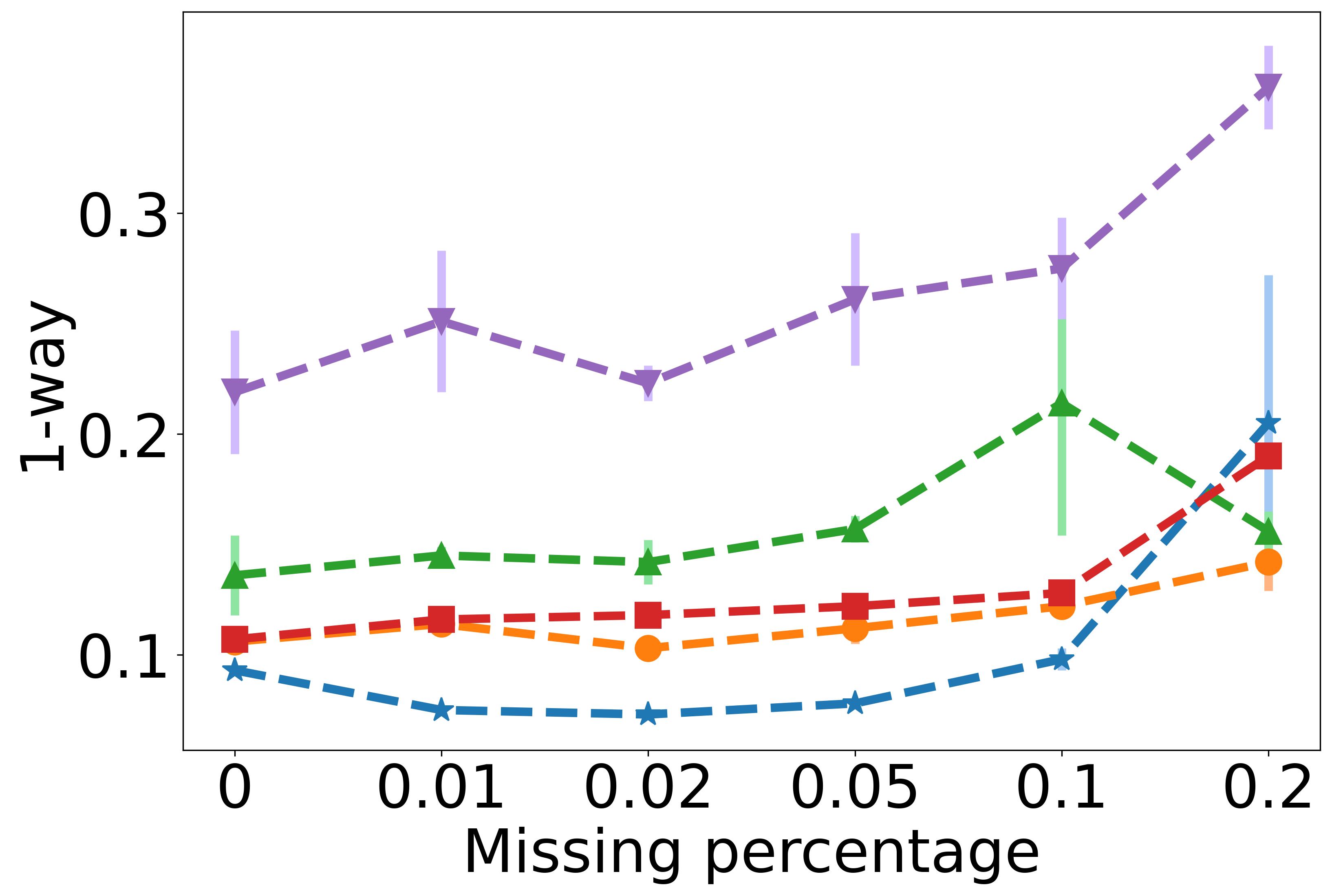}}
    \subfigure[2-way ($\downarrow$)]{\includegraphics[width=.32\textwidth]{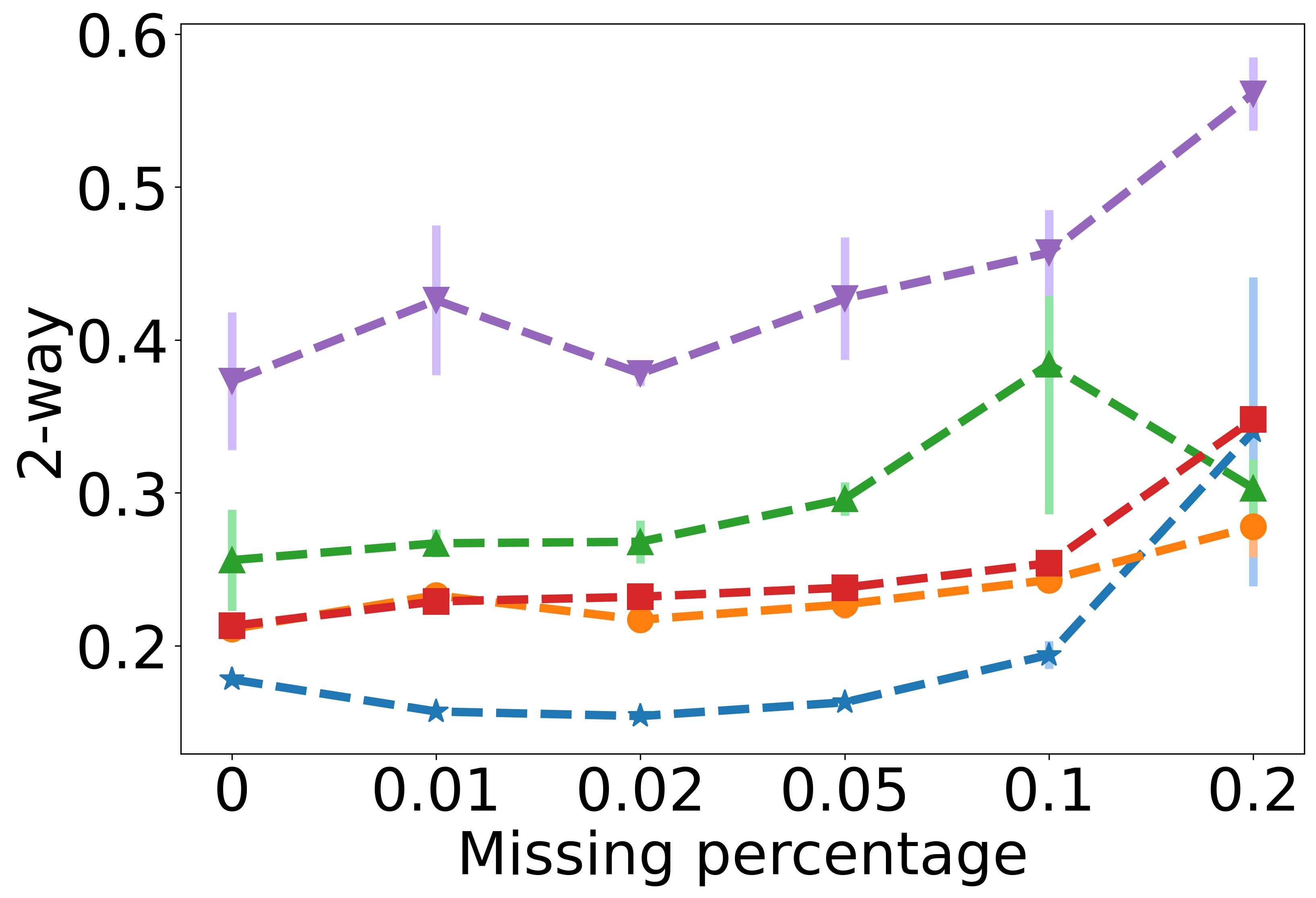}}
    \subfigure[F1-score ($\uparrow$)]{\includegraphics[width=.32\textwidth]{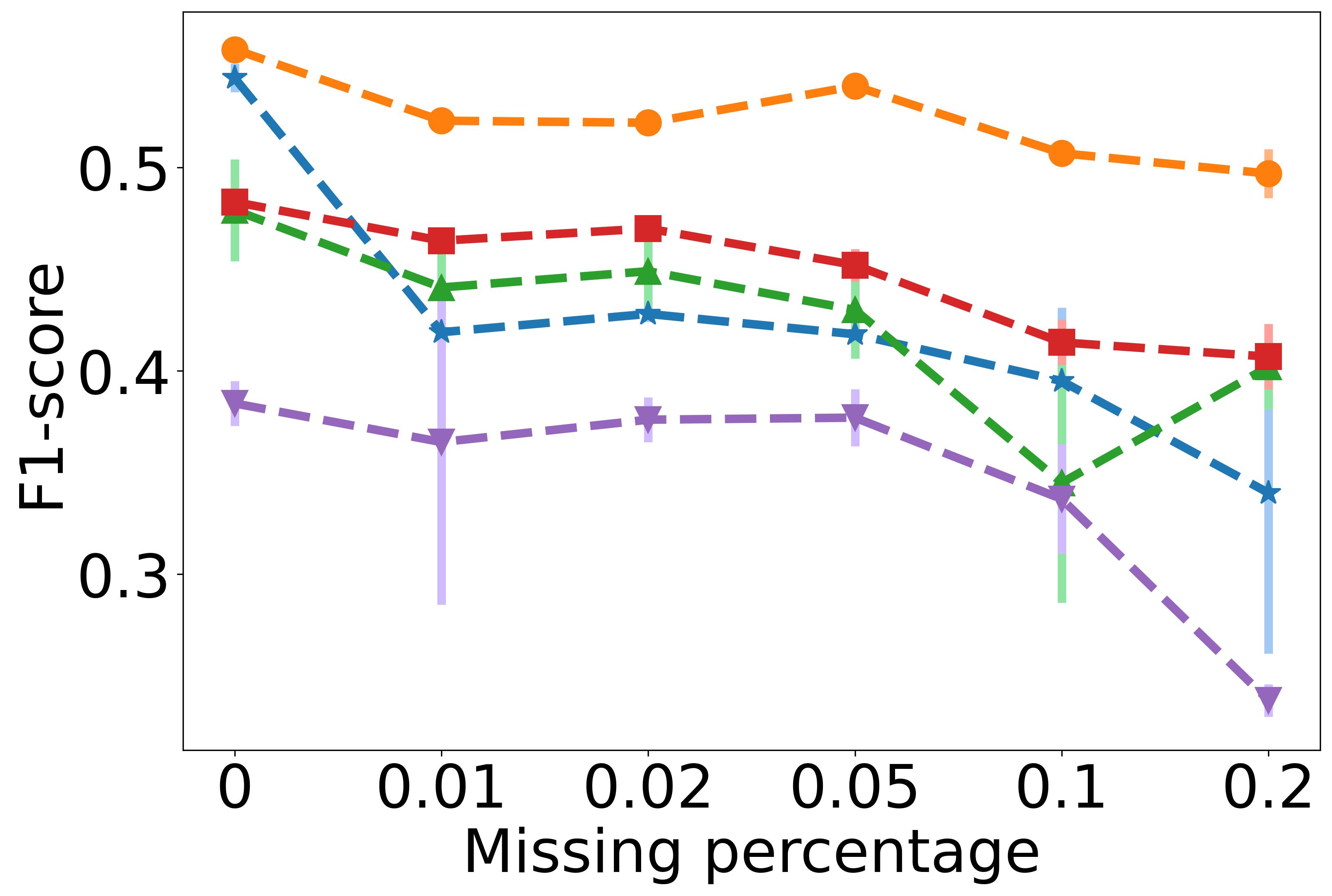}}
    \centering \subfigure{\includegraphics[width=0.75\textwidth]{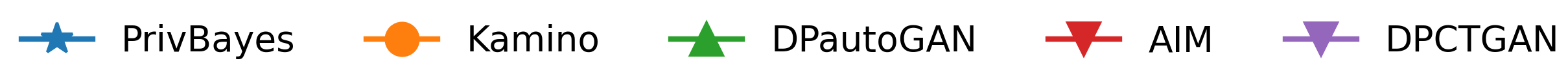}}
    \vspace{-0.3cm}
    \caption{The effect of missing data on DP synthetic data generation algorithms.}
    \label{fig:exp1}
    \end{minipage}%
\end{figure*}
\fi

% \end{figure}

We thoroughly  experiment with DP synthetic data generation algorithms on missing data. First, we demonstrate how existing DP methods are affected by varying amount of missing data. Next, we evaluate the effectiveness of our proposed adaptive recourse methods and analyze the impact of varying missing data percentages, missing mechanisms, and privacy budgets for each method. Finally, we show how missingness amplifies privacy for ground truth data.
\subsection{Experimental Setup}

\ifpaper
\else
\begin{table}[t]
    \centering
    \caption{Dataset Characteristics}
    \begin{tabular}{|c|c|c|c|}
        \hline
        Dataset & Cardinality & \#Numerical Attr & \#Categorical Attr  \\
        \hline
        \hline
         Adult & 32561 & 5 & 10 \\ 
         Bank & 45211 & 3 & 14 \\
         BR2000 & 38000 & 3 & 11 \\
         National & 15012 & 6 & 14 \\
         \hline
    \end{tabular}
     
    \label{tab:datasets}
\end{table}
\fi

\begin{figure*}
\centering
\subfigure[Adult 2-way ($\downarrow$)]{\includegraphics[width=.24\textwidth]{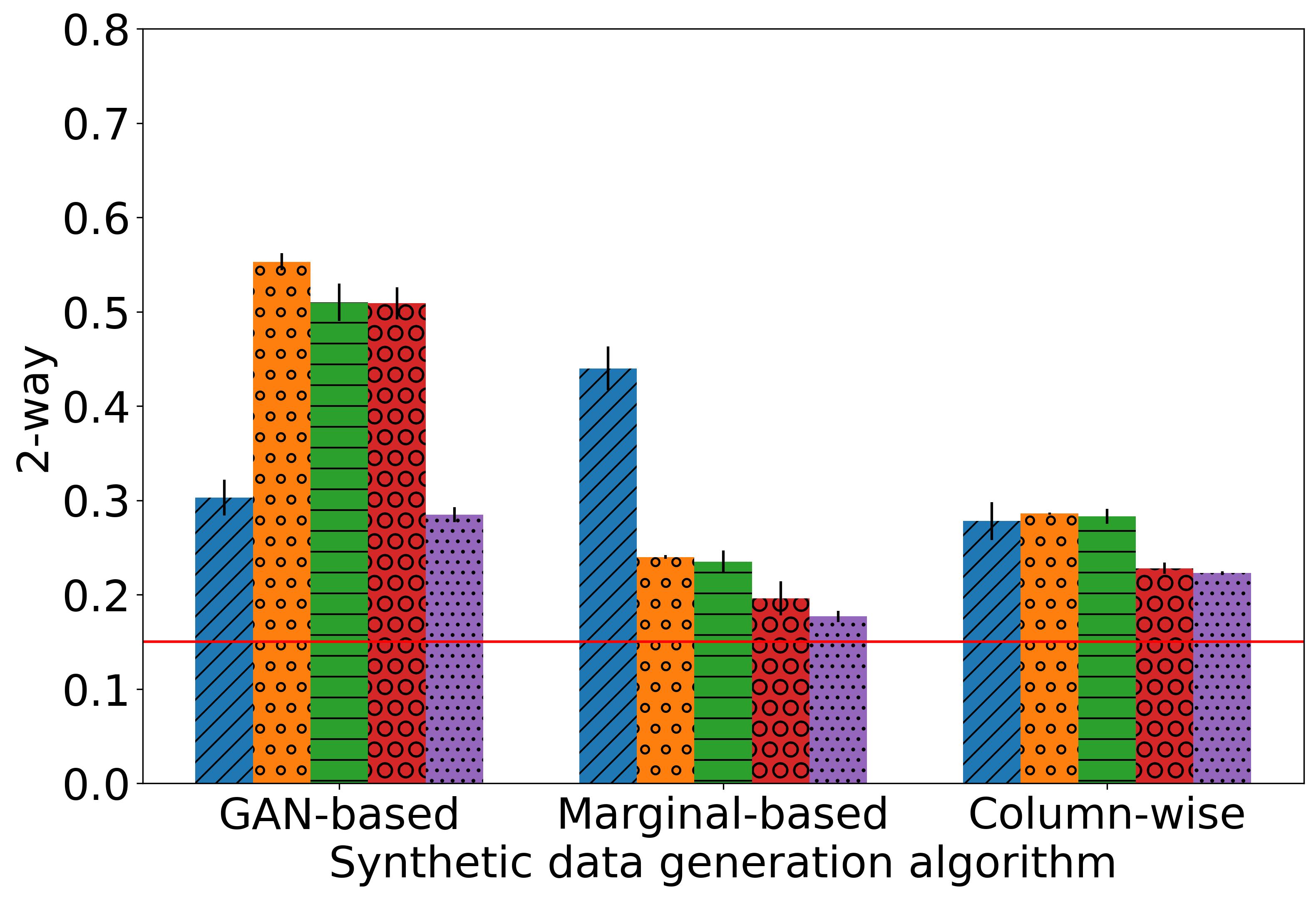}}
\subfigure[Bank 2-way ($\downarrow$)]{\includegraphics[width=.24\textwidth]{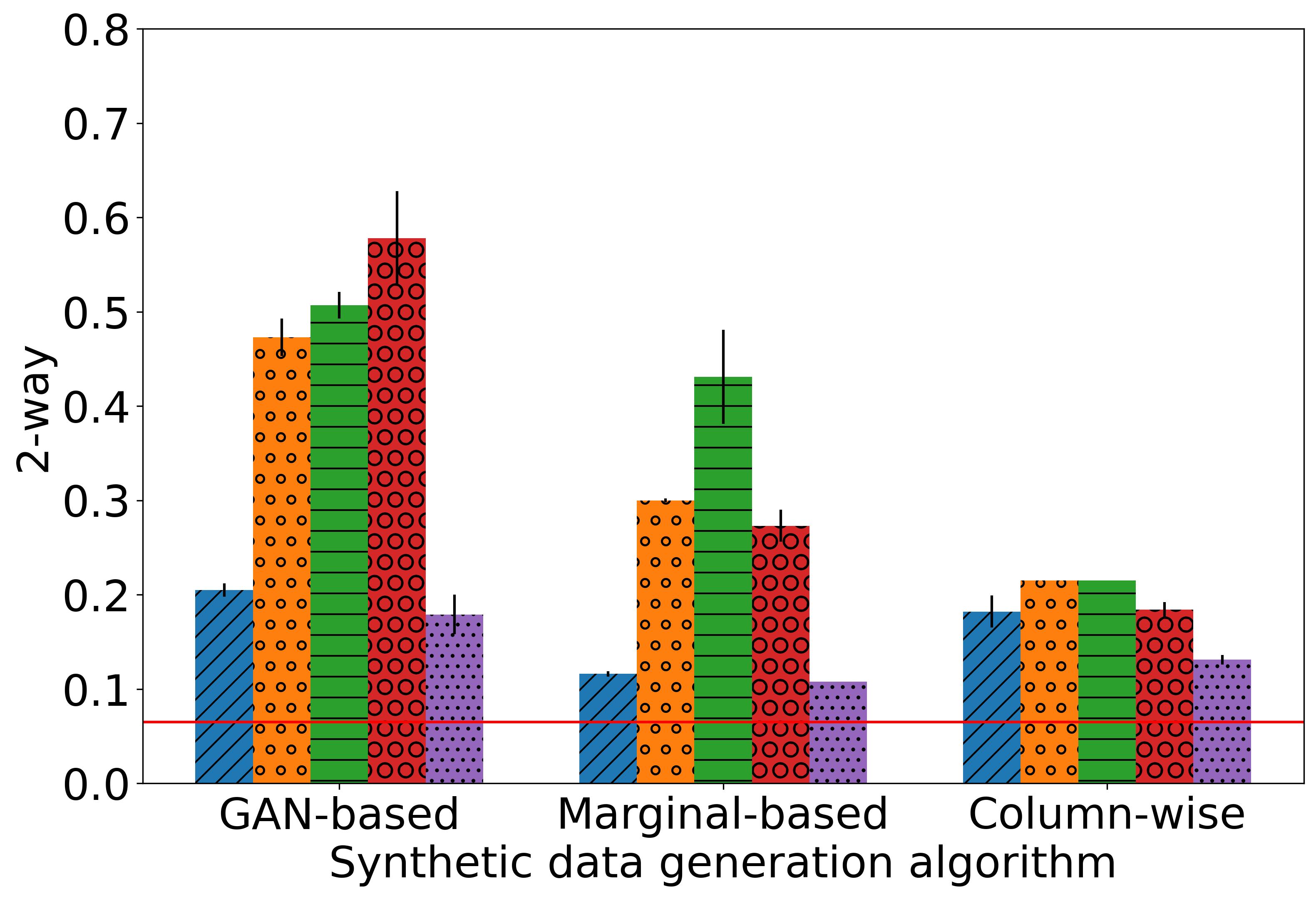}}
\subfigure[BR2000 2-way ($\downarrow$)]{\includegraphics[width=.24\textwidth]{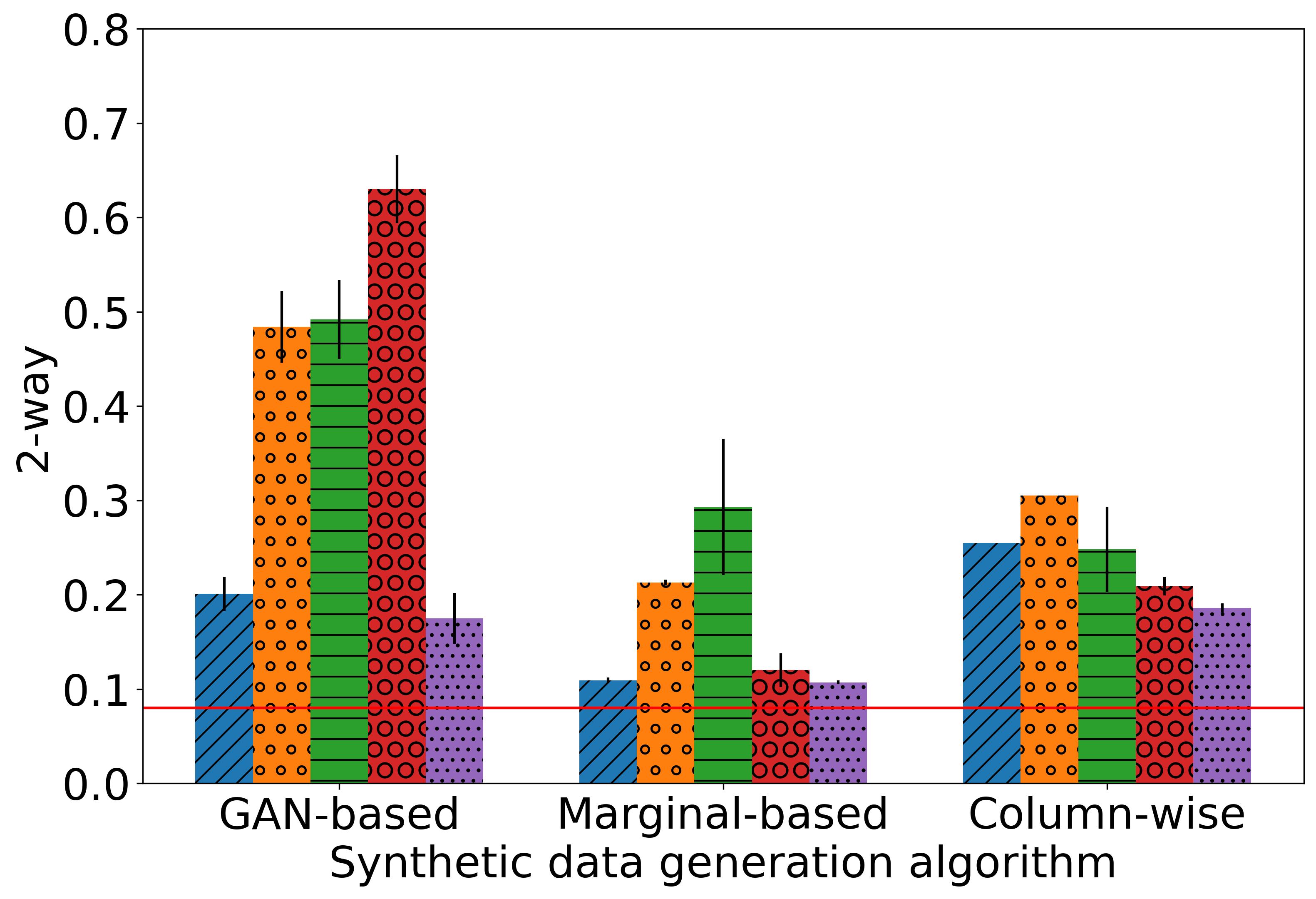}}
\subfigure[National 2-way ($\downarrow$)]{\includegraphics[width=.24\textwidth]{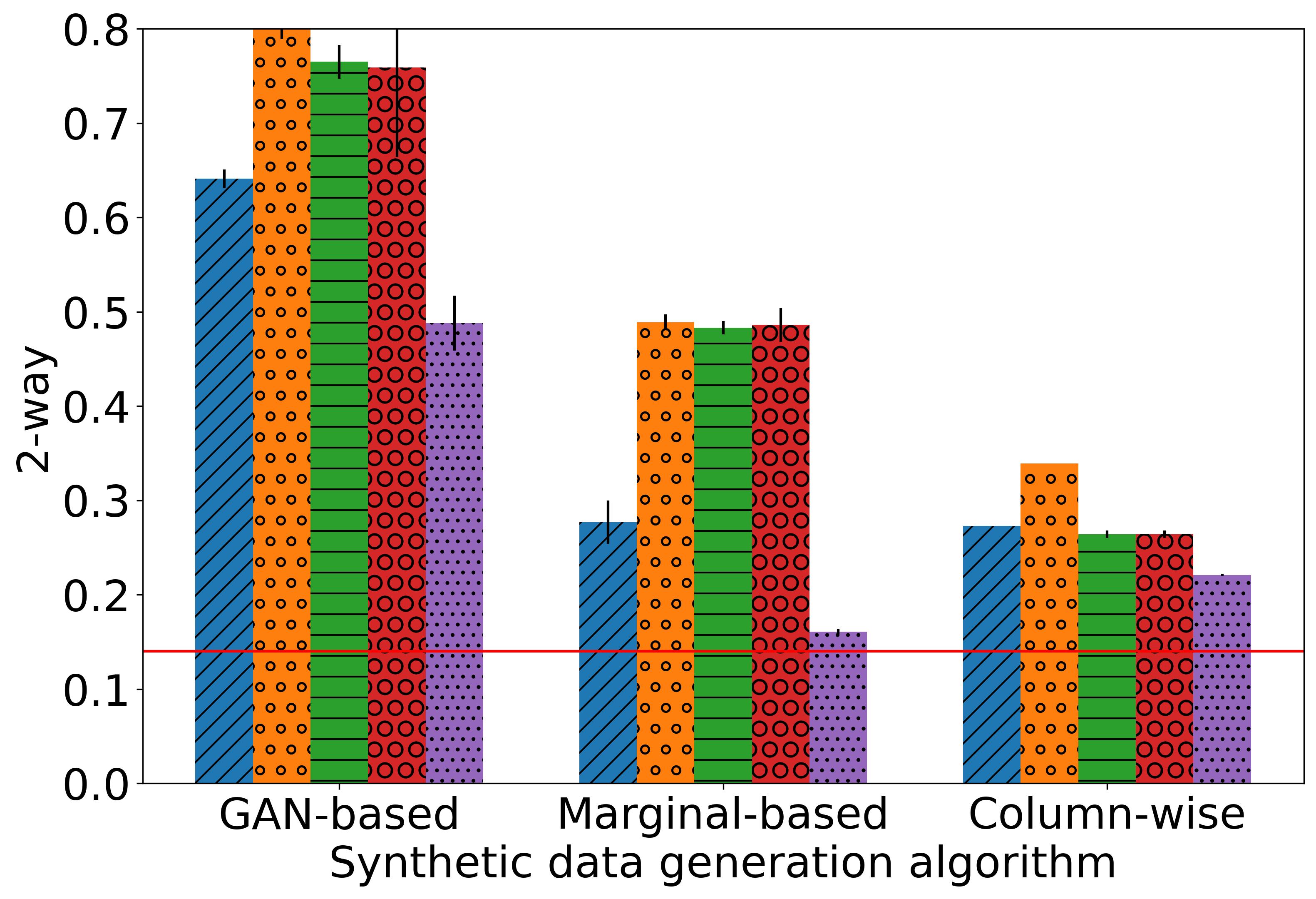}}

\subfigure{\includegraphics[width=0.8\textwidth]{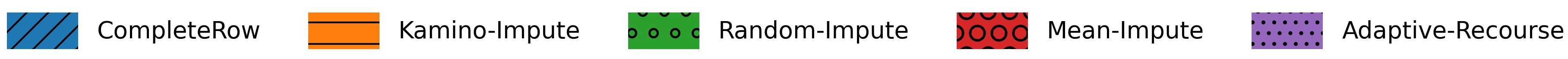}}

\vspace{-3mm}
\caption{\revision{Comparing all strategies to deal with missing data. Adaptive recourse results in the best performance, followed by the complete row approach. The red line represents the best no-missing baseline.} }
\label{fig:exp5}
\end{figure*}

\stitle{Datasets.}
\ifpaper
We run our experiments on four tabular datasets: (i) Adult dataset~\cite{uci_repo} (32561 rows), (ii) Bank dataset~\cite{bank} (45211 rows), (iii) BR2000~\cite{DBLP:conf/sigmod/ZhangCPSX14} (38,000 rows), (iv) National dataset~\cite{national} (15012 rows). Each dataset has a combination of numerical and categorical columns which are pre-processed according to the synthetic data generation algorithm as discussed in their respective research paper -- numerical attributes are discretized into 10 uniform bins or scaled between 0 to 1 and the categorical attributes are encoded using one-hot/ordinal encoding. We implement a pipeline that can generate different categories of missing data using the approach from Muzellec et al.~\cite{muzellec2020missing}. We run experiments on all the three types of missingness for every dataset and go up to 30\% missing values except the national dataset, where we stop at 20\% due to the lower number of rows in the original dataset. 
\else
We run our experiments on four tabular datasets as described in Table~\ref{tab:datasets}: (i) Adult dataset~\cite{uci_repo}, which contains information about 32561 individuals from the 1994 US Census (ii) Bank dataset~\cite{bank}, which has 45211 rows about direct marketing campaigns of a Portuguese banking institution (iii) BR2000~\cite{DBLP:conf/sigmod/ZhangCPSX14}, which consists of 38,000 census records collected from Brazil in the year 2000, (iv) National dataset~\cite{national} from NIST Diverse Community Excerpts which contains information about 15012 individuals US census data. Each dataset has a combination of numerical and categorical columns which are pre-processed according to the synthetic data generation algorithm as discussed in their respective research paper -- numerical attributes are discretized into 10 uniform bins or scaled between 0 to 1 and the categorical attributes are encoded using one-hot or ordinal encoding. 
\fi

\stitle{Baselines.} 
\revision{We consider several existing differentially private data generation methods that do not consider missing data:  PrivBayes~\cite{DBLP:conf/sigmod/ZhangCPSX14} and AIM~\cite{mckenna2022aim} from statistical approaches, DPCTGAN~\cite{fang2022dp} and DPautoGAN~\cite{tantipongpipat2021differentially} from deep learning techniques, and Kamino~\cite{ge2021kamino}, which is a mixed approach. These methods have been 
published in well-known conferences with their code readily available. For each method, we construct the following baselines to deal with missing data as discussed in Section~\ref{sec:vanilla}. The first baseline referred by its original name combines the original method with \underline{the complete row-only approach}. Next, we construct baselines for \underline{the imputation first approach}, which first imputes the data and then generates the synthetic data. We initially considered differentially private data imputation using k-means~\cite{DBLP:journals/tissec/CliftonHMM22} and OLS regression~\cite{DBLP:journals/corr/abs-2206-15063}, but they fail in working with categorical columns and require a large privacy budget for the imputation process itself. Hence, we adopt the following imputation methods: (i) random imputation, which fills missing values randomly from the attribute domain; (ii) mean imputation, replacing missing numerical attribute values with the mean and sampling from the probability distribution for categorical attributes; and (iii) Kamino imputation, which leverages intermediate models from Kamino for imputation and shares similar model as  Holoclean~\cite{aimnet,holocleancode}. Random impute requires no privacy budget, but the other two require splitting of the privacy budget for imputation and for the data generation algorithm.
}

\begin{figure*}
\centering

\subfigure[Adult 2-way ($\downarrow$)]{\includegraphics[width=.24\textwidth]{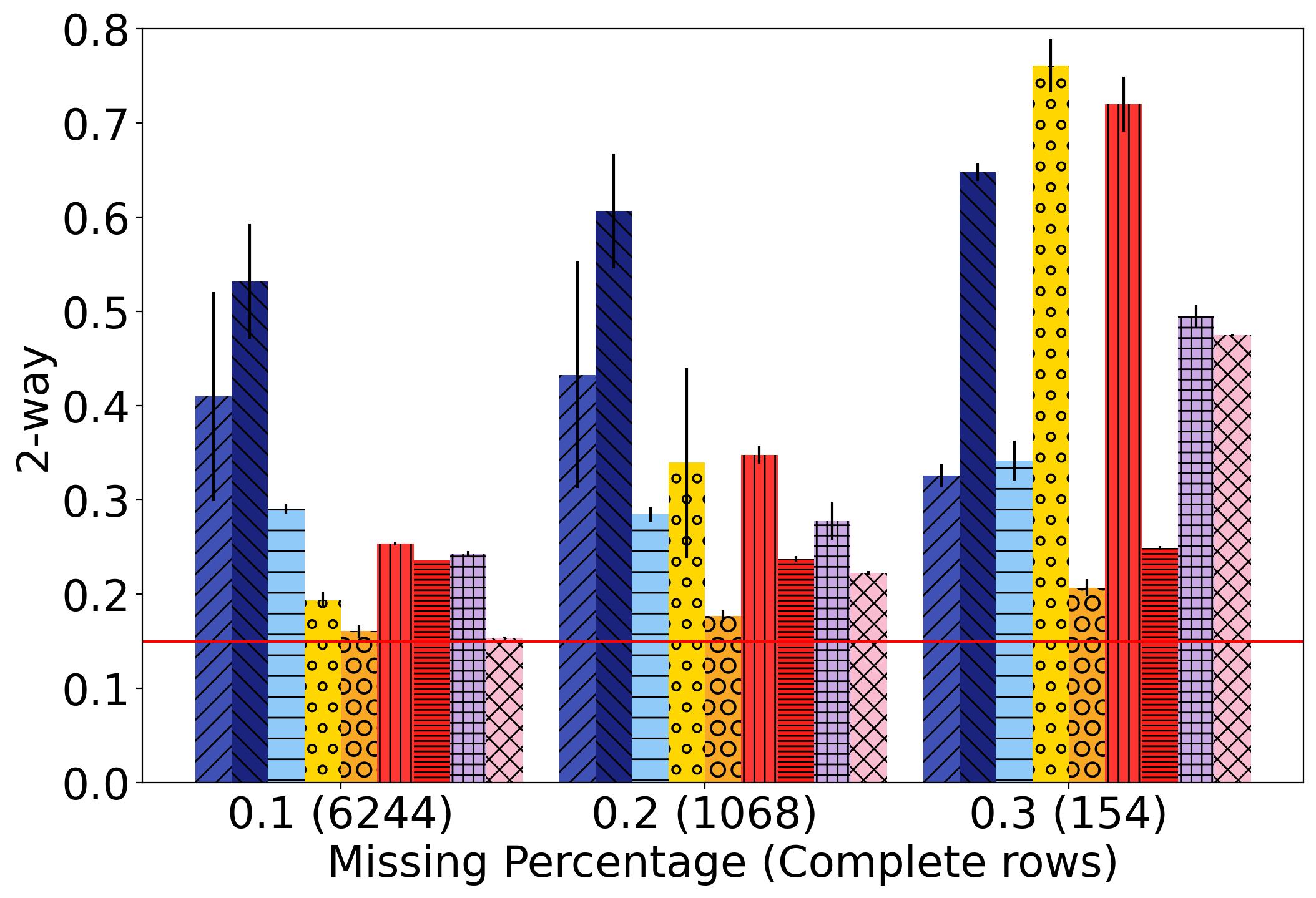}}
\subfigure[Bank 2-way ($\downarrow$)]{\includegraphics[width=.24\textwidth]{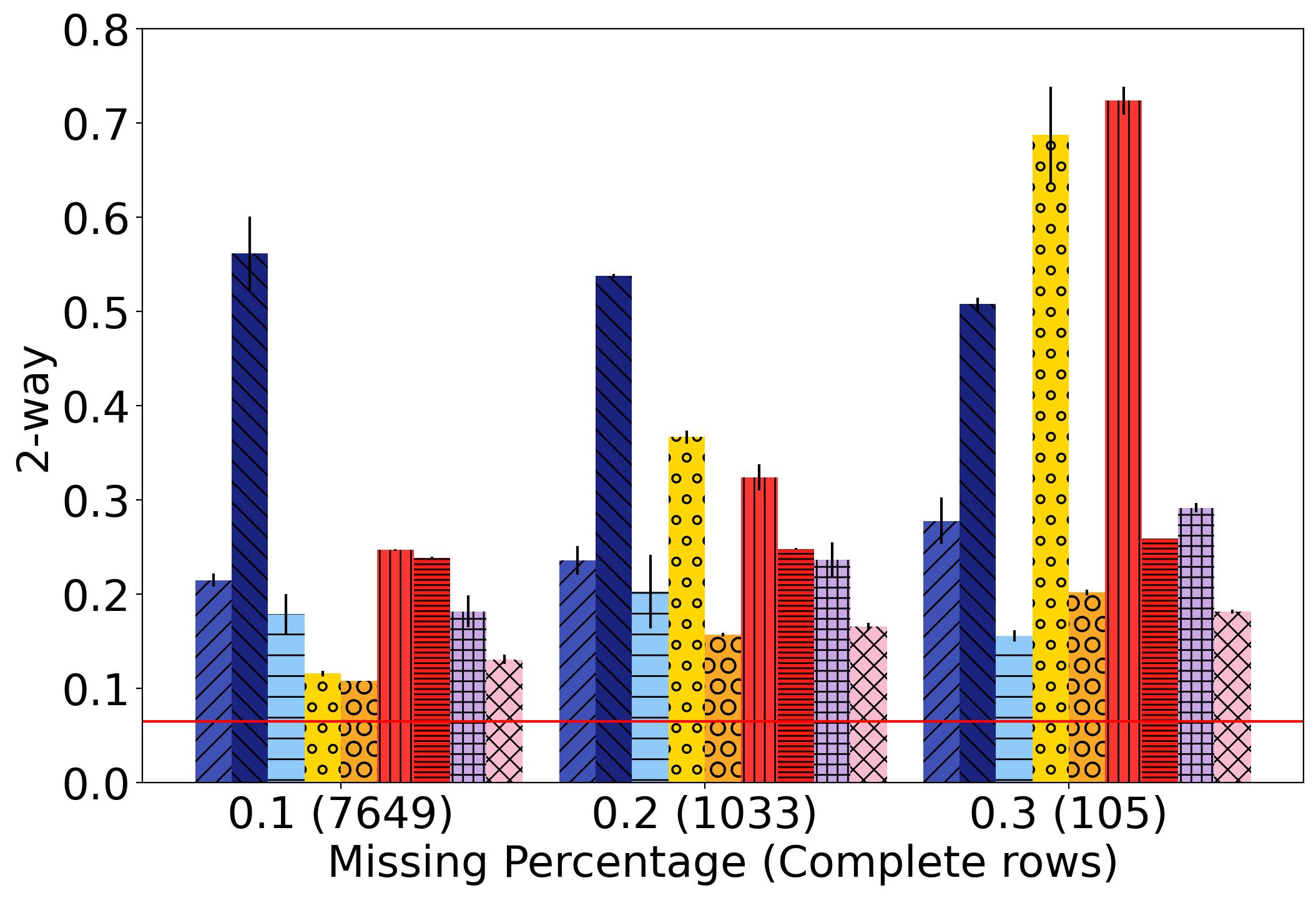}}
\subfigure[BR2000 2-way ($\downarrow$)]{\includegraphics[width=.24\textwidth]{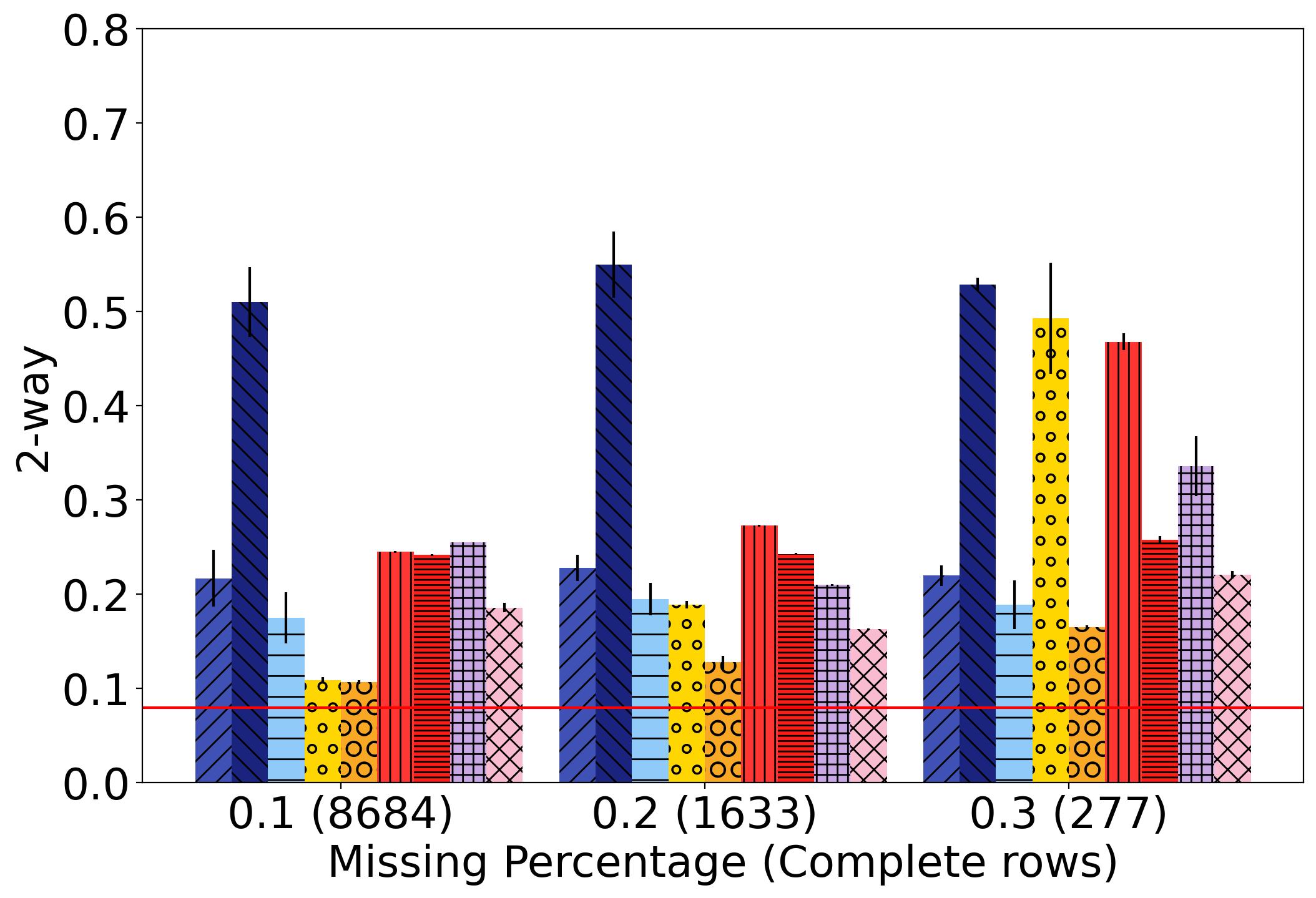}}
\subfigure[National 2-way ($\downarrow$)]{\includegraphics[width=.24\textwidth]{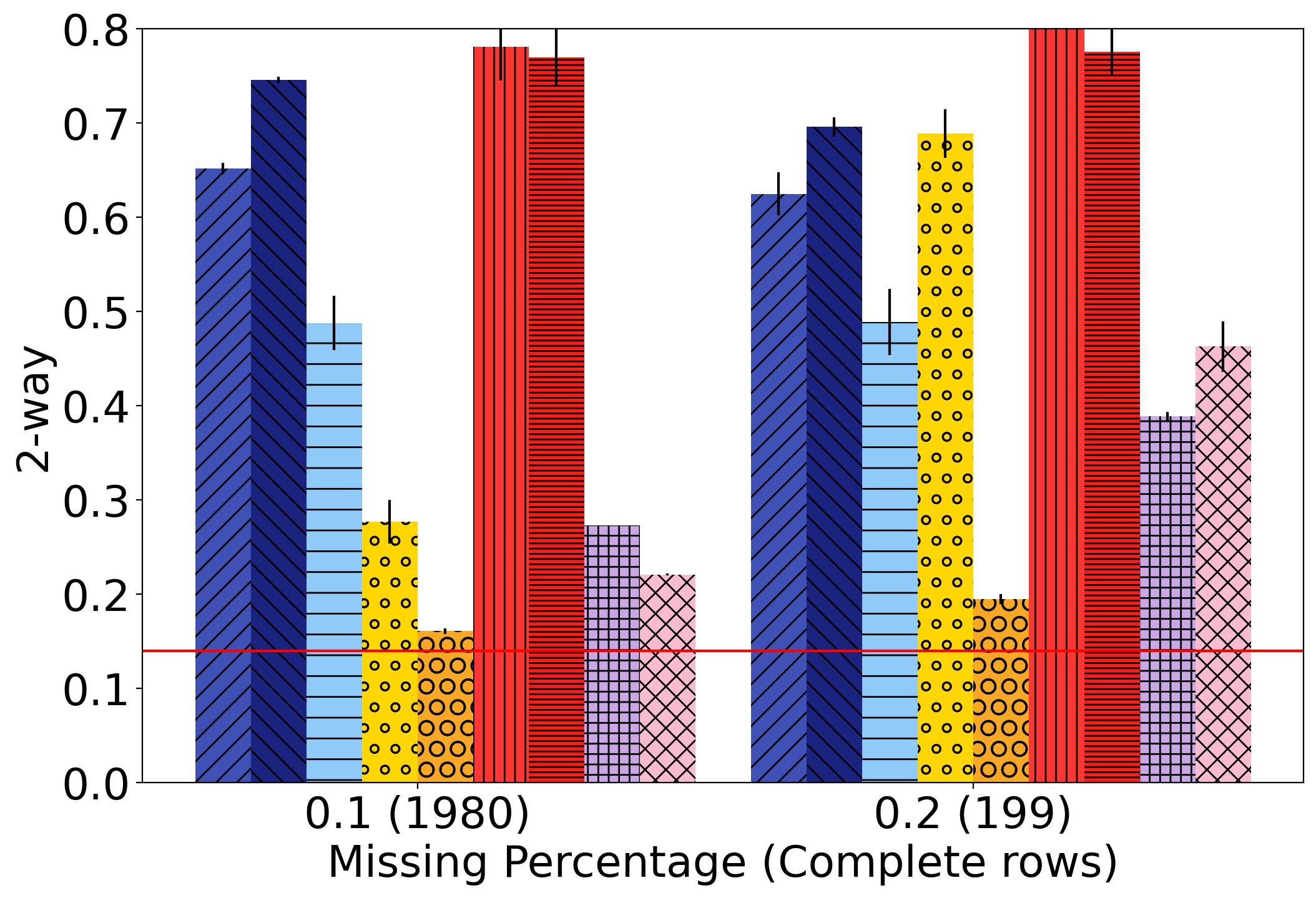}}

\subfigure[Adult F1-score ($\uparrow$)]{\includegraphics[width=.24\textwidth]{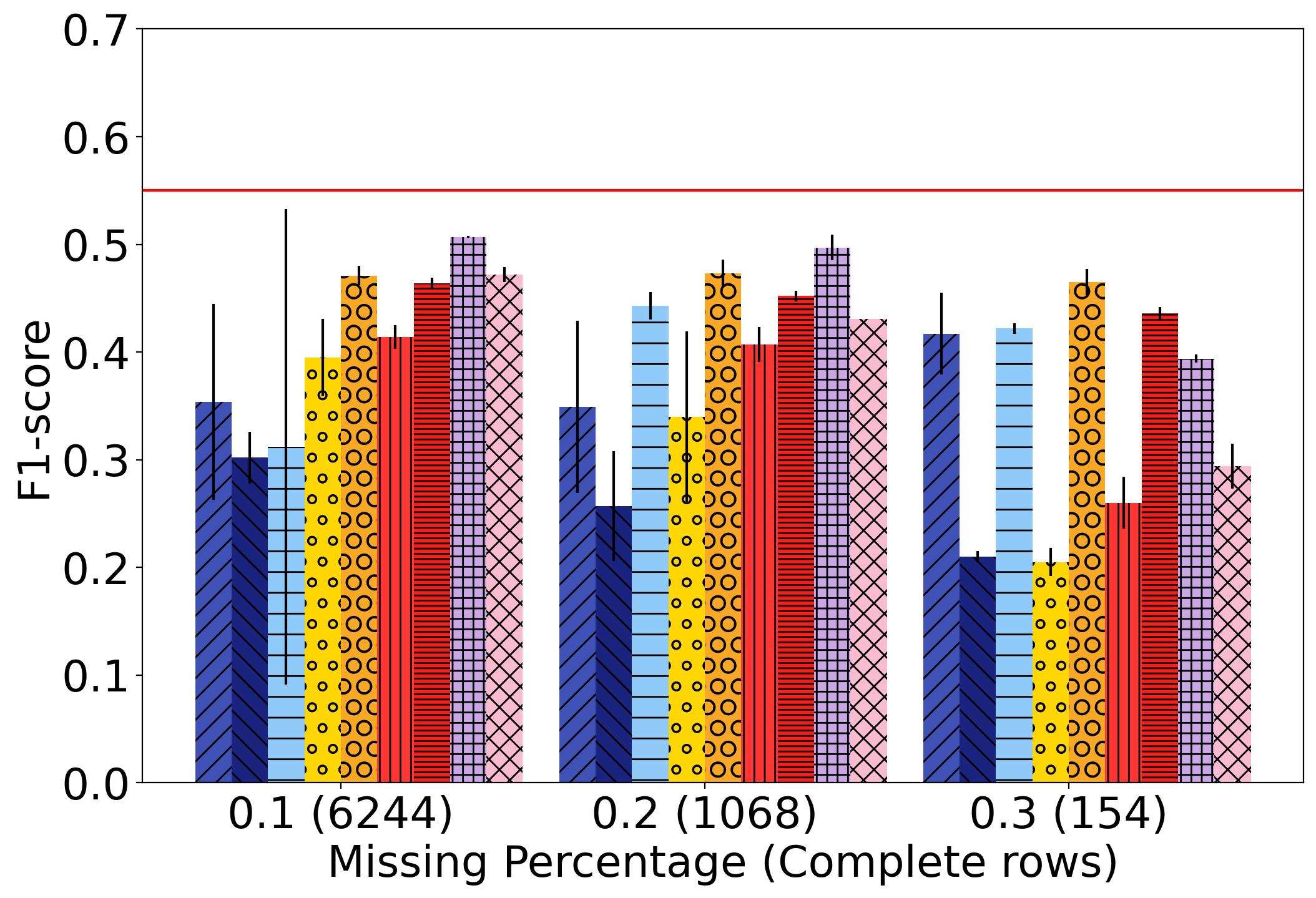}}
\subfigure[Bank F1-score ($\uparrow$)]{\includegraphics[width=.24\textwidth]{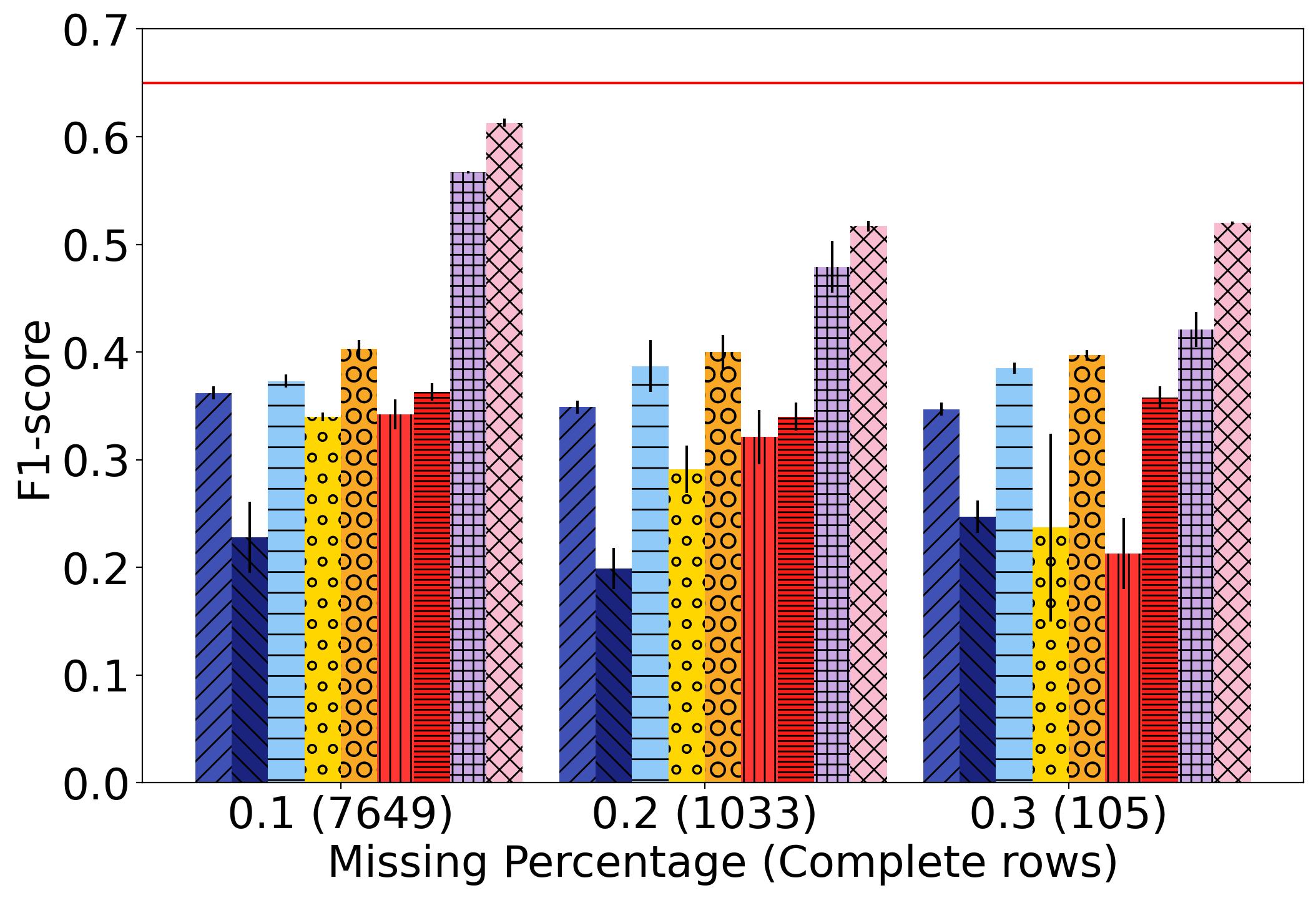}}
\subfigure[BR2000 F1-score ($\uparrow$)]{\includegraphics[width=.24\textwidth]{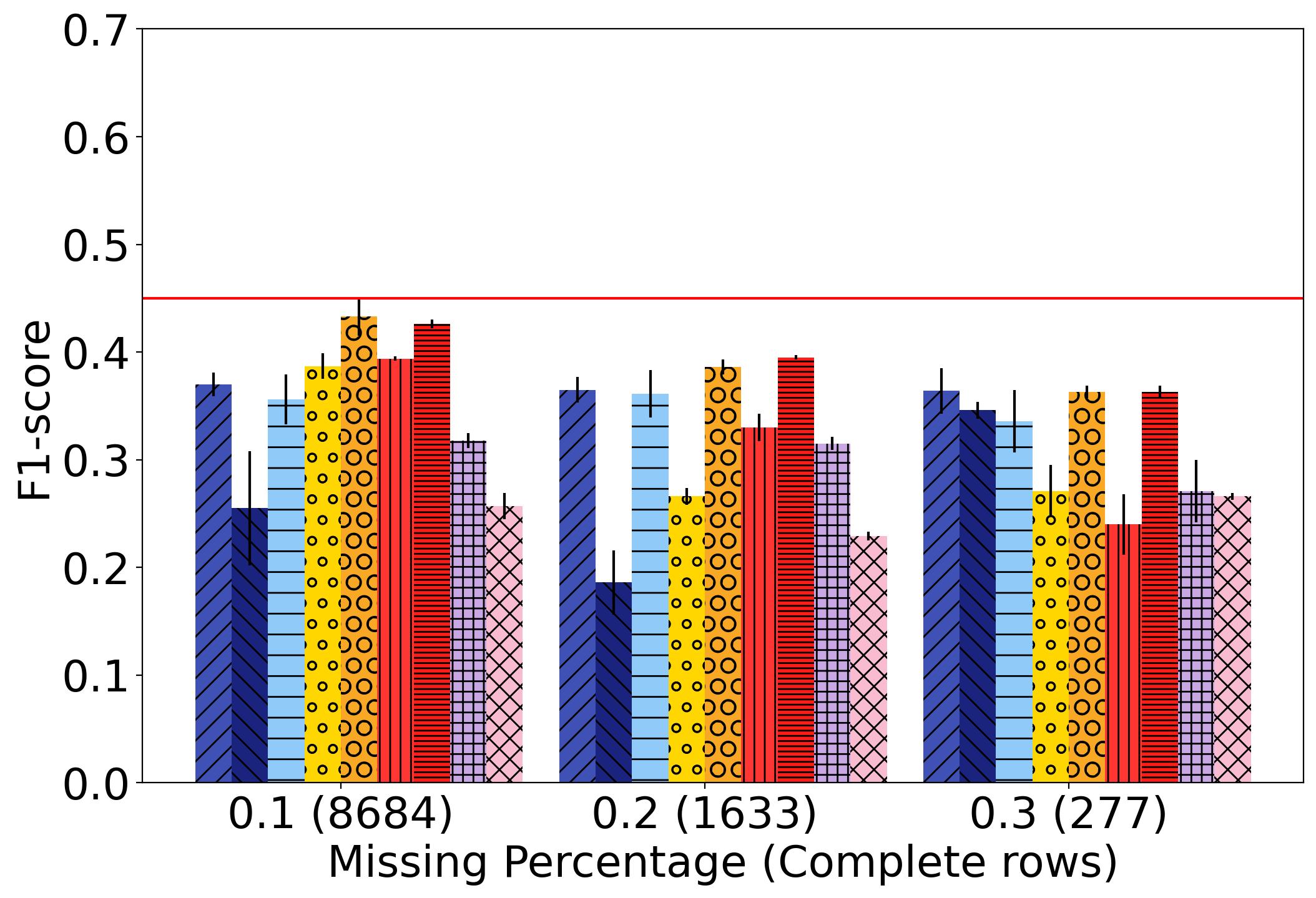}}
\subfigure[National F1-score ($\uparrow$)]{\includegraphics[width=.24\textwidth]{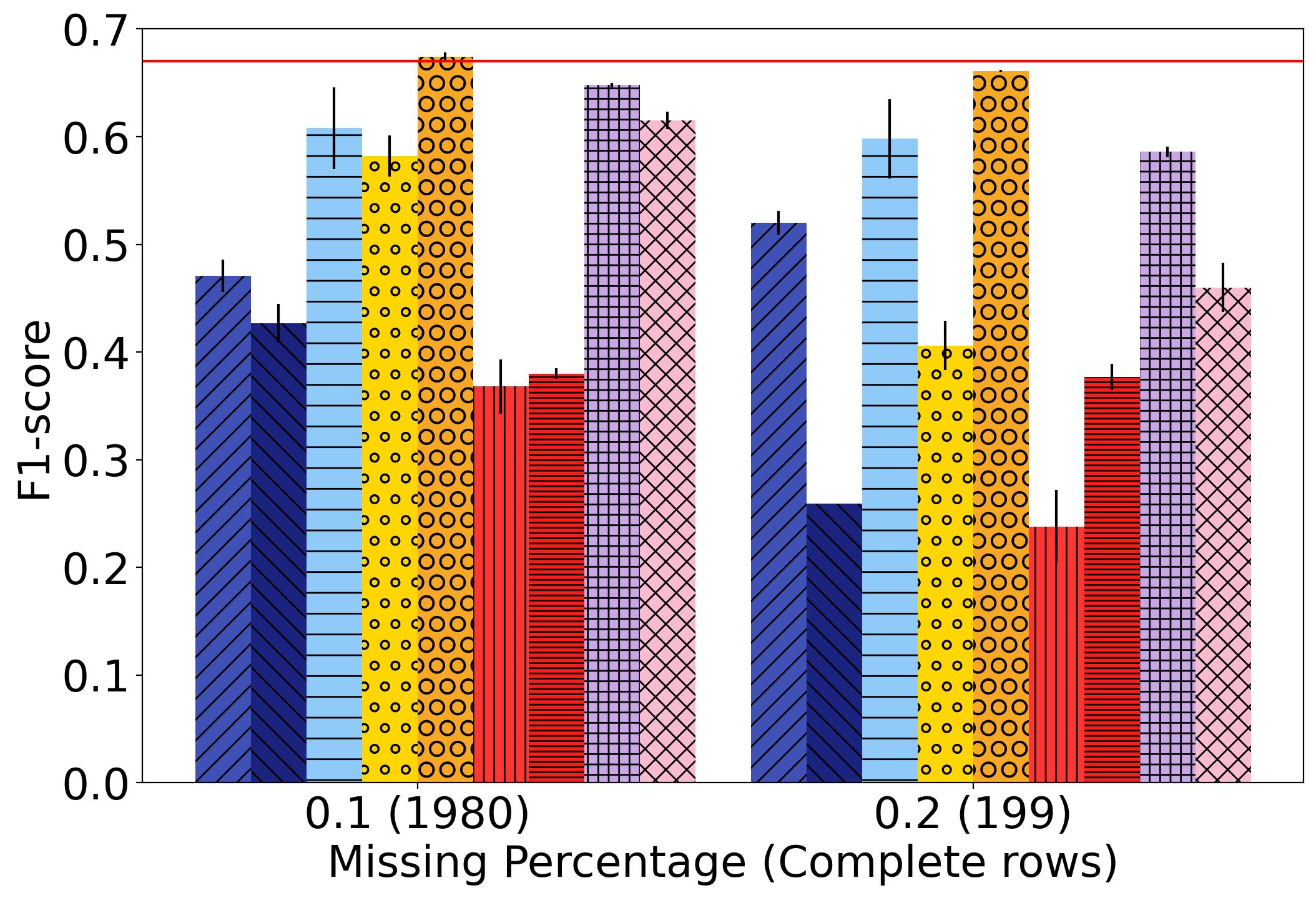}}

\subfigure{\includegraphics[width=\textwidth]{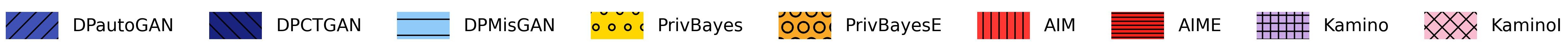}}

\vspace{-3mm}
\caption{
\revision{Adaptive methods (DPMisGAN, PrivBayesE, KaminoI) vs their respective baselines using the complete row approach (DPautoGAN, PrivBayes, Kamino)} 
%Adaptive methods vs baselines 
with MCAR missingness at $\epsilon=1$. Algorithms of the same category are colored with the same shade. The adaptive methods result in better quality synthetic data. The red line denotes the best no missing baseline. }
\label{fig:exp2}
\end{figure*}

\stitle{\revision{Parameters.}} \revision{The utility of the algorithms}  
replies upon multiple hyperparameters, and we try our best to tune these parameters by running grid searches.
Although we don't consider privacy costs for tuning, it's crucial to tune these parameters privately in practice~\cite{mohapatra2021role, papernothyperparameter}. Deep learning methods typically require a larger privacy budget for meaningful results. Hence, we assign PrivBayes and Kamino $\epsilon=1$, \revision{and} GAN approaches  $\epsilon=3$ by default. Adaptive methods maintain the same budget as their non-adaptive counterparts. PrivBayes operates under pure DP ($\delta = 0$), while others employ approximate DP. The $\delta$ value is approximately set one magnitude lower than $1/|D|$ to the nearest exponent of 10.

\ifpaper
\else
\stitle{Missing data.}
We implement a pipeline that can generate different categories of missing data at different missing percentages, using the approach from Muzellec et al.~\cite{muzellec2020missing}. MCAR missing data is generated by masking the original dataset using a realization of a Bernoulli variable with a fixed parameter such that there is exactly the required number of missing values. To generate MAR data, we first use 50\% of the attributes in the dataset as features for a logistic regression model. The other attributes then have missing values according to random weights in the logistic model. A bias term is fitted using line search to attain the desired proportion of missing values. The MNAR approach works similarly to MAR, with the difference that the 50\% non-missing attributes are masked by an MCAR mechanism. This imposes the logistic model to depend potentially on missing values, hence enforcing MNAR missingness. As 50\% of the columns in the MAR mechanism has complete values, it comparatively has more complete rows in comparison to MCAR and MNAR. We run experiments on all kinds of missing types for every dataset and go up to 30\% missing values except the national dataset, where we stop at 20\% due to the lower number of rows in the original dataset. 
\fi

\stitle{Metrics.}
\ifpaper 
We use two utility metrics for evaluation: \revision{(i) the average variational distance between the $k$-way marginals of the ground truth data and that of the synthetic data;  and (ii) the average F1 scores of 9 classification models in classifying each attribute in the dataset.
All our experiments are repeated 3 times and reported with their the mean with standard deviation. For space constraints, we show limited metrics for some experiments and defer others and the details of the to full paper~\cite{full_paper}.}
\else
We use two utility metrics for evaluation.
The first metric is $k$-way marginal distance. 
We calculate all $k$-sized combinations of attributes in the dataset~\cite{DBLP:conf/sigmod/ZhangCPSX14, ge2021kamino}, and then report the average  variational distance between the true vs the synthetic marginals. For each $k$-sized combination set of attributes in $A \in \mathcal{A}$, we calculate the marginal, $h:\mathcal{D} \rightarrow  \mathbb{R}^{|\mathcal{D}(A)|}$ and report the average variational distance between as $\max_{a\in \mathcal{D}(A)} |h(D^\prime)[a]-h(\bar{D})[a]|$ where $D^\prime$ and $\bar{D}$ are the synthetic data and ground true data respectively~\cite{DBLP:conf/sigmod/ZhangCPSX14, ge2021kamino}. In our evaluation, we set $k=1,2$. Smaller values of this metric show more closeness between the true and synthetic data. 
Our second metric is model training. We consider 9 classification models (LogisticRegression, AdaBoost, GradientBoost, XGBoost, RandomForest, BernoulliNB, DecisionTree, Bagging, and MLP) to classify each attribute in the dataset (e.g., income is more than 50k or not, the loan should be given or not to user) using all other attributes as features. Each target attribute is processed to be a binary attribute, and we try to balance the two classes as much as possible. 
The quality of the learning task is represented by the average of all models across all attributes. The F1 score is reported for learning quality. Each model is trained using 70\% of the synthetic data, and the F1 score is evaluated using 30\% of the true data~\cite{ge2021kamino}. Larger values of this metric show better performance. For all our experiments, we run 3 times and report the mean with standard deviation. 
\fi 

\ifpaper
\begin{figure*} [htb]
    \centering
    \subfigure[Adult]{\includegraphics[width=.24\textwidth]{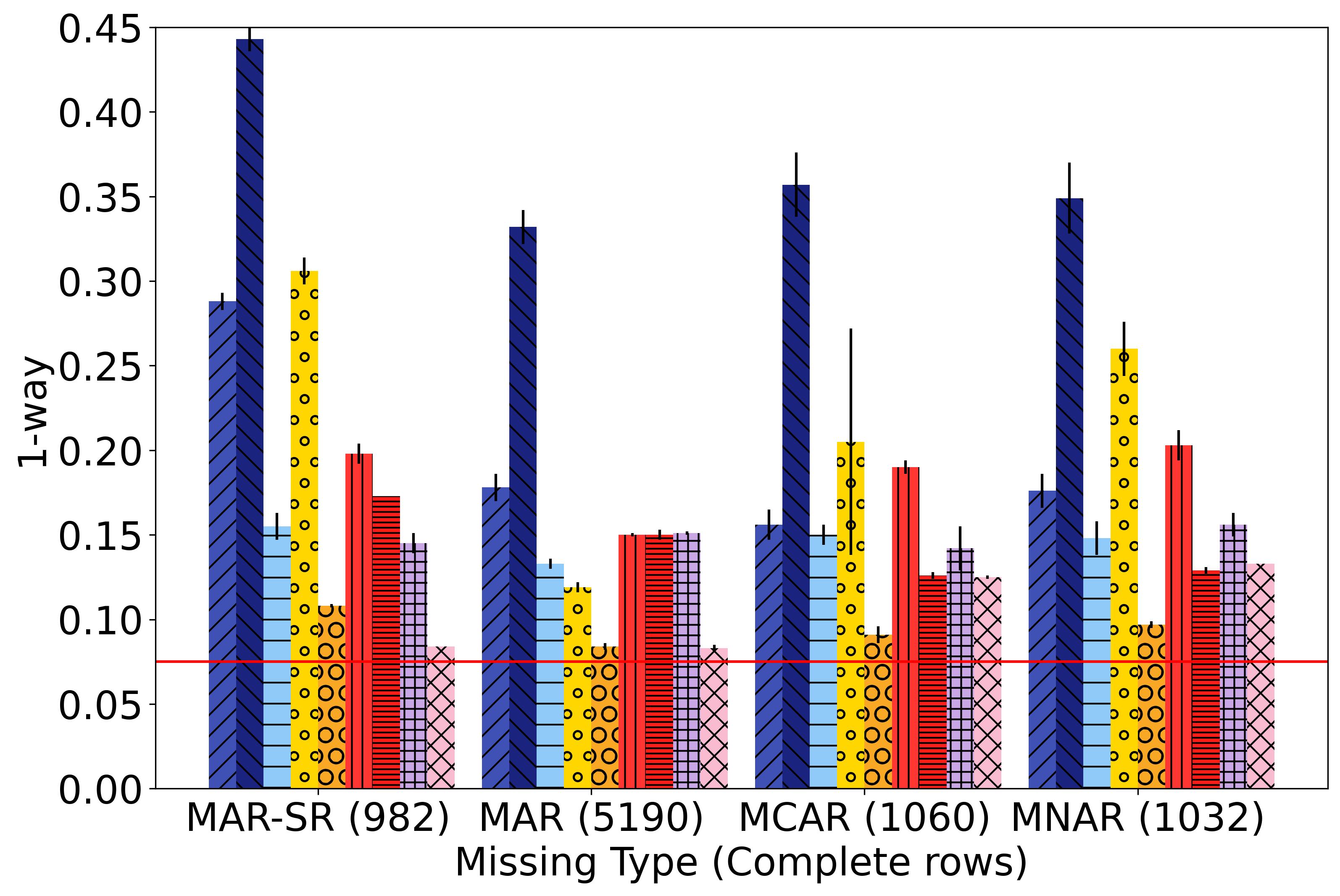}}
    \subfigure[Bank]{\includegraphics[width=.24\textwidth]{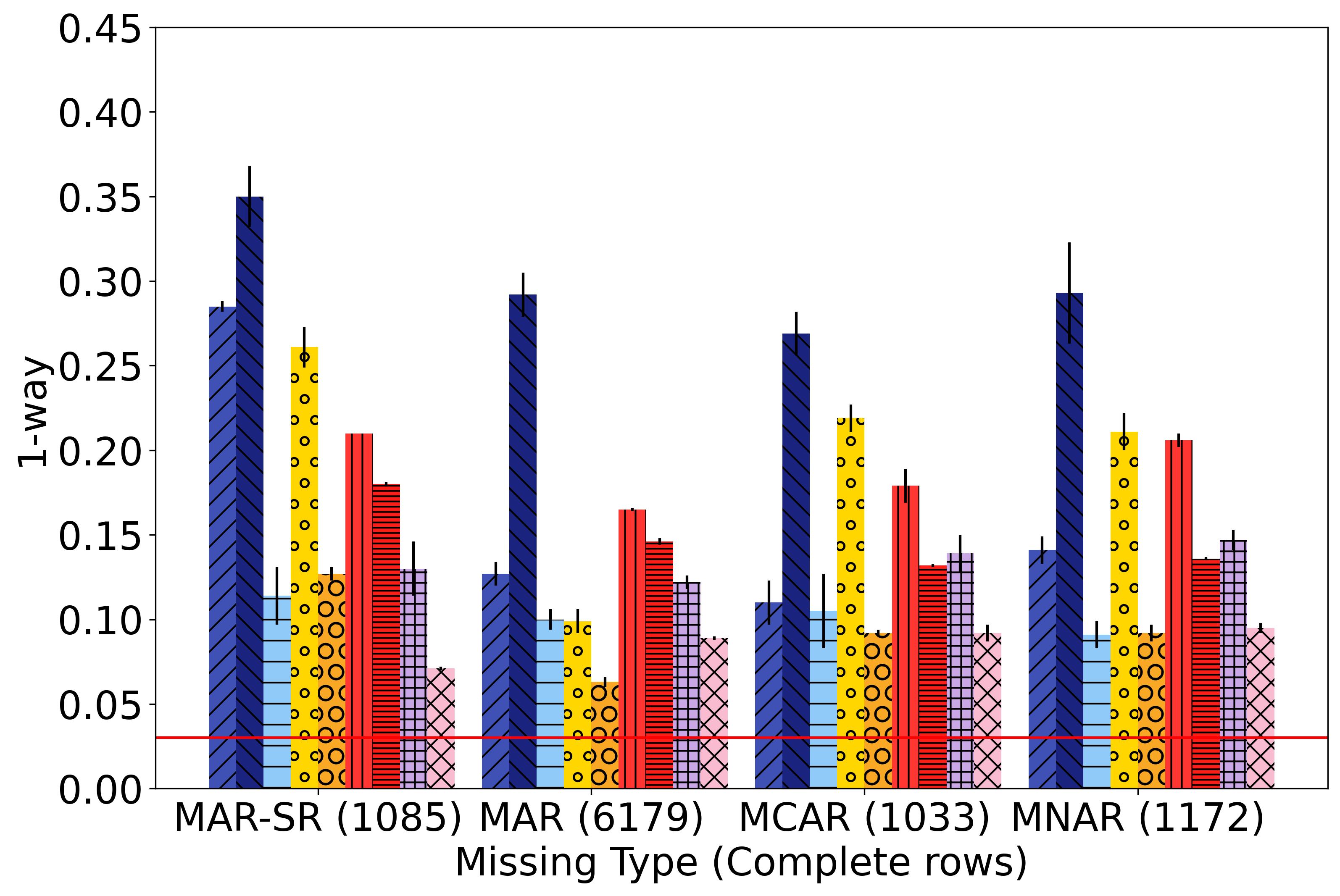}}
    \subfigure[BR2000]{\includegraphics[width=.24\textwidth]{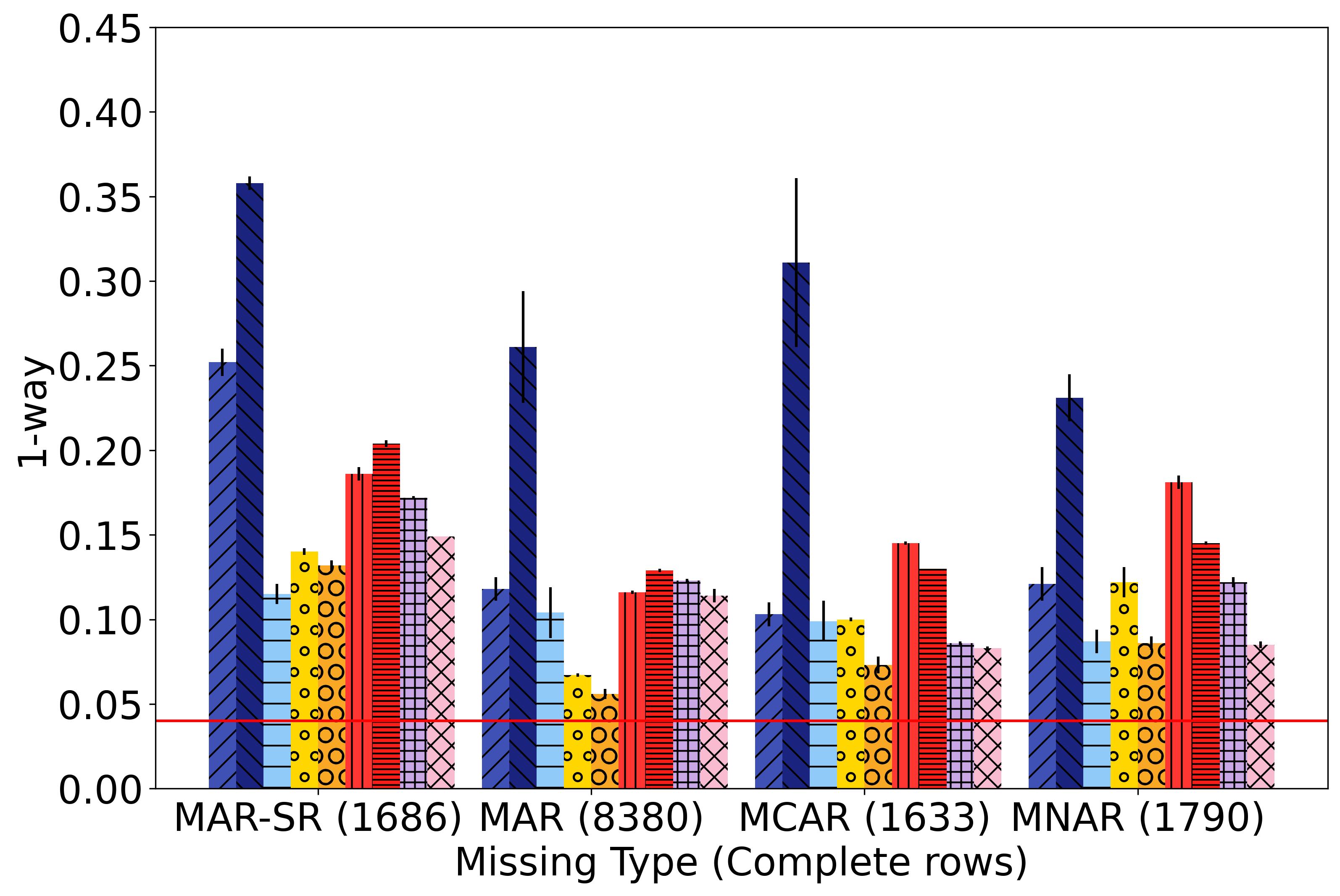}}
    \subfigure[National]{\includegraphics[width=.24\textwidth]{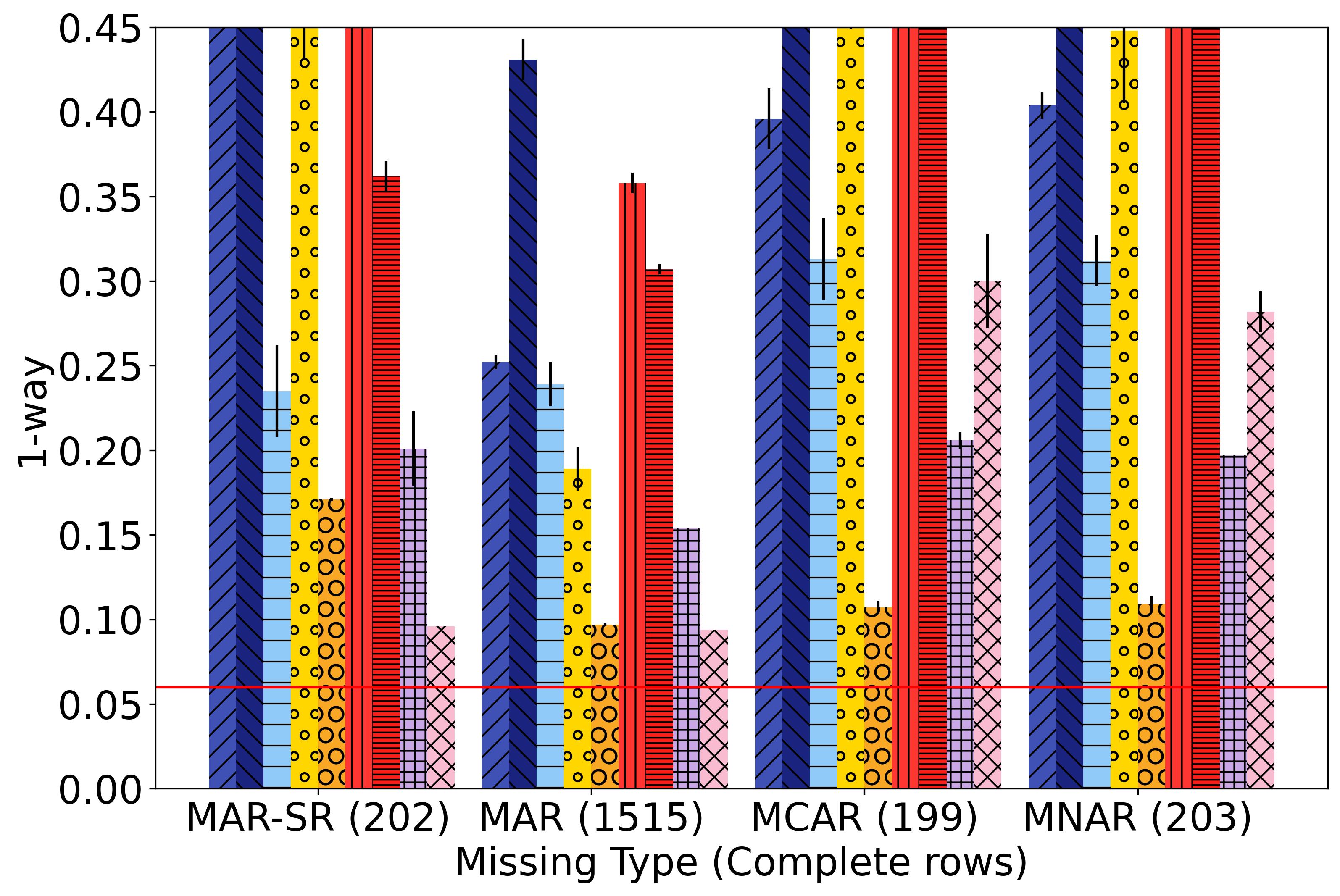}}
    \subfigure{\includegraphics[width=0.8\textwidth]{images/Exp3/Exp3-Legend.jpg}}
    \caption{\revision{Adaptive methods (DPMisGAN, PrivBayesE, KaminoI) vs their respective baselines using the complete row approach (DPautoGAN, PrivBayes, Kamino)} on different missing mechanisms. Red line denotes the no missing baseline.}
    \label{fig:exp3}
\end{figure*}
\else
\begin{figure*}
\centering
\subfigure[Adult 1-way ($\downarrow$)]{\includegraphics[width=.24\textwidth]{images/Exp2/adult-1-way.jpg}}
\subfigure[Bank 1-way ($\downarrow$)]{\includegraphics[width=.24\textwidth]{images/Exp2/bank-1-way.jpg}}
\subfigure[BR2000 1-way ($\downarrow$)]{\includegraphics[width=.24\textwidth]{images/Exp2/br2000-1-way.jpg}}
\subfigure[National 1-way ($\downarrow$)]{\includegraphics[width=.24\textwidth]{images/Exp2/national-1-way.jpg}}

\subfigure[Adult 2-way ($\downarrow$)]{\includegraphics[width=.24\textwidth]{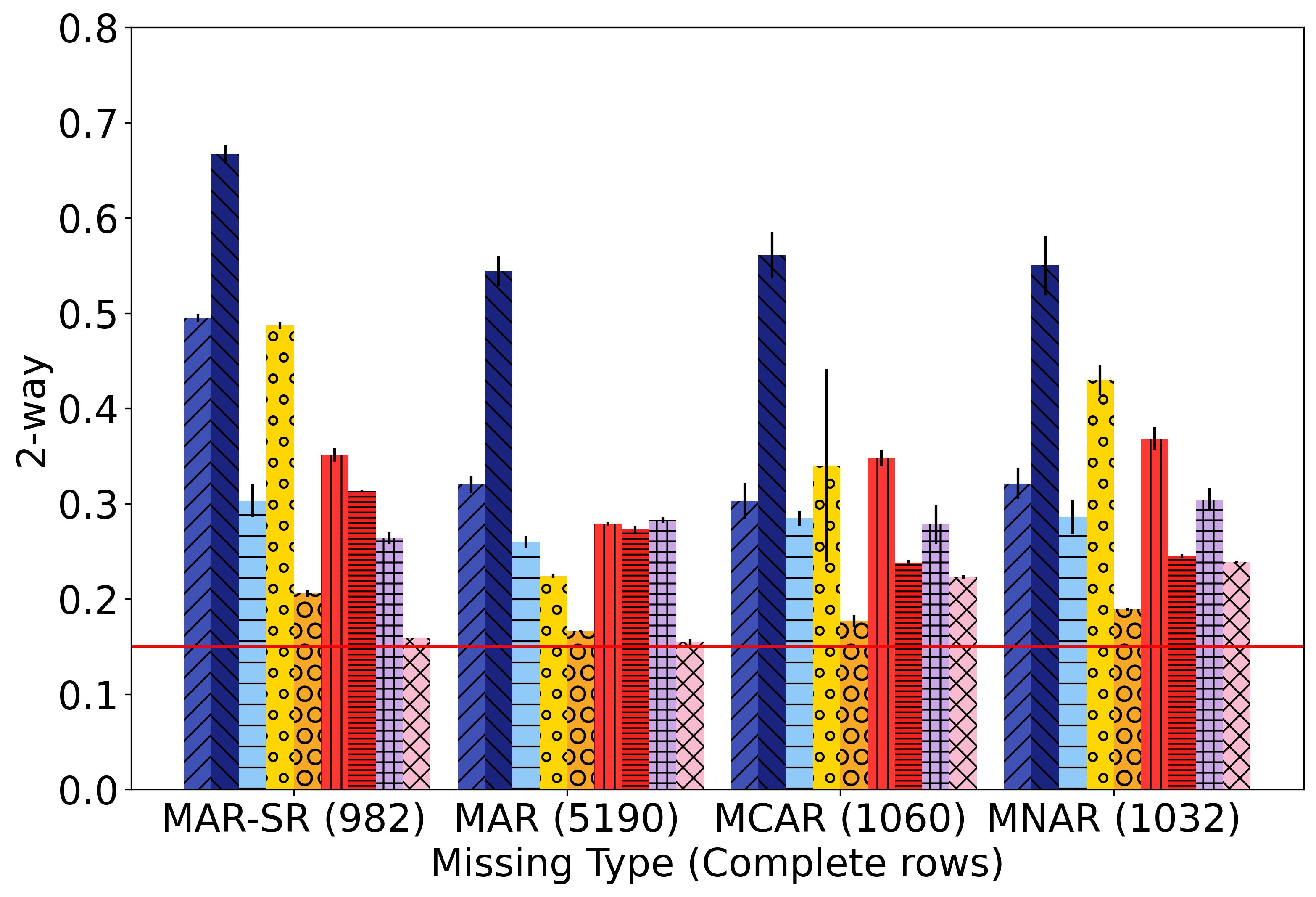}}
\subfigure[Bank 2-way ($\downarrow$)]{\includegraphics[width=.24\textwidth]{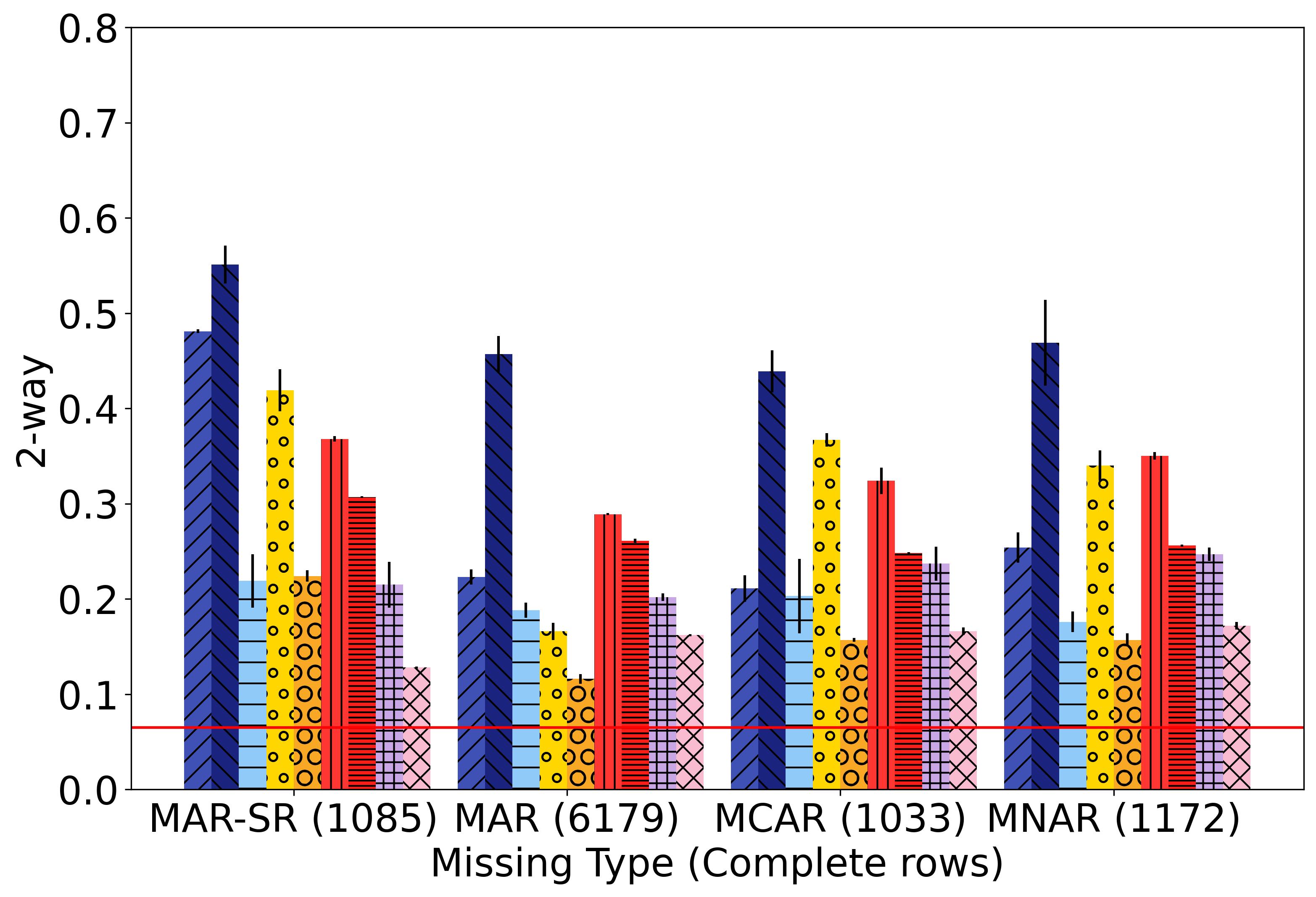}}
\subfigure[BR2000 2-way ($\downarrow$)]{\includegraphics[width=.24\textwidth]{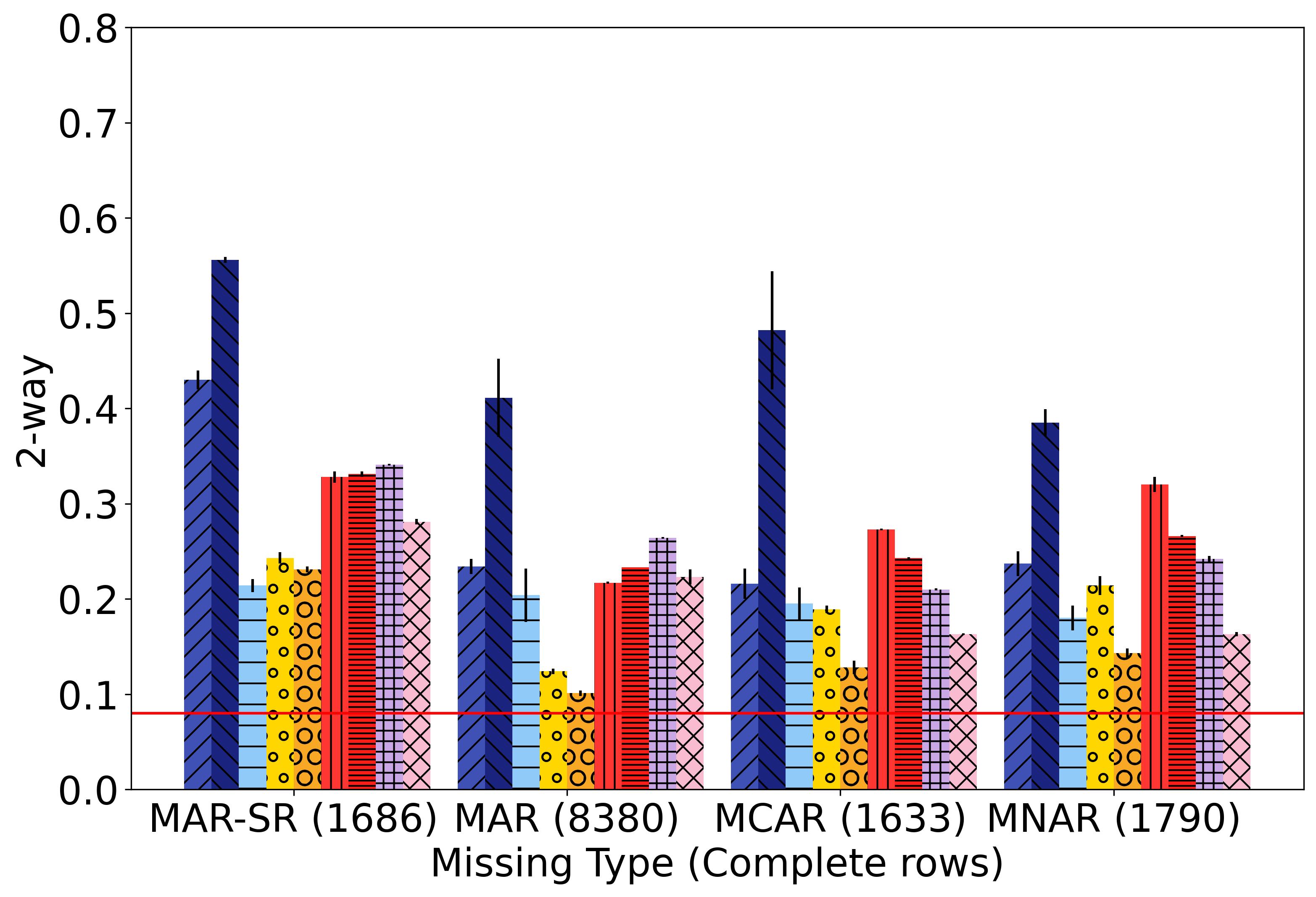}}
\subfigure[National 2-way ($\downarrow$)]{\includegraphics[width=.24\textwidth]{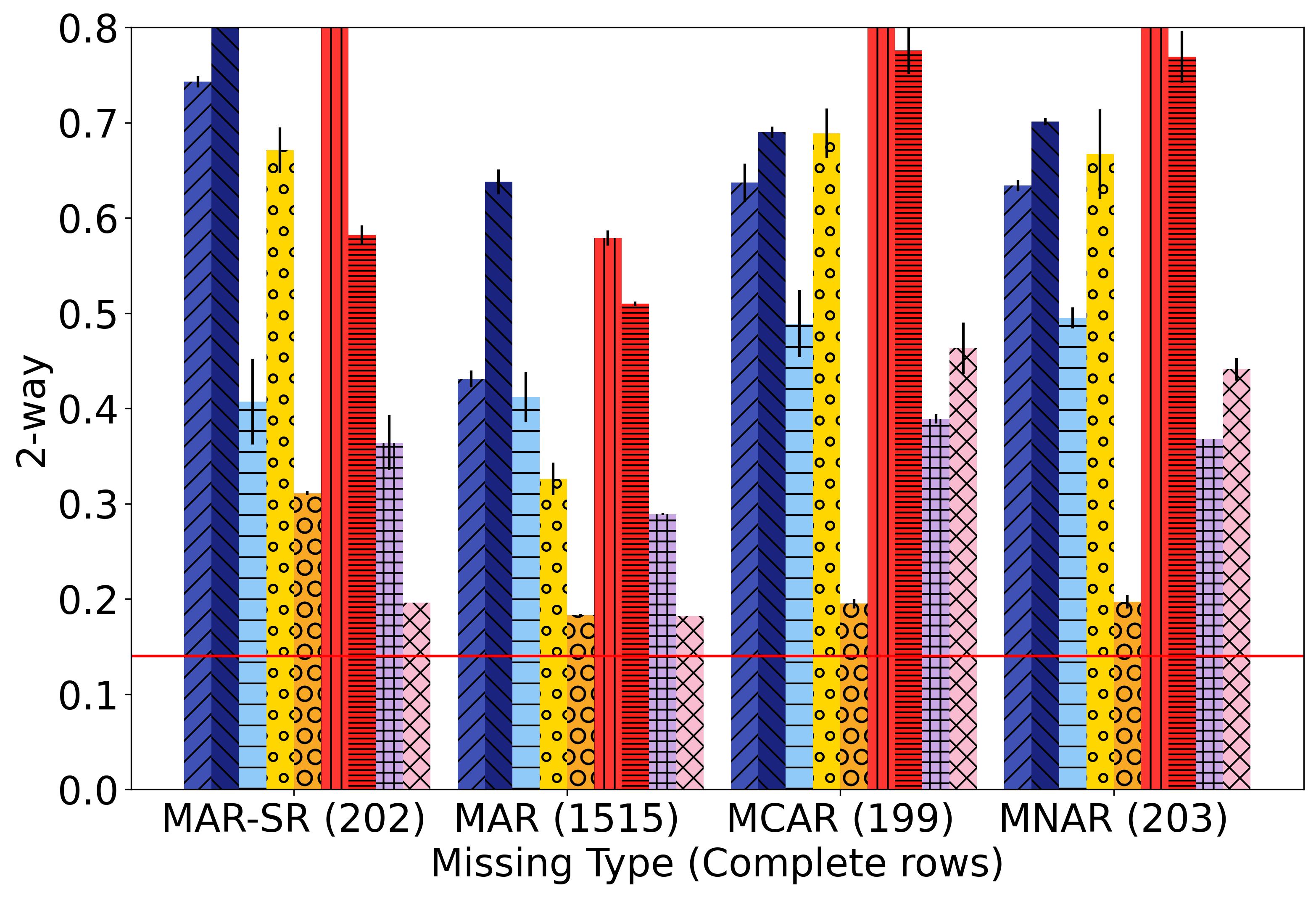}}

\subfigure[Adult F1-score ($\uparrow$)]{\includegraphics[width=.24\textwidth]{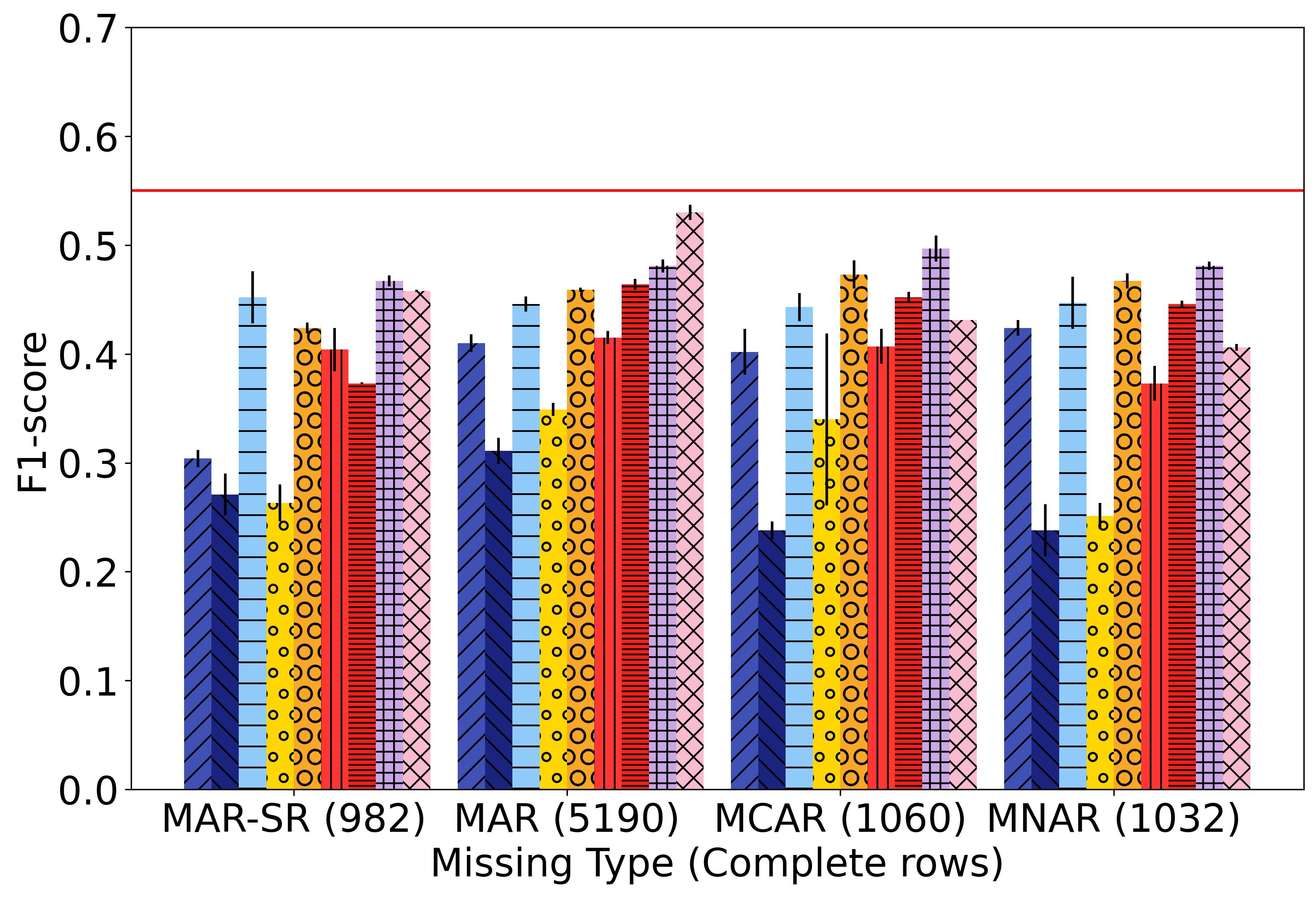}}
\subfigure[Bank F1-score ($\uparrow$)]{\includegraphics[width=.24\textwidth]{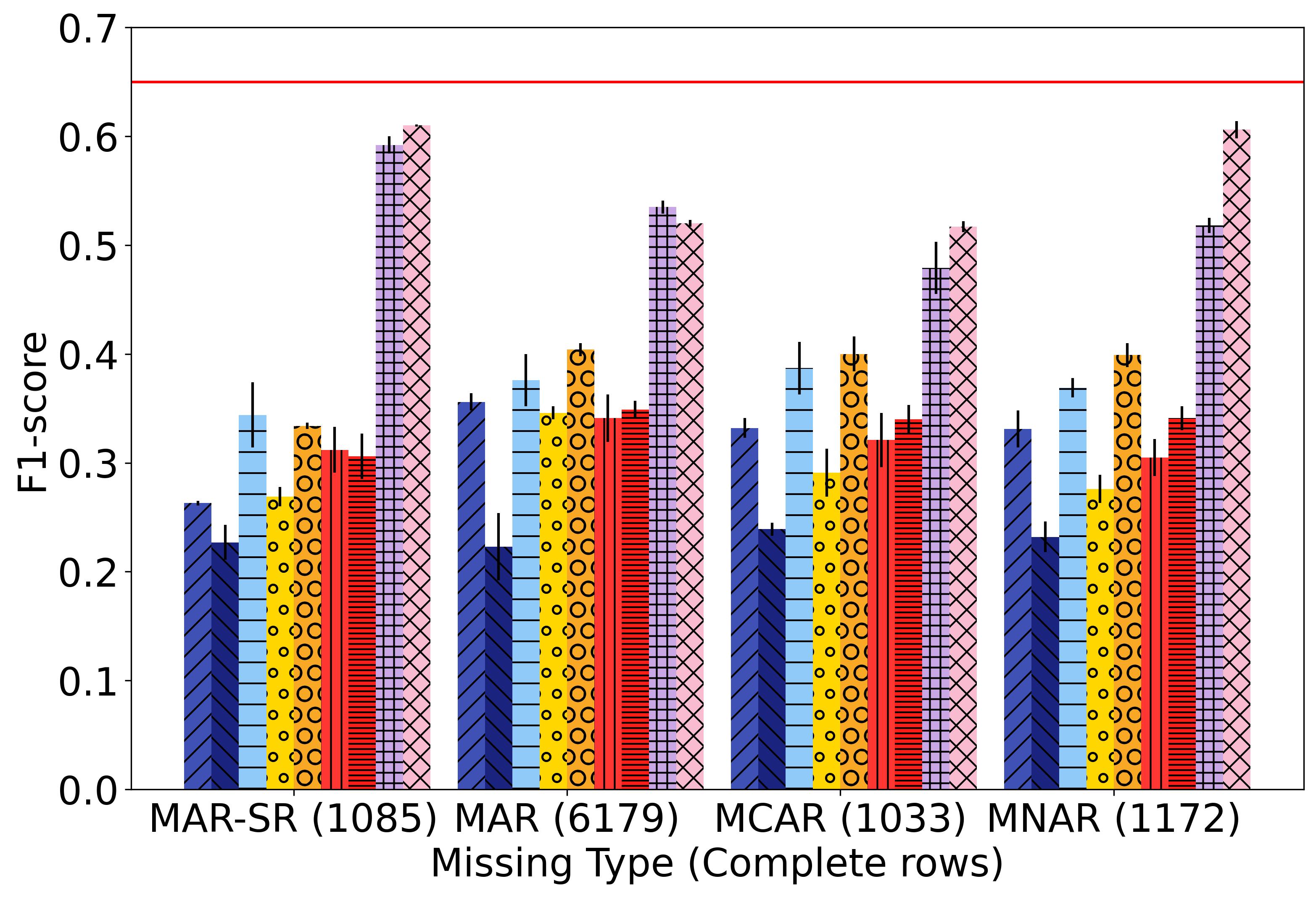}}
\subfigure[BR2000 F1-score ($\uparrow$)]{\includegraphics[width=.24\textwidth]{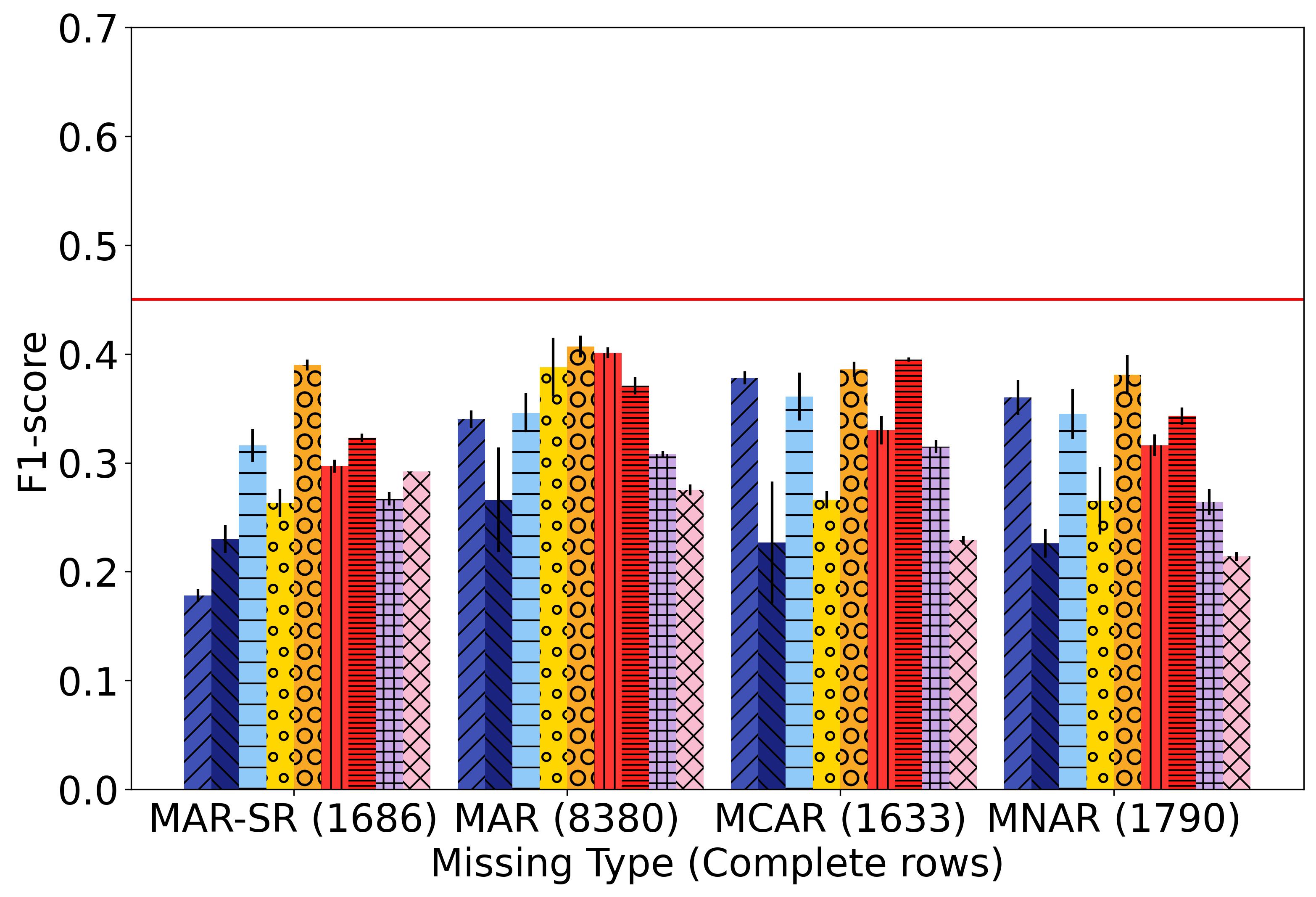}}
\subfigure[National F1-score ($\uparrow$)]{\includegraphics[width=.24\textwidth]{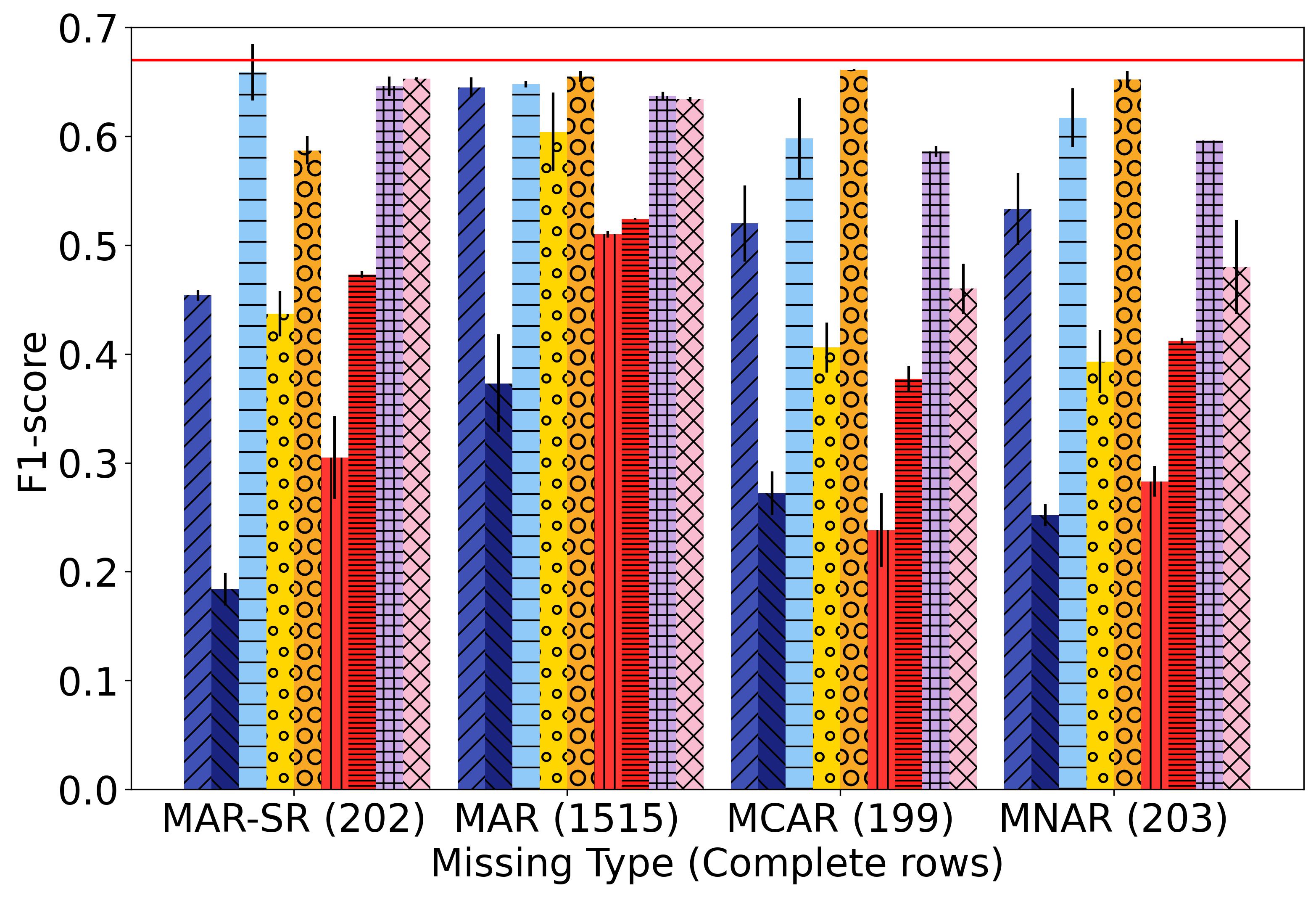}}

\subfigure{\includegraphics[width=\textwidth]{images/Exp3/Exp3-Legend.jpg}}
\caption{Adaptive methods vs non-adaptive methods on different missing mechanisms. The red line denotes the no missing baseline.}
\label{fig:exp3}
\end{figure*}
\fi

\subsection{Experimental Findings}
\revision{We evaluate the impact of missing values on DP synthetic data generation algorithms with different missing mechanisms and varying missing data from 1\% to 30\%. The quality of synthetic data generated by all algorithms degrades drastically, particularly when having >10\% of missing data. 
\ifpaper
We include the benchmark results in the full paper~\cite{full_paper} and show the following experiments for 10\%-30\% of missing data. 
We note that DPCTGAN has consistently poorer utility than DPautoGAN, and AIM shares similar observations as PrivBayes. Hence, we move their results to the full paper~\cite{full_paper}.} 
\else
\fi

\ifpaper
\else
\subsubsection{Benchmark Existing DP Methods on Missing Data}\label{sec:exp_benchmark}
We demonstrate the impact of missing values on DP synthetic data generation algorithms. 
Figure~\ref{fig:exp1} shows the performance of the baselines on the Adult dataset with varying levels of missing completely at random (MCAR) data (x-axis). Our results indicate that missing data up to 5\% minimally affects the quality of the synthetic data with 1-way and 2-way metrics experiencing 5-19\% impact, and F1-score experiencing 1-13\% impact. However, as missing data increases, the quality of synthetic data generated by all algorithms degrades drastically. With 20\% missing data, the 1-way metric is affected by 14-190\%, the 2-way by 18-147\%, and the F1-score is impacted by 6-28\%.
With high amounts of missing data ($ \geq 10\%)$, PrivBayes and DPCTGAN exhibit the most decrease in utility. However, Kamino and DPAutoGAN show a consistent level of performance even with more missing data indicating a high degree of stability in the output. The stability of these methods can be attributed to the fact that these methods tend to extract more information from the available data. Kamino learns the functional dependencies of the dataset and tries to preserve them while generating the synthetic data and DPAutoGAN trains an autoencoder as a preprocessing step to learn the low-dimensional statistics of the data. 
\fi
\vspace{-4mm}
\revision{
\subsubsection{Evaluate different strategies to deal with missing data}
\label{sec:expt_baselines}
For each existing DP algorithm for generating synthetic data, we compare its respective baseline methods and its adaptive recourse approach. Figure~\ref{fig:exp5} shows the performance of these methods across all four datasets, employing 20\% MCAR missing values. The red line denotes the baseline approach's performance without missing values.  For mean and Kamino imputation, we allocate three splits of the privacy budget (25\%, 50\%, and 75\%) for imputation and plot the average and standard deviation across all splits, reserving the rest for generating the synthetic dataset. We observe that no single imputation strategy consistently outperforms others, with the adaptive recourse approach generally yielding the best results. Hence, allowing learning and imputation to happen together is crucial. Additionally, the complete row approach often performs second best, serving as our baseline for subsequent experiments.
} 

\subsubsection{Evaluate Adaptive Recourse Approach}\label{sec:exp_adaptive_recourse}
We evaluate adaptive recourse approaches at various experimental configurations. 

\stitle{Varying missing percentage.}
In this experiment, we compare the best baseline from the previous experiment that uses the complete row approach with its corresponding adaptive recourse approach from Section~\ref{sec:adaptive}. We repeat our experiment on four different datasets with varying amount of MCAR data at two privacy levels ($\epsilon = 1, 3$). 
\ifpaper
We defer the results for $\epsilon =3$ to the full paper~\cite{full_paper} which have qualitatively similar results.
\else
\begin{figure*}
\centering
\subfigure[Adult 1-way ($\downarrow$)]{\includegraphics[width=.24\textwidth]{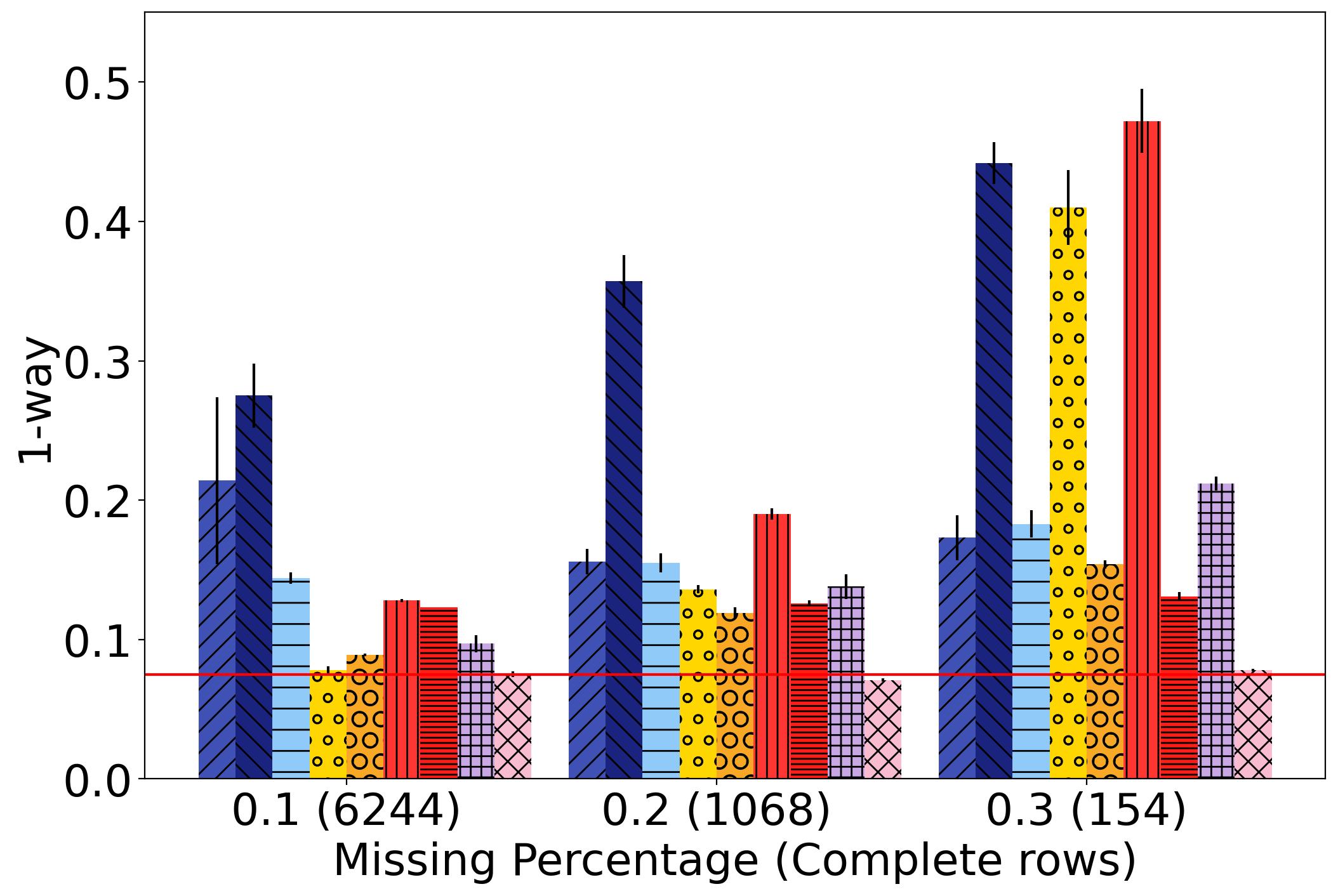}}
\subfigure[Bank 1-way ($\downarrow$)]{\includegraphics[width=.24\textwidth]{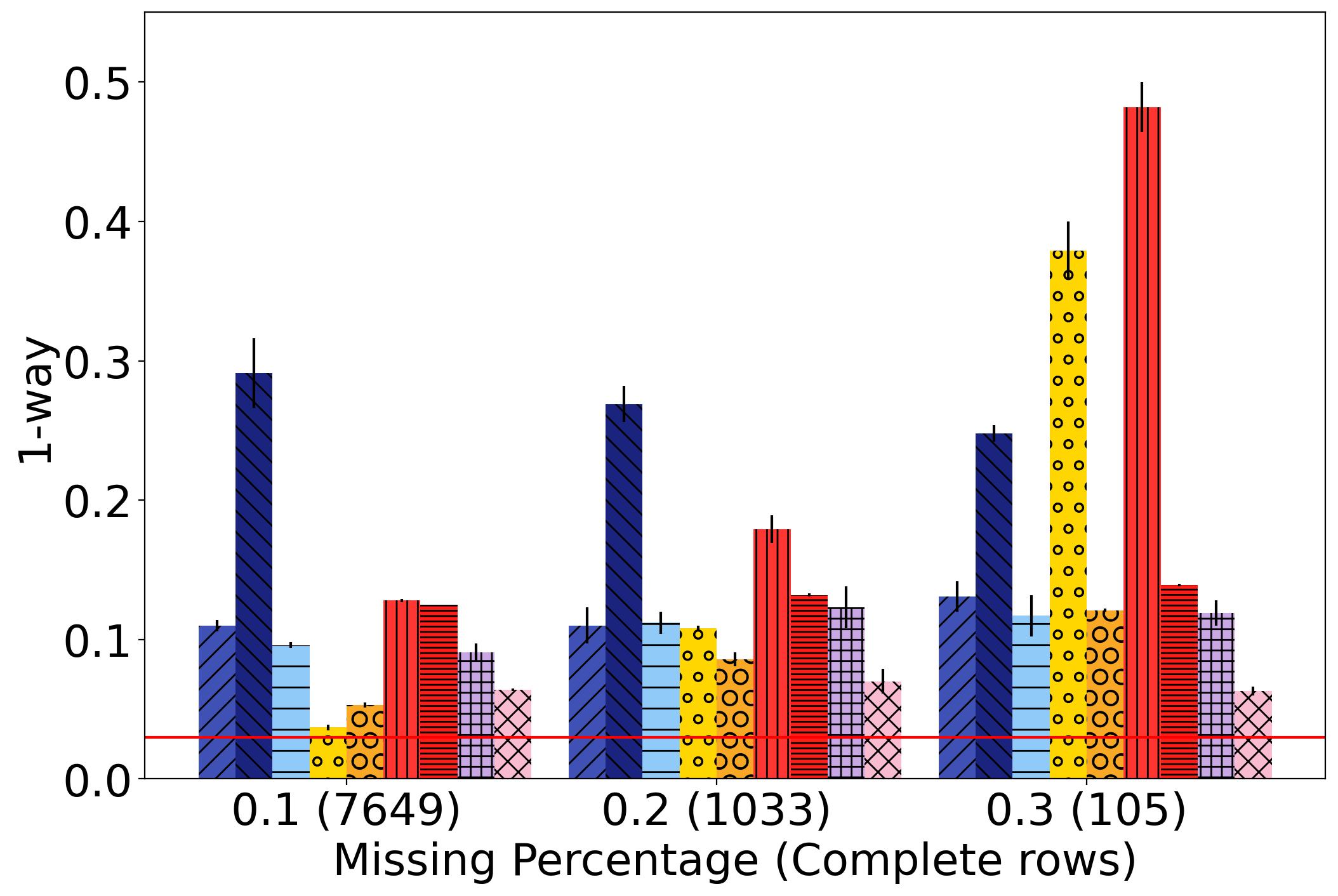}}
\subfigure[BR2000 1-way ($\downarrow$)]{\includegraphics[width=.24\textwidth]{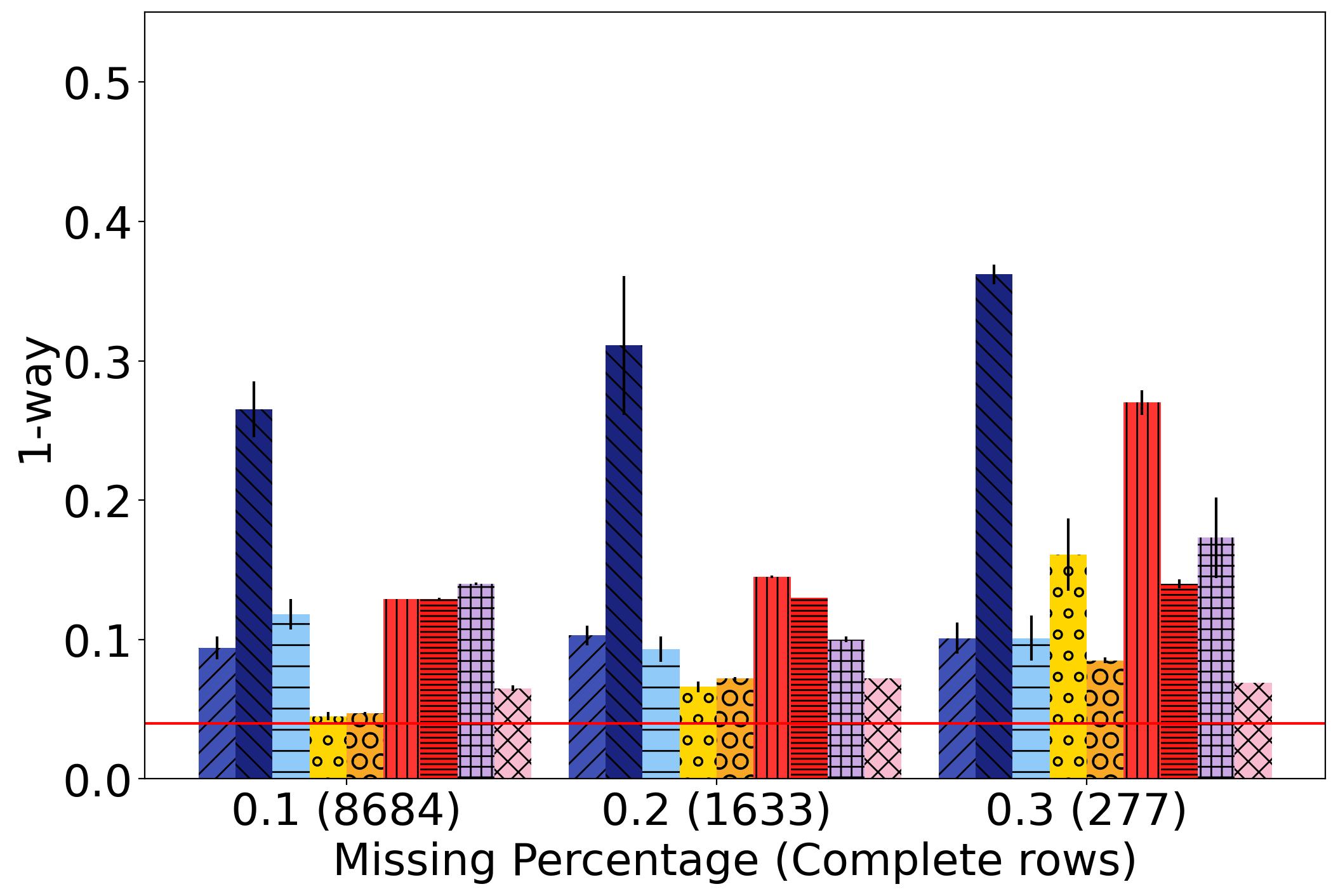}}
\subfigure[National 1-way ($\downarrow$)]{\includegraphics[width=.24\textwidth]{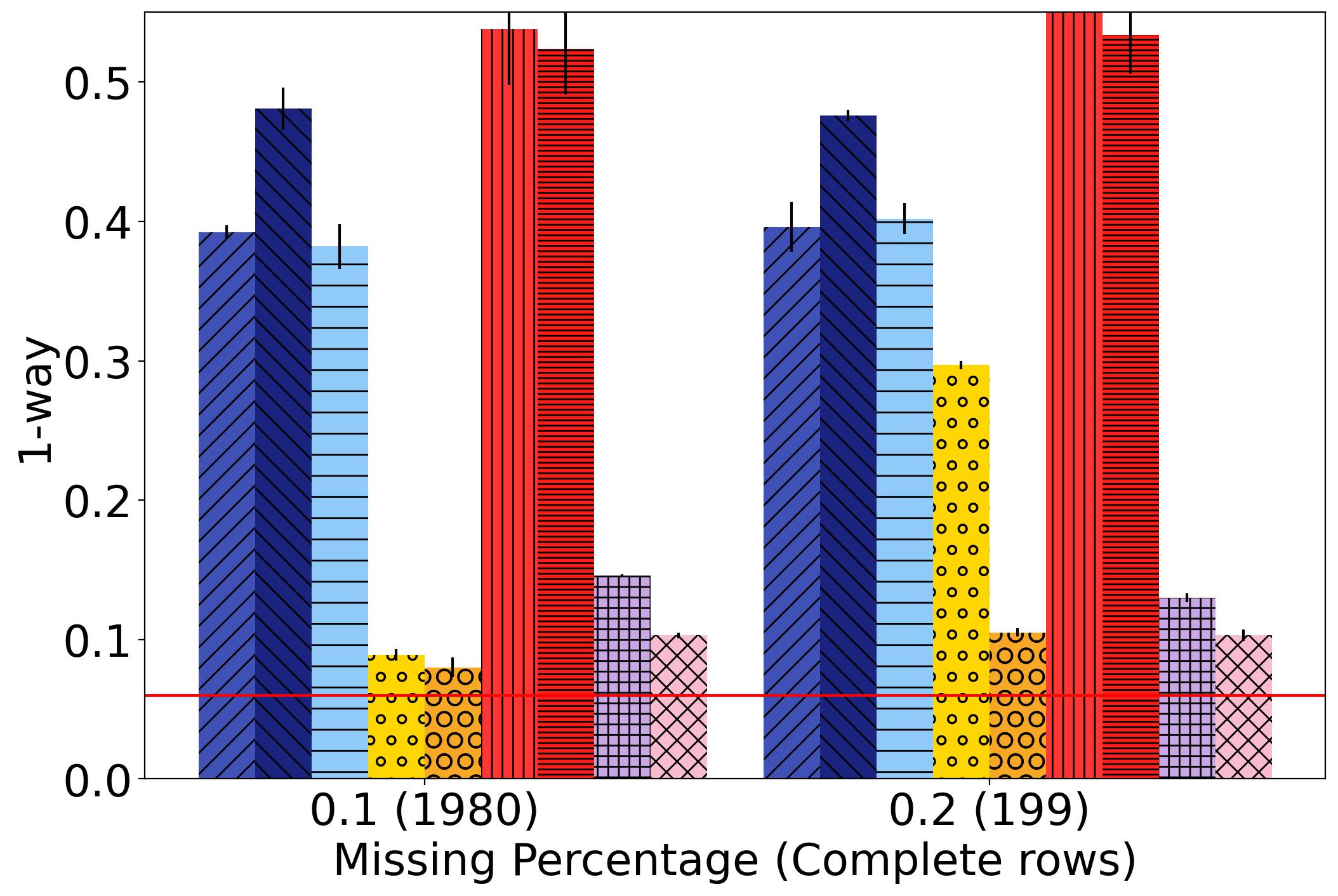}}

\subfigure[Adult 2-way ($\downarrow$)]{\includegraphics[width=.24\textwidth]{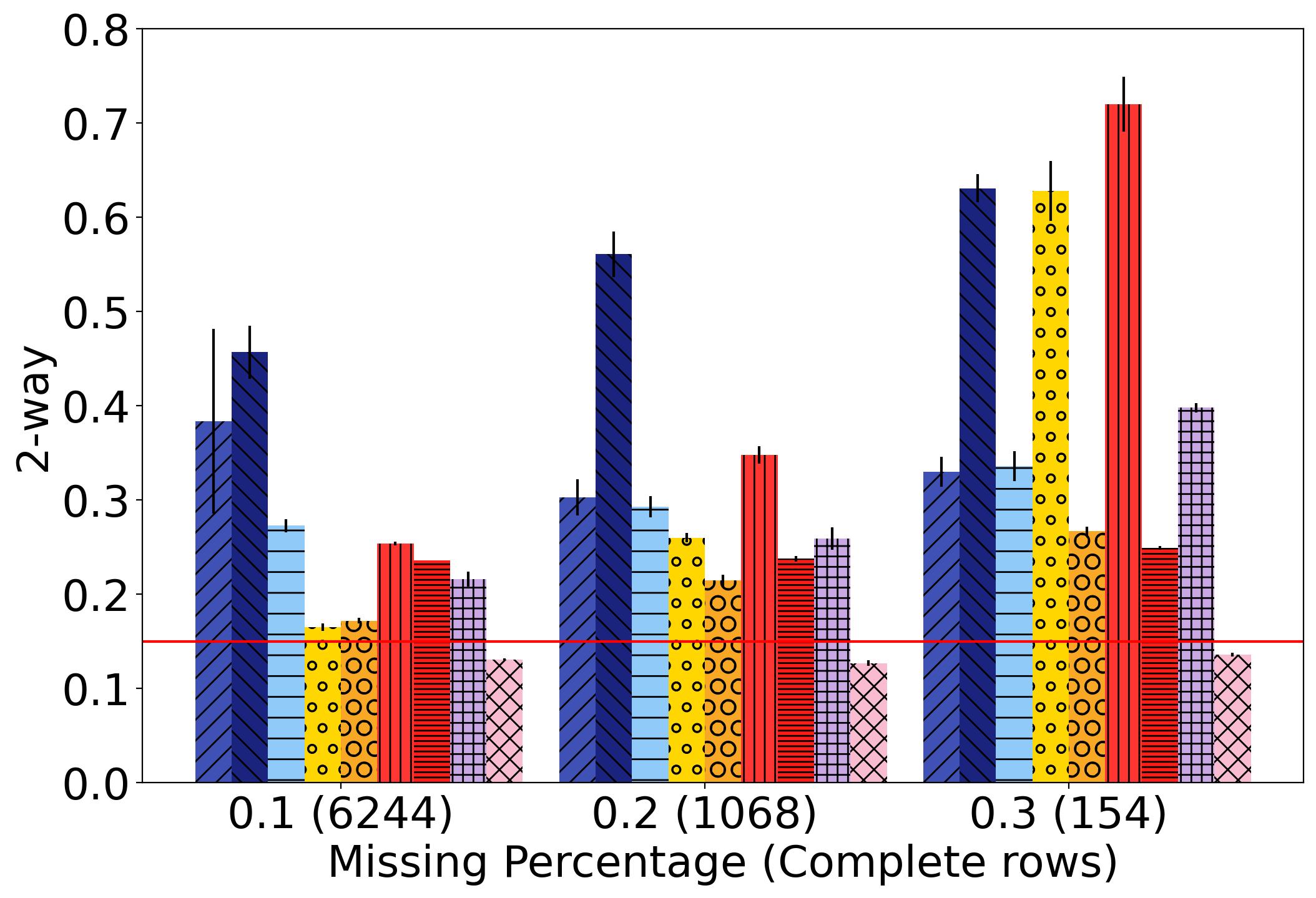}}
\subfigure[Bank 2-way ($\downarrow$)]{\includegraphics[width=.24\textwidth]{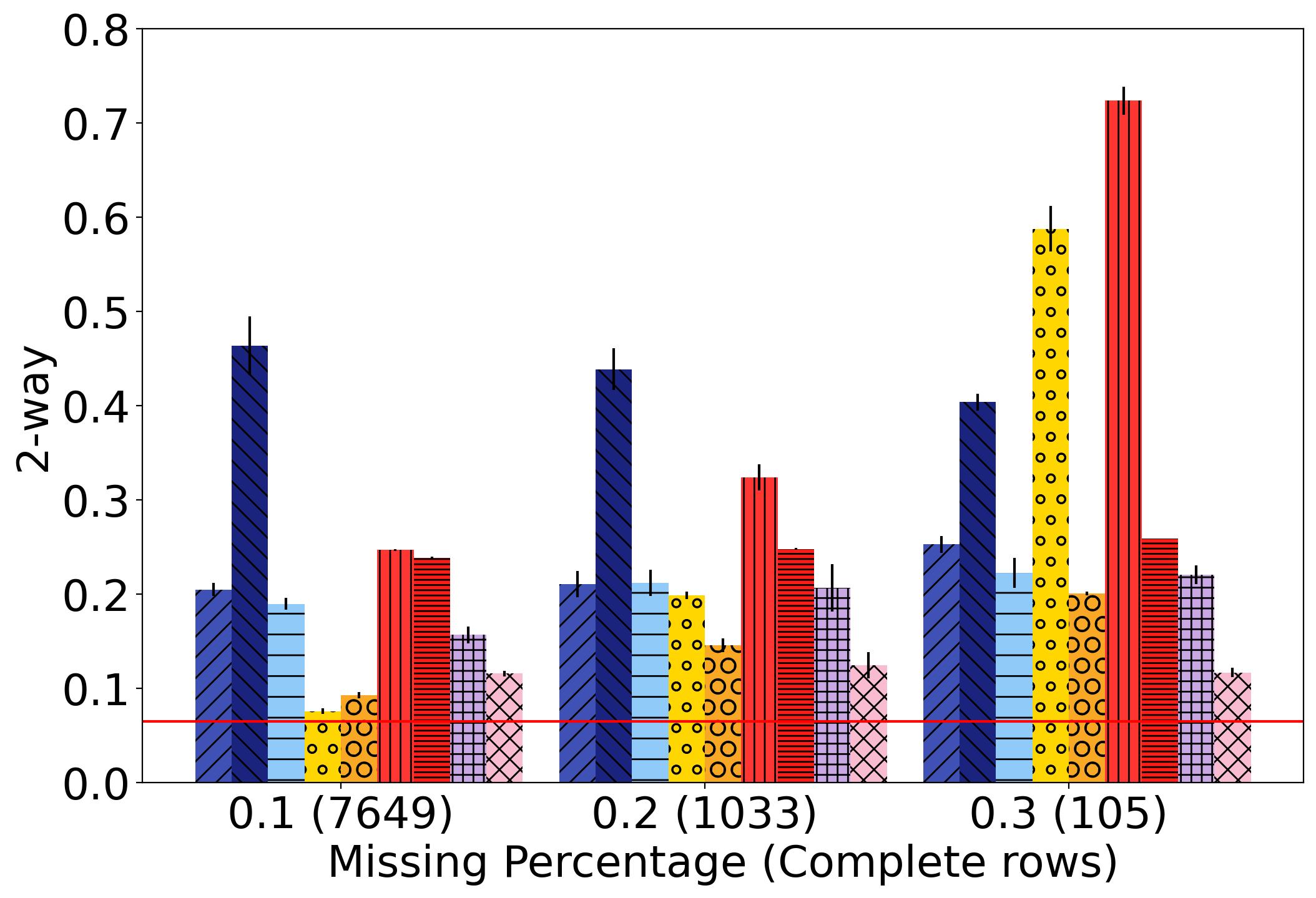}}
\subfigure[BR2000 2-way ($\downarrow$)]{\includegraphics[width=.24\textwidth]{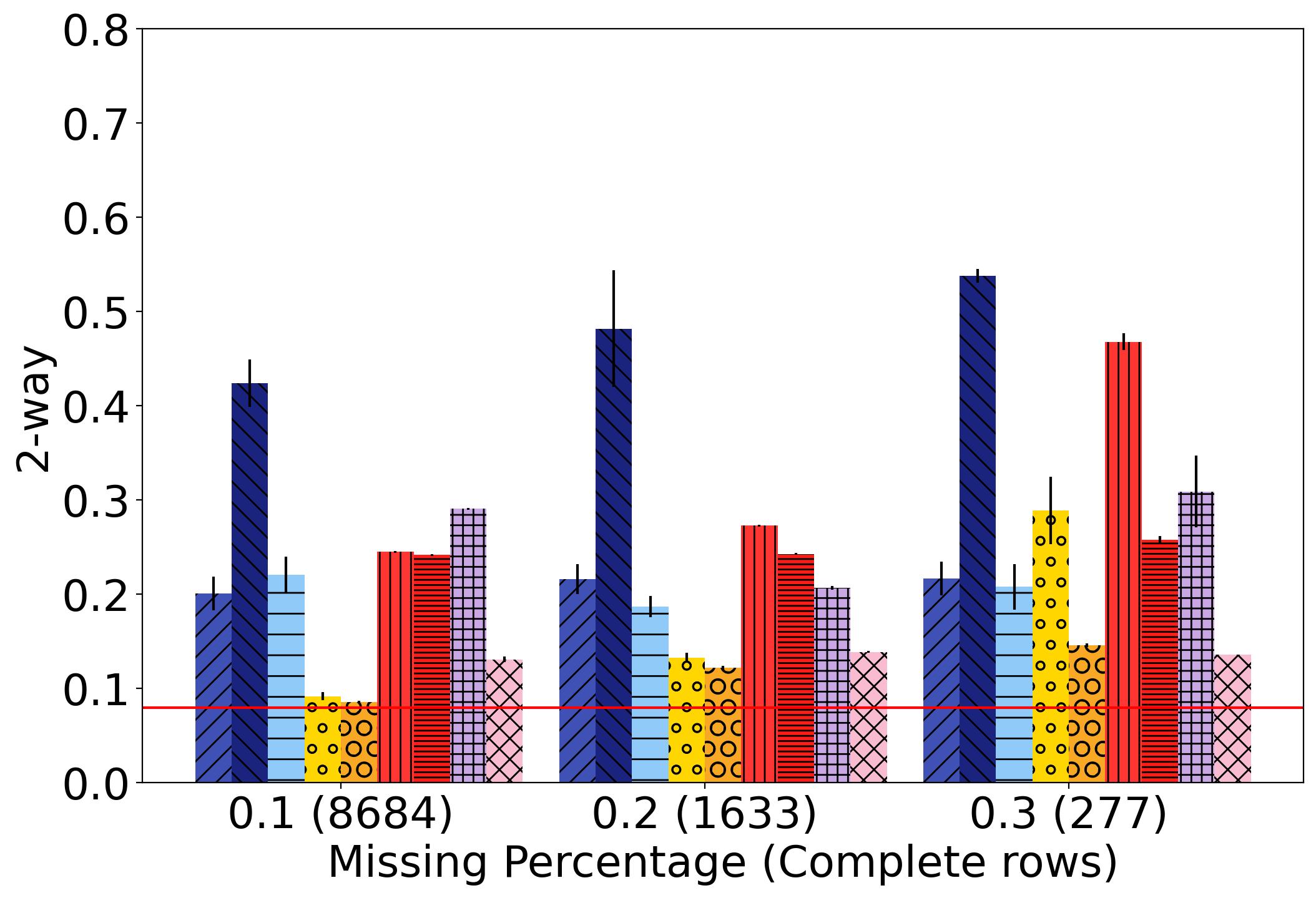}}
\subfigure[National 2-way ($\downarrow$)]{\includegraphics[width=.24\textwidth]{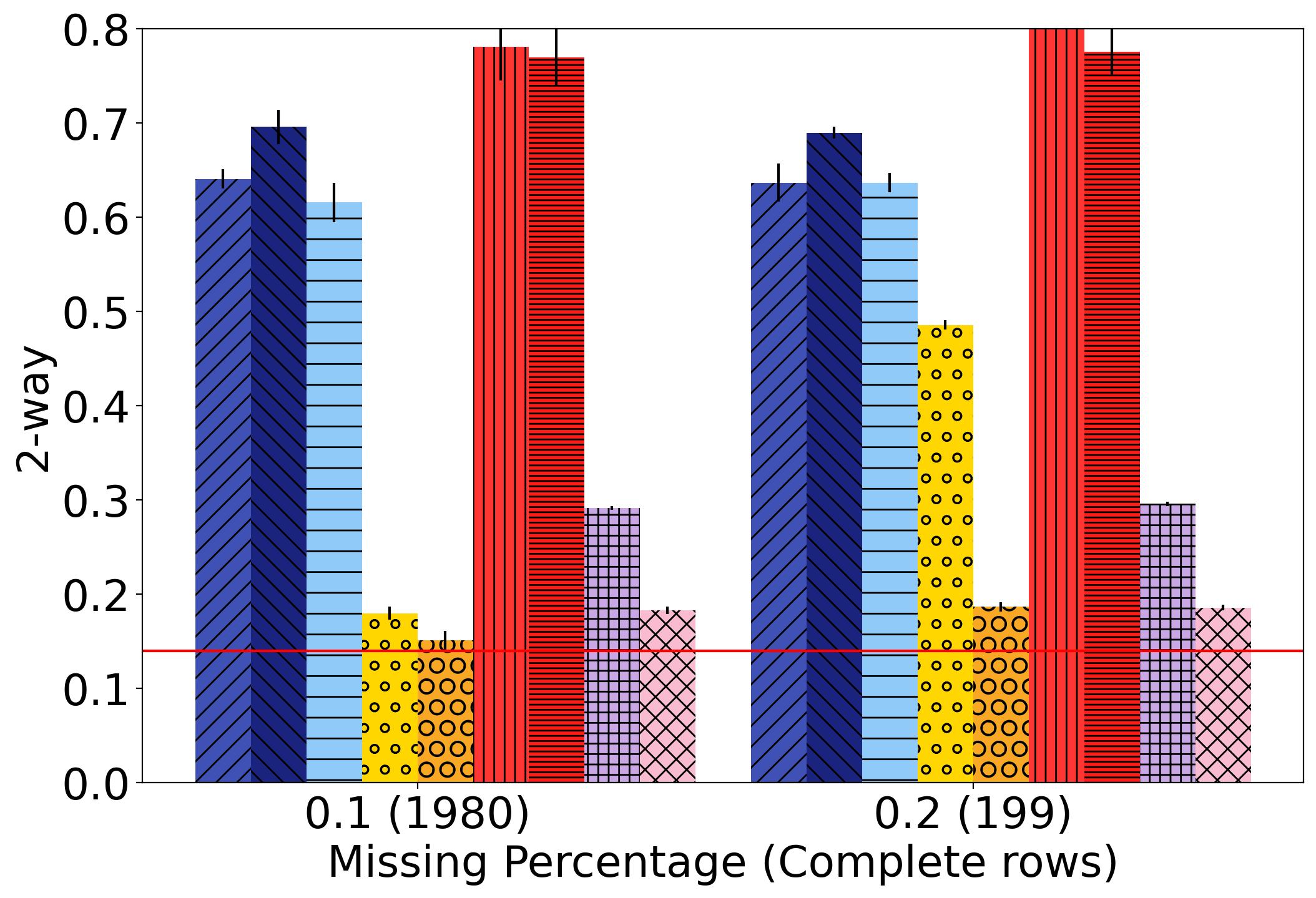}}

\subfigure[Adult F1-score ($\uparrow$)]{\includegraphics[width=.24\textwidth]{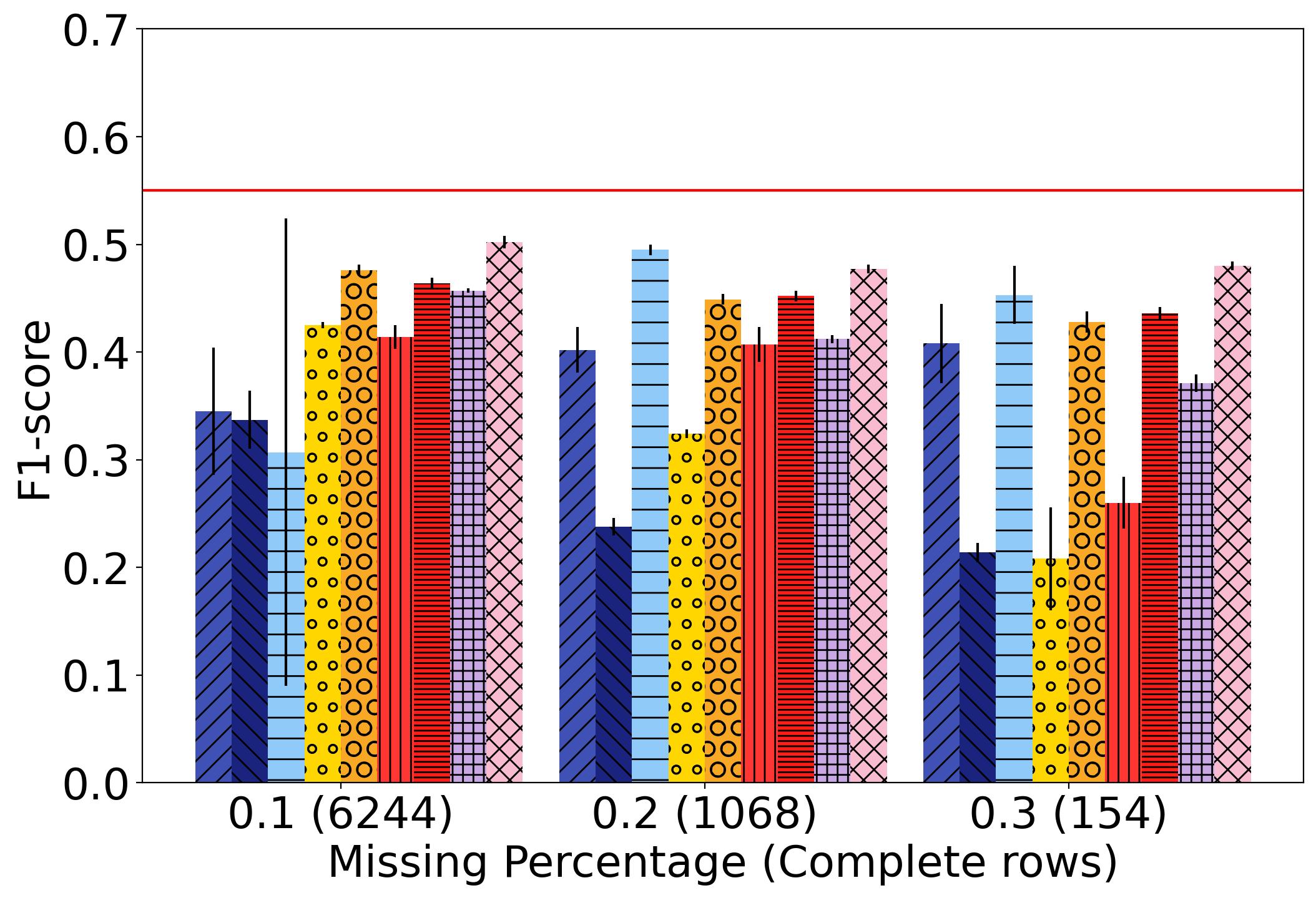}}
\subfigure[Bank F1-score ($\uparrow$)]{\includegraphics[width=.24\textwidth]{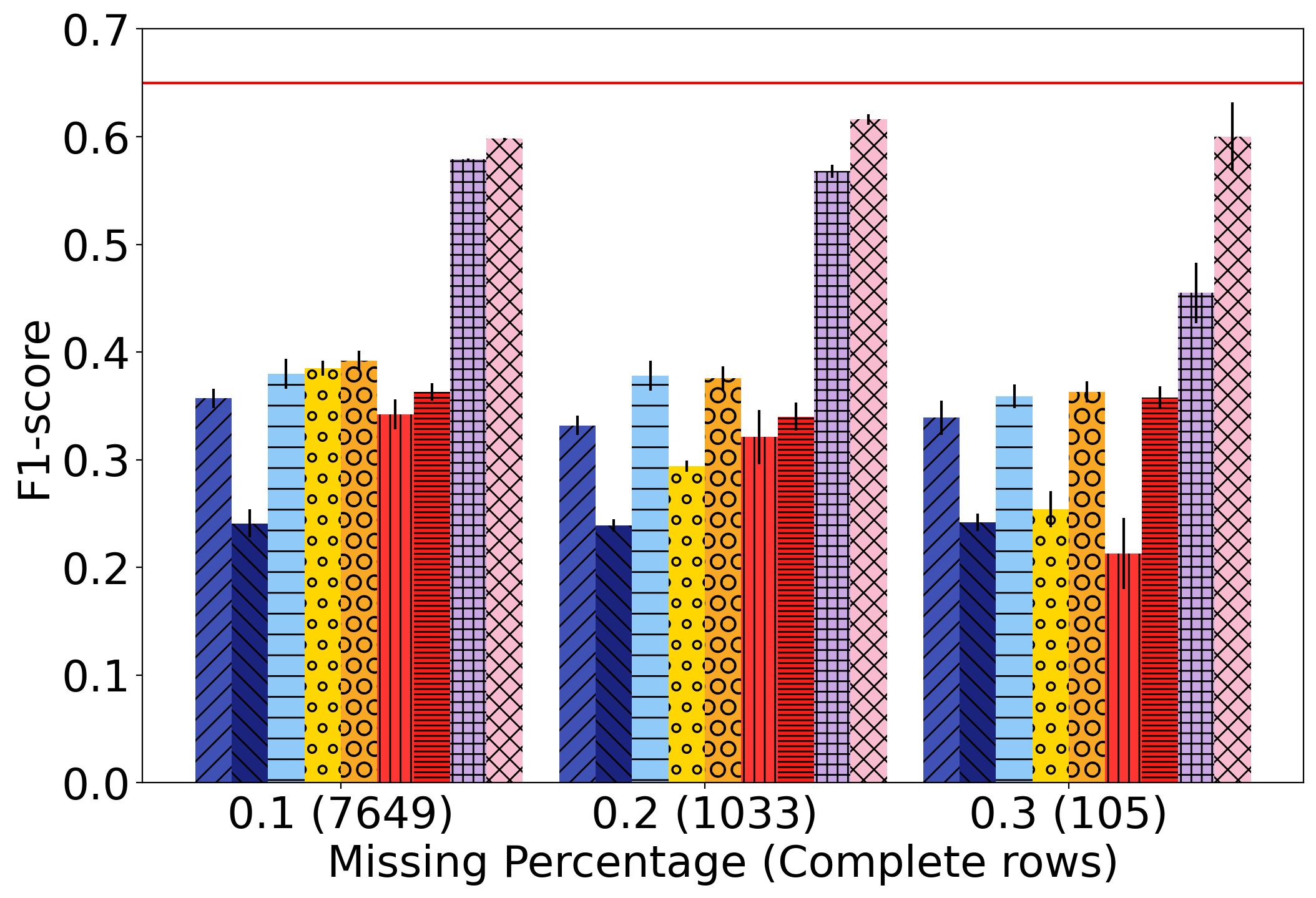}}
\subfigure[BR2000 F1-score ($\uparrow$)]{\includegraphics[width=.24\textwidth]{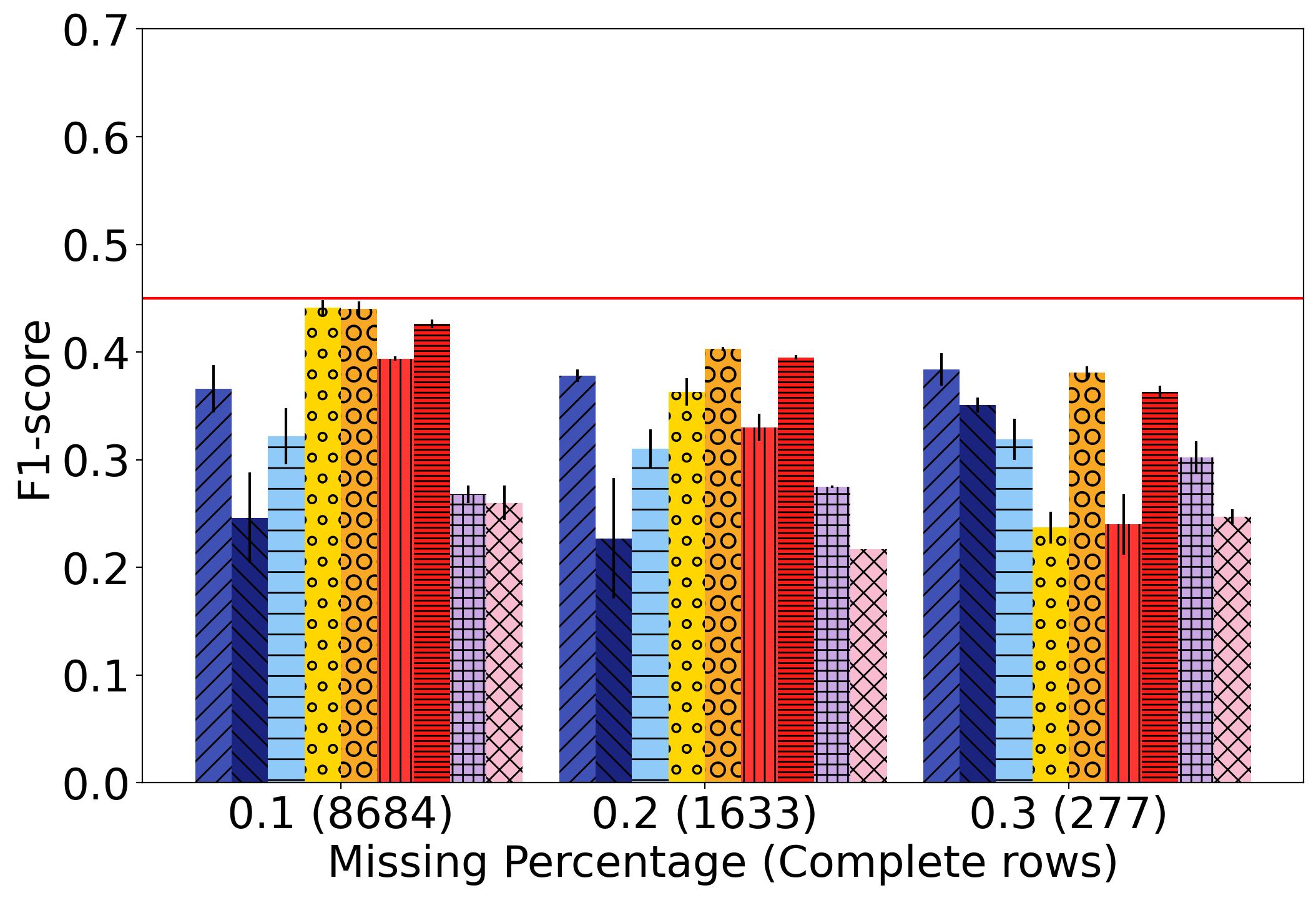}}
\subfigure[National F1-score ($\uparrow$)]{\includegraphics[width=.24\textwidth]{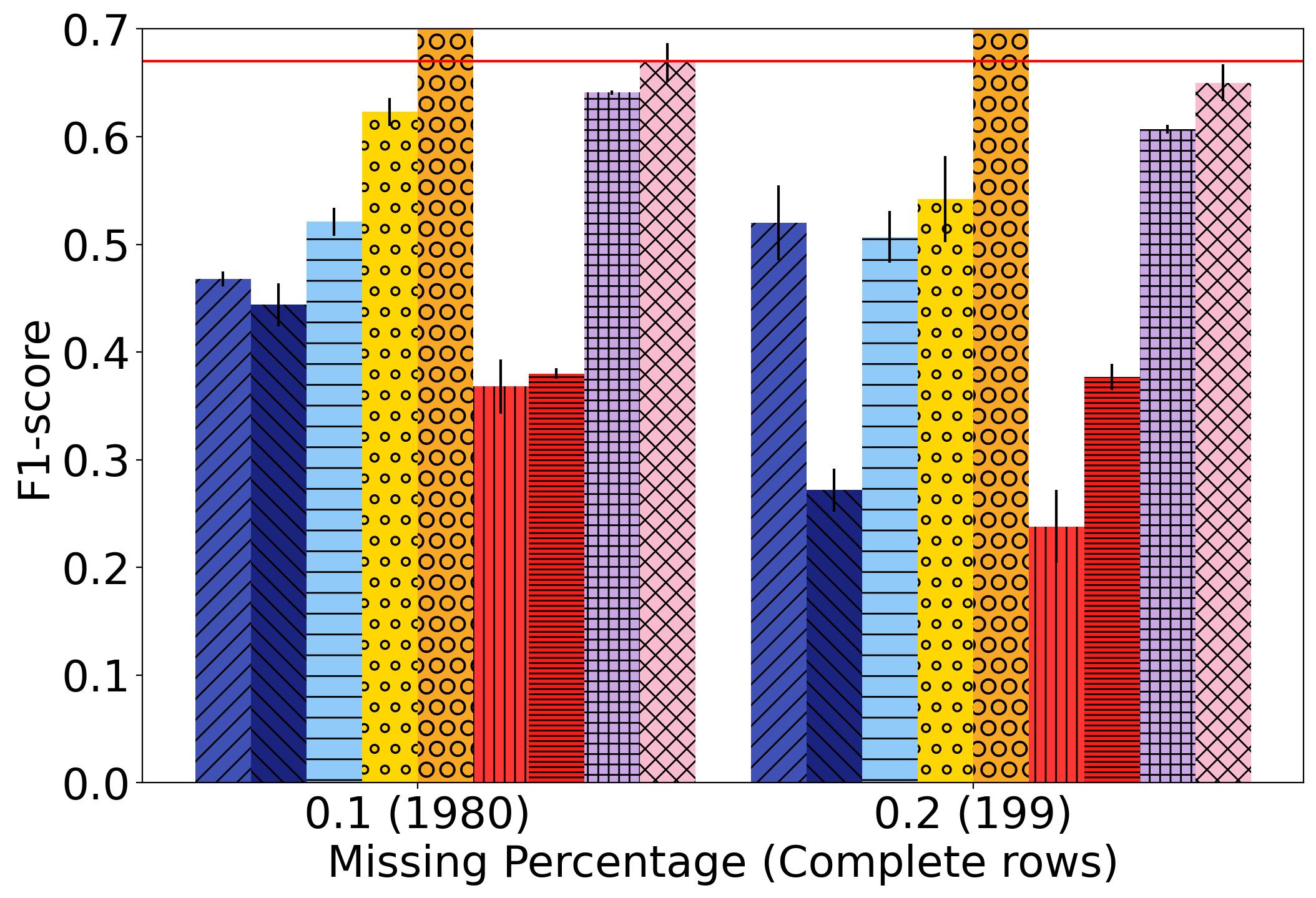}}

\subfigure{\includegraphics[width=\textwidth]{images/Exp3/Exp3-Legend.jpg}}
\label{fig:exp2eps3}
\caption{Baselines vs Adaptive methods at $\epsilon=3$. The adaptive methods perform better than their respective baselines.}
\end{figure*}

\fi
In Figure~\ref{fig:exp2}, we plot the algorithms in different shades of color depending on their type. \revision{We observe in general that adaptive recourse approaches (DPMisGAN, PrivBayesE, AIME, and KaminoI) result in significantly better quality synthetic data compared to their baseline with the complete row only approach (DPautoGAN, DPCTGAN, PrivBayes, AIM and Kamino). Across all datasets, the 1-way scores are improved by up to 68\%, 2-way by up to 66\% and F1-scores of up to 24\%. Furthermore, we observe that the adaptive methods often achieve the same utility as the no missing baseline with 10\% missing data (e.g., top left subfigure Adult 2-way for KaminoI and PrivBayesE at 10\% missing data).}

\stitle{Varying missing mechanisms.} In Figure~\ref{fig:exp3}, we repeat our experiment for all $4$ datasets at 10\% missing data for all missing mechanisms.  The adaptive recouse approaches(DPMisGAN, PrivBayesE, AIME and KaminoI) beat their non-adaptive complete-row baselines across all missing mechanisms. 
As missing values are added only to half of the attributes, missing at random (MAR) has more complete rows (the number in brackets on the x-axis labels) as compared to the other mechanisms. It is interesting to note that if we increase the missing percentage for MAR and plot it with same rows (MAR-SR), the algorithms start performing poorly. Hence, we make the conclusion that the number of complete of rows makes a more vital impact compared to the missing mechanism itself.

\ifpaper
\begin{figure}
    \centering
    \includegraphics[width=.49\linewidth]{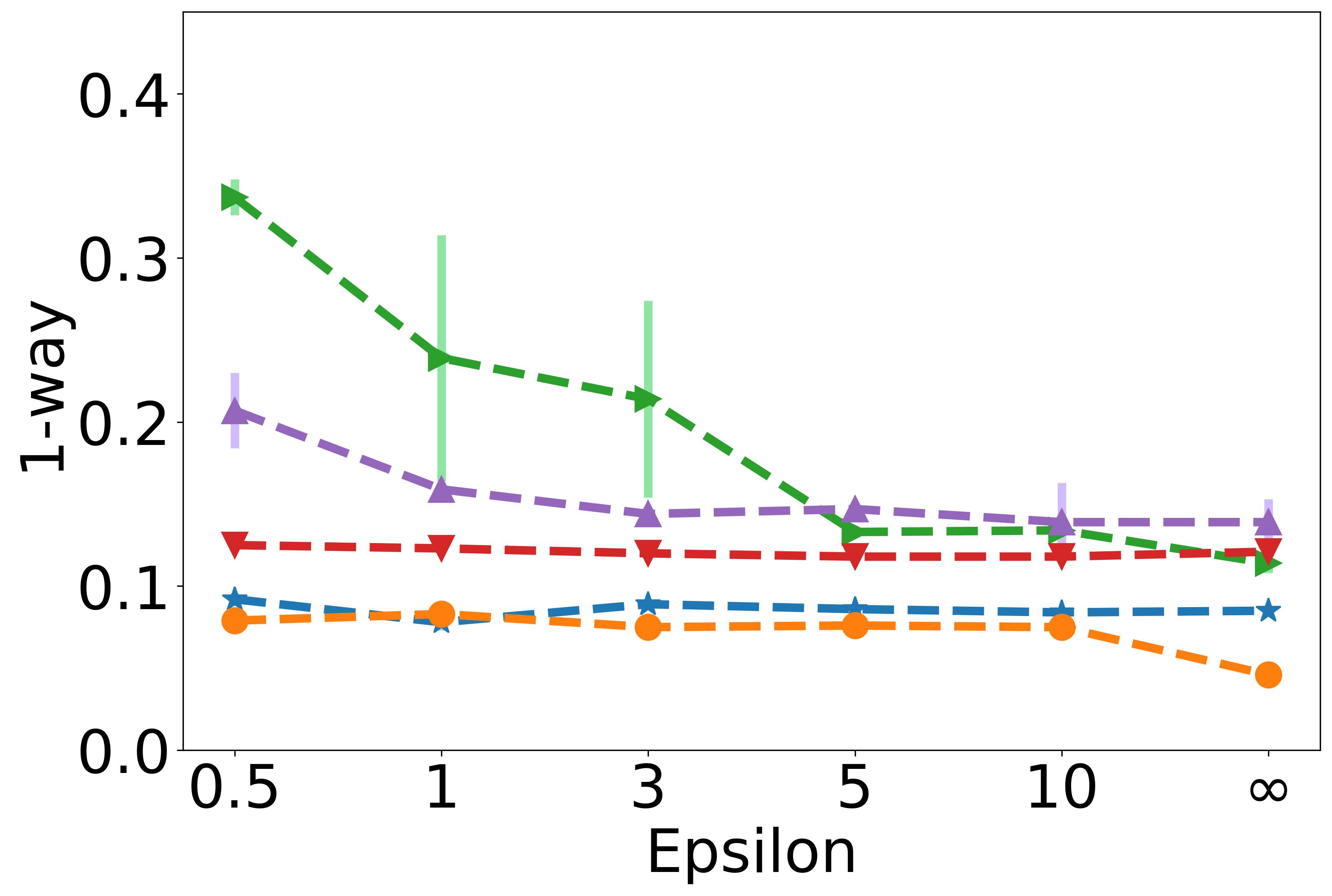}
    \includegraphics[width=.49\linewidth]{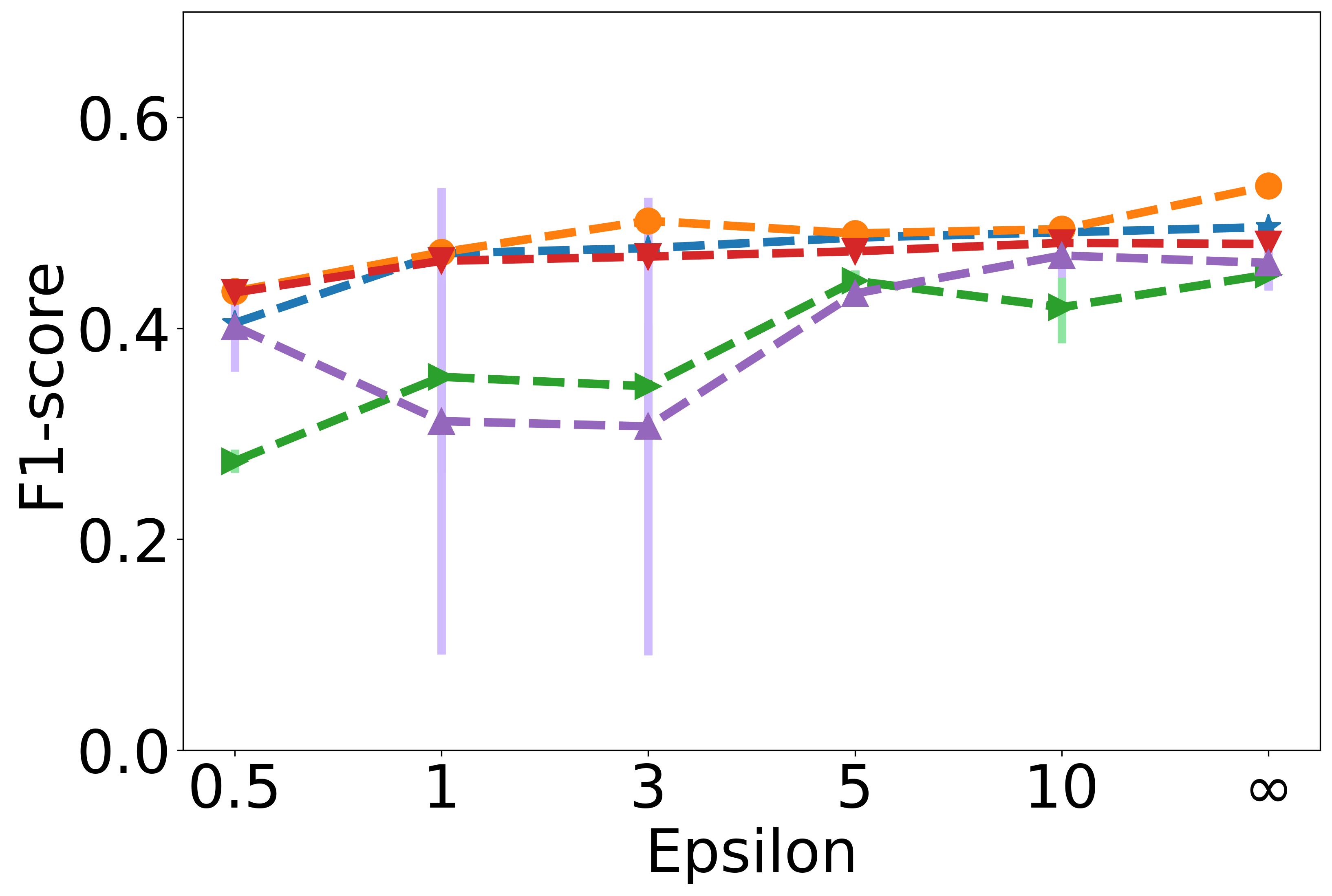}
    \includegraphics[width=\linewidth]{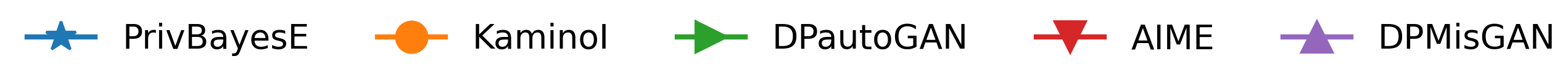}
    \caption{\revision{Adaptive methods vs baselines with MCAR missingness at varying privacy budget. }
    }
    \label{fig:exp4}
\end{figure}
\else
\begin{figure*} [htb]
    \centering
    \subfigure[1-way ($\downarrow$)]{\includegraphics[width=.32\textwidth]{images/Exp4/1-way.jpg}}
    \subfigure[2-way ($\downarrow$)]{\includegraphics[width=.32\textwidth]{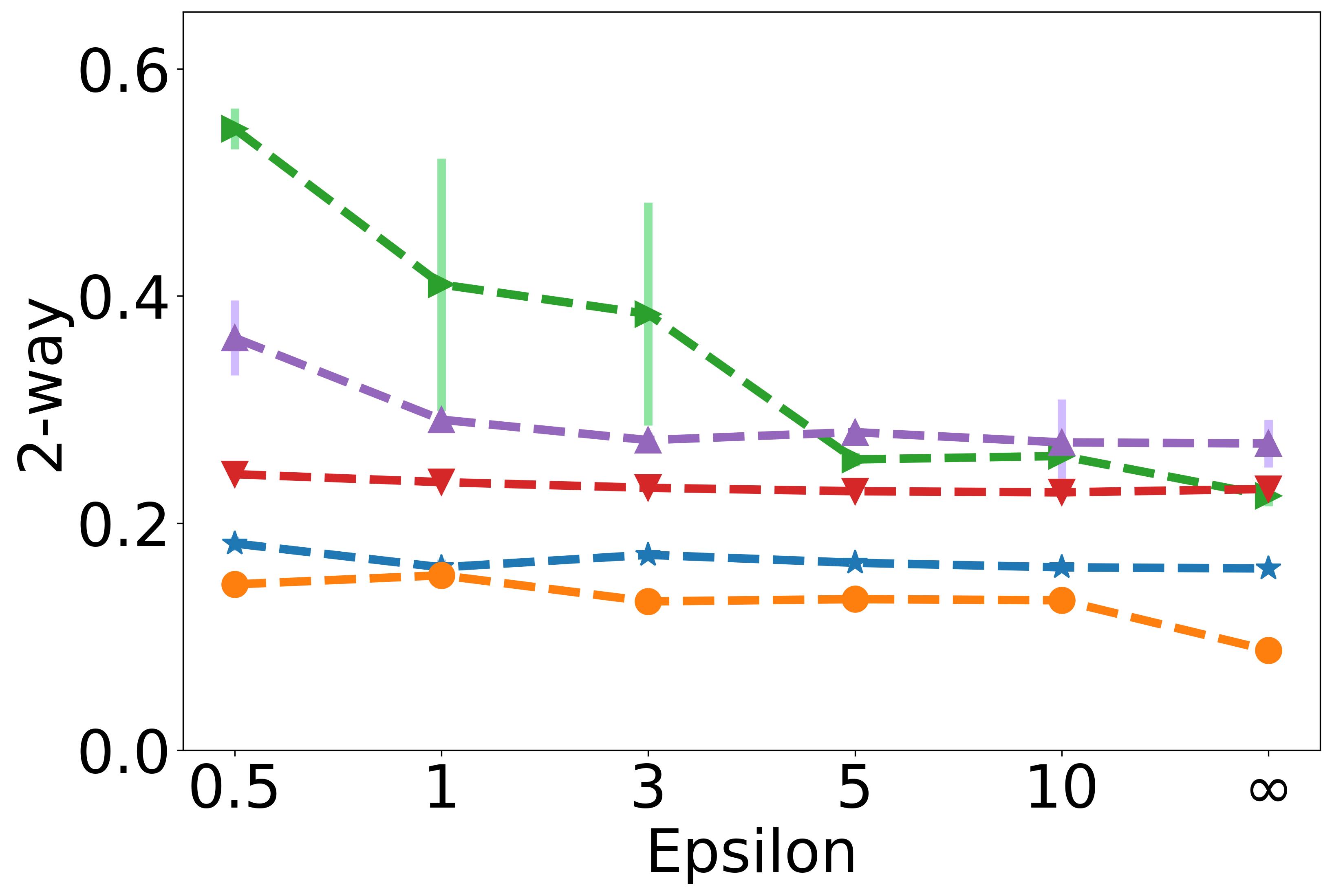}}
    \subfigure[F1-score ($\uparrow$)]{\includegraphics[width=.32\textwidth]{images/Exp4/F1-score.jpg}}
    \subfigure{\includegraphics[width=0.6\textwidth]{images/Exp4/Exp4-Legend.png}}
    \caption{Varying privacy budget and comparing utilities of our approaches}
    \label{fig:exp4}
\end{figure*}

\fi

\stitle{Varying privacy budget.} We show the impact of the privacy budget in Figure~\ref{fig:exp4} for adaptive methods with 10\% MCAR missing data on Adult. We vary the privacy budget $\epsilon \in [0.5, 1, 3, 5, 10, \infty]$ where $\epsilon = \infty$ refers to the non-private run. First, we note that increasing the privacy budget improves the utility of the synthetic dataset across all algorithms. \revision{Second, we observe that PrivBayesE and KaminoI outperformed GAN-based approaches at all privacy budgets. Similar observations are found in Figure~\ref{fig:exp2} and Figure~\ref{fig:exp3}. The poor performance of the GAN-based approaches is primarily due to the requirement for specific preprocessing steps and extensive hyperparameter tuning, which was challenging. Third, we find that there is no clear winner for all datasets, all utility metrics, and all privacy budgets, which is consistent with the observations from prior benchmarks on DP synthetic data generation~\cite{tao2021benchmarking}. However, we observe that PrivBayesE performs better at smaller epsilon values ($\epsilon \leq 1$), but KaminoI has better performance with higher epsilon ($\epsilon >1 $). We attribute this phenomenon to the fact that KaminoI trains multiple intermediate models, and these models generally require a higher privacy budget. Therefore, we recommend PrivBayesE at a low privacy regime ($\epsilon \leq  1$) and KaminoI at a higher privacy regime ($\epsilon > 1)$. }

\revision{In addition, we observe that PrivBayesE outperforms others at the 2-way tasks (Figure~\ref{fig:exp2} top row), and 
KaminoI demonstrates superior performance in the F1-score metric even at a low privacy regime ($\epsilon=1$, Figure~\ref{fig:exp2} bottom row, except BR2000). This shows that depending on the data sets and the downstream tasks different methods perform best. Hence, we also recommend to an end user to test all methods for their specific data sets and downstream tasks and assess result quality in the way we have done. Then they can select the best method for their case, but this ignores privacy cost of algorithm selection. Generally, we recommend marginal-based approach like PrivBayesE for k-way tasks and KaminoI for ML tasks.
}

\subsubsection{Amplification Due To Missingness}\label{sec:exp_amplification}
In this experiment, we show the amplified privacy budget for ground truth data. We experiment with PrivBayesE that runs on the incomplete dataset with 10 - 50\% MCAR missing data. For each run, we allocate a privacy budget of $\epsilon = 1$ and observe the marginals calculated by PrivBayesE. We assume a uniform privacy budget for each marginal calculated by PrivbayesE and run Algorithm~\ref{algo:mcar_amplification} to calculate the best valid partition of these marginals. In Table~\ref{tab:amplified_privacy}, we plot the amplified privacy cost $\bar{\epsilon}$ based on the best partition found by the algorithm. We repeat the experiment for two datasets -- Adult and BR2000. Our results show that the amplified privacy cost decreases almost linearly from 0.88x to 0.44x for Adult and 0.83x to 0.31x for the BR2000 dataset.

\begin{table}[t]
    \centering
    \caption{Amplified privacy for ground truth data.}
    \begin{tabular}{|*{6}{c|}}
         \hline
         Dataset & \multicolumn{5}{|c|}{MCAR missing \%} \\
         \cline{2-6}
         & 0.1 & 0.2 & 0.3 & 0.4 & 0.5 \\
         \hline
         Adult & 0.88 & 0.77 & 0.65 & 0.47 & 0.44 \\
         BR2000 & 0.83 & 0.68 & 0.55 & 0.41 & 0.31 \\
         \hline
    \end{tabular}
    \label{tab:amplified_privacy}
\end{table}

\section{Related Work}\label{sec:related}
Differentially private synthetic data generation has been studied vastly in prior literature~\cite{Bowen_2020, fan_2020, DBLP:journals/tkde/ZhuLZY17, nist}. Prior works generate synthetic data via statistical approaches which estimate low-dimensional marginal distributions~\cite{DBLP:conf/sigmod/QardajiYL14, DBLP:journals/tkde/XiaoWG11}, deep learning approaches~\cite{DBLP:conf/sec/FrigerioOGD19, DBLP:conf/iclr/JordonYS19a}, or the combination of the two~\cite{ge2021kamino}. However, all of these algorithms focus on the no missing data setting. 
\revision{Some prior work look into missing data imputation for private datasets but are either not in the differential privacy setting~\cite{patki,  huang2018pacas, jagannathan2008privacy} or do not support generating synthetic data as a part of their work~\cite{DBLP:journals/tissec/CliftonHMM22, krishnan2016privateclean, DBLP:journals/corr/abs-2206-15063}. Patki et al.~\cite{patki} propose the synthetic data vault framework, which identifies and repairs inconsistencies in the generated synthetic data from incomplete data but does not consider any privacy guarantees. Their approach uses low-way marginals and learns them using Gaussian copulas that may be enhanced using our partial marginal observation approach if made private. Huang et al.~\cite{huang2018pacas} and Jagannathan et al.~\cite{jagannathan2008privacy} imputation of missing data as a cleaning algorithm for private datasets but consider privacy definitions of k-anonymity and cryptographic distributive computing, respectively. Other works focus on privately imputing missing values. The existing differentially private solutions~\cite{krishnan2016privateclean, DBLP:journals/tissec/CliftonHMM22,DBLP:journals/corr/abs-2206-15063} focus on data imputation and are difficult to adapt for the synthetic data generation problem. PrivateClean~\cite{krishnan2016privateclean} requires a human in the loop that can specify the user-defined specific imputation functions to clean the database. The k-means based word by Clifton et al.~\cite{DBLP:journals/tissec/CliftonHMM22} and imputation first approach in Das et al.~\cite{DBLP:journals/corr/abs-2206-15063} use part of the privacy budget to impute the missing values using OLS regression. We note that such OLS regression and k-means models cannot be applied to categorical attributes, and using them for imputation incurs a significant amount of privacy budget and leaves little budget for data synthesis. }

\section{Conclusion}\label{sec:conclusion}

In conclusion, our research paper presents a comprehensive study on differentially private synthetic data generation algorithms for private datasets with missing values. Our proposed adaptive recourse methods outperform classical approaches and strike a balance between privacy and utility. We also provide techniques for calculating privacy bounds and demonstrate the effectiveness of our methods through extensive experiments on real-world datasets. Our findings have important implications for privacy-preserving data sharing and analysis, and can facilitate the development of more effective methods for generating synthetic data in various practical applications.

% \begin{acks}
 % This work was supported by the [...] Research Fund of [...] (Number [...]). Additional funding was provided by [...] and [...]. We also thank [...] for contributing [...].
% \end{acks}

%\clearpage

\bibliographystyle{ACM-Reference-Format}
\bibliography{ref}

\end{document}
\endinput